\documentclass[british,copyright, creativecommons]{eptcs}
\usepackage[T1]{fontenc}
\usepackage[latin9]{inputenc}
\usepackage{amsthm}
\usepackage{amsmath}
\usepackage{amssymb}
\usepackage{breakurl}

\makeatletter
  \theoremstyle{plain}
  \newtheorem{assumption}{\protect\assumptionname}
  \theoremstyle{definition}
  \newtheorem{defn}{\protect\definitionname}
  \theoremstyle{plain}
  \newtheorem{ax}{\protect\axiomname}
  \theoremstyle{plain}
  \newtheorem{prop}{\protect\propositionname}
\theoremstyle{plain}
\newtheorem{thm}{\protect\theoremname}
  \theoremstyle{plain}
  \newtheorem{cor}{\protect\corollaryname}
  \theoremstyle{plain}
  \newtheorem{lem}{\protect\lemmaname}


\makeatother
%
%

\usepackage[matrix,frame,arrow]{xy}
\usepackage{amsmath}

\newcommand{\qw}[1][-1]{\ar @{-} [0,#1]}



\newcommand{\gate}[1]{*{\xy *+<.6em>{#1};p\save+LU;+RU **\dir{-}\restore\save+RU;+RD **\dir{-}\restore\save+RD;+LD **\dir{-}\restore\POS+LD;+LU **\dir{-}\endxy} \qw}



\newcommand{\measureD}[1]{*{\xy*+=+<.5em>{\vphantom{\rule{0em}{.1em}#1}}*\cir{r_l};p\save*!R{#1} \restore\save+UC;+UC-<.5em,0em>*!R{\hphantom{#1}}+L **\dir{-} \restore\save+DC;+DC-<.5em,0em>*!R{\hphantom{#1}}+L **\dir{-} \restore\POS+UC-<.5em,0em>*!R{\hphantom{#1}}+L;+DC-<.5em,0em>*!R{\hphantom{#1}}+L **\dir{-} \endxy} \qw}








\newcommand{\multigate}[2]{*+<1em,.9em>{\hphantom{#2}} \qw \POS[0,0].[#1,0];p !C *{#2},p \save+LU;+RU **\dir{-}\restore\save+RU;+RD **\dir{-}\restore\save+RD;+LD **\dir{-}\restore\save+LD;+LU **\dir{-}\restore}
\newcommand{\Qcircuit}[1][0em]{\xymatrix @*=<#1>} 


\newcommand{\pureghost}[1]{*+<1em,.9em>{\hphantom{#1}}}
\newcommand{\multiprepareC}[2]{*+<1em,.9em>{\hphantom{#2}}\save[0,0].[#1,0];p\save !C
  *{#2},p+RU+<0em,0em>;+LU+<+.8em,0em> **\dir{-}\restore\save +RD;+RU **\dir{-}\restore\save
  +RD;+LD+<.8em,0em> **\dir{-} \restore\save +LD+<0em,.8em>;+LU-<0em,.8em> **\dir{-} \restore \POS
  !UL*!UL{\cir<.9em>{u_r}};!DL*!DL{\cir<.9em>{l_u}}\restore}
\newcommand{\prepareC}[1]{*{\xy*+=+<.5em>{\vphantom{#1\rule{0em}{.1em}}}*\cir{l^r};p\save*!L{#1} \restore\save+UC;+UC+<.5em,0em>*!L{\hphantom{#1}}+R **\dir{-} \restore\save+DC;+DC+<.5em,0em>*!L{\hphantom{#1}}+R **\dir{-} \restore\POS+UC+<.5em,0em>*!L{\hphantom{#1}}+R;+DC+<.5em,0em>*!L{\hphantom{#1}}+R **\dir{-} \endxy}}
\newcommand{\poloFantasmaCn}[1]{{{}^{#1}_{\phantom{#1}}}}

\makeatletter

\newcommand{\rA}{\mathrm{A}}
\newcommand{\rB}{\mathrm{B}}
\newcommand{\rC}{\mathrm{C}}

\newcommand{\cA}{\mathcal{A}}
\newcommand{\cB}{\mathcal{B}}
\newcommand{\cC}{\mathcal{C}}
\newcommand{\cU}{\mathcal{U}}

\makeatother

\usepackage{babel}
  \providecommand{\assumptionname}{Assumption}
  \providecommand{\axiomname}{Axiom}
  \providecommand{\definitionname}{Definition}
  \providecommand{\lemmaname}{Lemma}
  \providecommand{\propositionname}{Proposition}
\providecommand{\corollaryname}{Corollary}
\providecommand{\theoremname}{Theorem}

\begin{document}

\title{Operational axioms for diagonalizing states}

\author{Giulio Chiribella\email{giulio@cs.hku.hk} \institute{Department
of Computer Science, University of Hong Kong, Hong Kong} \and Carlo
Maria Scandolo\email{carlomaria.scandolo@st-annes.ox.ac.uk} \institute{Department
of Computer Science, University of Oxford, Oxford, UK}}
\maketitle
\begin{abstract}
In quantum theory every state can be diagonalized, i.e.\ decomposed
as a convex combination of perfectly distinguishable pure states.
This elementary structure plays an ubiquitous role in quantum mechanics,
quantum information theory, and quantum statistical mechanics, where
it provides the foundation for the notions of majorization and entropy.
A natural question then arises: can we reconstruct these notions from
purely operational axioms? We address this question in the framework
of general probabilistic theories, presenting a set of axioms that
guarantee that every state can be diagonalized. The first axiom is
Causality, which ensures that the marginal of a bipartite state is
well defined. Then, Purity Preservation states that the set of pure
transformations is closed under composition. The third axiom is Purification,
which allows to assign a pure state to the composition of a system
with its environment. Finally, we introduce the axiom of Pure Sharpness,
stating that for every system there exists at least one pure effect
occurring with unit probability on some state. For theories satisfying
our four axioms, we show a constructive algorithm for diagonalizing
every given state. The diagonalization result allows us to formulate
a majorization criterion that captures the convertibility of states
in the operational resource theory of purity, where random reversible
transformations are regarded as free operations.
\end{abstract}

\section{Introduction}

A canonical route to the foundations of quantum thermodynamics is
provided by the theory of majorization, used to define an ordering
among states according to their degree of mixedness \cite{Uhlmann1,Uhlmann2,Uhlmann3,Thirring}.
In recent years, the applications of majorization have seen remarkable
developments in the study of quantum and nano thermodynamics \cite{Faist-Landauer,Horodecki-Oppenheim-2,Nicole,Brandao}.
The viability of this approach relies heavily on the Hilbert space
framework, for it is based on the fact that density operators can
be diagonalized. Ideally, however, it would be desirable to have an
axiomatic foundation of quantum thermodynamics based on purely operational
axioms.

The problem can be addressed in the framework of general probabilistic
theories \cite{Hardy-informational-1,D'Ariano,Barrett,Barnum-1,Chiribella-purification,Chiribella-informational,Barnum-2,Hardy-informational-2,hardy2011,Chiribella14}.
The first step in this direction is to consider probabilistic theories
that satisfy an operational version of the spectral theorem, according
to which every state can be ``diagonalized'', i.e.\ decomposed
as a mixture of perfectly distinguishable pure states. At this point
there are two options: one option is to demand the diagonalizability
of states as an axiom. This approach has been adopted in Refs.~\cite{Barnum-interference,Krumm-thesis,Krumm-Muller},
also in relation to the issue of defining majorization in general
probabilistic theories. The other option is to reduce diagonalization
to other operational axioms, which may provide deeper insights on
the conceptual foundations of quantum thermodynamics. This approach
will be the subject of the present paper.

A diagonalization result from operational principles was proved by
D'Ariano, Perinotti, and one of the authors in the context of the
axiomatization of quantum theory in Ref.~\cite{Chiribella-informational}
(hereafter referred to as CDP), although the proof therein used the
full set of axioms implying quantum theory. In this paper we derive
the diagonalizability of states from a strictly weaker set of axioms,
which is compatible with quantum theory on real Hilbert spaces and,
with other potential generalizations of quantum theory, such as the
fermionic theory recently proposed by D'Ariano \emph{et al} in Refs.~\cite{Fermionic1,Fermionic2}.
Our list of axioms consists of:
\begin{itemize}
\item two of the six CDP axioms (Causality and Purification);
\item one axiom (Purity Preservation) that is close to the CDP axiom Atomicity
of Composition, although not exactly equivalent to it;
\item a new axiom, which we name \emph{Pure Sharpness}.
\end{itemize}
Pure Sharpness stipulates that every physical system has at least
one pure effect occurring with unit probability on some state. Such
a pure effect can be seen as part of a yes-no test designed to check
an elementary property, in the sense of Piron \cite{PironBook}. In
these terms, Pure Sharpness requires that for every system there exist
at least one property, and at least one state possessing such a property.
Note that none of our axioms assumes that perfectly distinguishable
states exist. A priori, the general probabilistic theories considered
here may not contain \emph{any} pair of perfectly distinguishable
states---operationally, this would mean that no system described by
the theory could be used to transmit a classical bit with zero error.
The existence of perfectly distinguishable states, and the fact that
every state can be broken down into a mixture of perfectly distinguishable
pure states are non-trivial consequences of the axioms.

Note that the presence of Purification among the axioms excludes from
the start the case of classical probability theory. Indeed, the aim
of our work is \emph{not} to provide the most general conditions for
the diagonalization of states, but rather to derive diagonalization
as a first step towards an axiomatic foundation of quantum thermodynamics.
In particular, we are searching for axioms that capture the characteristic
traits of quantum thermodynamics, such as the link with the resource
theory of entanglement \cite{Chiribella-Scandolo15-1}. From this
point of view, Purification is an almost mandatory choice, in that
it sets up a fundamental relation between mixed states and pure entangled
states. More importantly, Purification is deeply related to the thermodynamic
procedure that consists in considering the system in interaction with
its environment in such a way that the composite system is isolated.
In this scenario, Purification guarantees that one can always associate
a pure state with the composite system and that the overall evolution
of system and environment can be treated as reversible. In this way,
thermodynamics is reconciled with the paradigm of reversible dynamics
at the fundamental level. In the concrete Hilbert space setting, the
purified view of quantum thermodynamics has been adopted in a number
of works aimed at deriving the microcanonical and canonical ensembles
\cite{Bocchieri,Seth-Lloyd,Lubkin,Gemmer-Otte-Mahler,Canonical-typicality,Popescu-Short-Winter,Mahler-book,Concentration-measure,BrandaoQIP2015},
an idea that has been recently explored also in general probabilistic
theories \cite{Dahlsten,Muller-blackhole}.

After deriving the diagonalizability of states, we discuss the implications
of the result. In particular, we discuss the relation of majorization,
defined in terms of the probability distributions arising from diagonalization.
Combining our axioms with an additional axiom, known as Strong Symmetry
\cite{Barnum-interference}, we then show that majorization completely
determines the convertibility of states in the operational resource
theory of purity \cite{Chiribella-Scandolo15-1}, where random reversible
transformations are viewed as free operations. It remains as an open
question whether in the context of our axioms Strong Symmetry can
be replaced with a weaker requirement \cite{Majorization}.

The paper is structured as follows: in section~\ref{sec:Framework}
we introduce the basic framework. The four axioms for diagonalization
are presented in section~\ref{sec:Axioms}, and their consequences
are examined in section~\ref{sec:Consequences-of-the}. Section~\ref{sec:Diagonalization-of-states}
contains the main result, namely the diagonalization theorem. In section~\ref{sec:Combining}
we discuss a number of results that arise from the combination of
diagonalization with the Strong Symmetry axiom. Using these results,
section~\ref{sec:puriresource} analyses majorization and its applications
to the resource theory of purity. The conclusions are drawn in section~\ref{sec:Conclusions}.

\section{Framework\label{sec:Framework}}

The present analysis is carried out in the framework of general probabilistic
theories, adopting the specific variant of Refs.~\cite{Chiribella-purification,Chiribella-informational,Chiribella14},
known as the framework of \emph{operational-probabilistic theories}
(\emph{OPTs}). OPTs arise from the marriage of the graphical language
of symmetric monoidal categories \cite{Abramsky2004,cqm,Coecke-Picturalism,Categories-physicist,Selinger}
with the toolbox of probability theory. Here we give a quick summary
of the framework, referring the reader to the original papers and
to the related work by Hardy \cite{Hardy-informational-2,hardy2013}
for a more in-depth presentation. A comprehensive review of the OPT
framework is presented in the book chapter \cite{QuantumFromPrinciples}.

Physical processes can be combined in sequence or in parallel, giving
rise to circuits like the following\[
\begin{aligned}\Qcircuit @C=1em @R=.7em @!R { & \multiprepareC{1}{\rho} & \qw \poloFantasmaCn{\rA} & \gate{\cA} & \qw \poloFantasmaCn{\rA'} & \gate{\cA'} & \qw \poloFantasmaCn{\rA''} &\measureD{a} \\ & \pureghost{\rho} & \qw \poloFantasmaCn{\rB} & \gate{\cB} & \qw \poloFantasmaCn{\rB'} &\qw &\qw &\measureD{b} }\end{aligned}~.
\]Here, $\mathrm{A}$, $\mathrm{A}'$, $\mathrm{A}''$, $\mathrm{B}$,
$\mathrm{B}'$ are \emph{systems}, $\rho$ is a bipartite \emph{state},
$\mathcal{A}$, $\mathcal{A}'$ and $\mathcal{B}$ are \emph{transformations},
$a$ and $b$ are \emph{effects}. Circuits with no external wires,
like the one in the above example, are associated with probabilities.
We denote by 
\begin{itemize}
\item $\mathsf{St}\left(\mathrm{A}\right)$ the set of states of system
$\mathrm{A}$ 
\item $\mathsf{Eff}\left(\mathrm{A}\right)$ the set of effects on $\mathrm{A}$ 
\item $\mathsf{Transf}\left(\mathrm{A},\mathrm{B}\right)$ the set of transformations
from $\mathrm{A}$ to $\mathrm{B}$ 
\item $\mathrm{A}\otimes\mathrm{B}$ the composition of systems $\mathrm{A}$
and $\mathrm{B}$. 
\item $\mathcal{A}\otimes\mathcal{B}$ the parallel composition of the transformations
$\mathcal{A}$ and $\mathcal{B}$. 
\end{itemize}
A particular system is the trivial system $\mathrm{I}$ (mathematically,
the unit of the tensor product), corresponding to the degrees of freedom
ignored by the theory. States (resp.\ effects) are transformations
with the trivial system as input (resp.\ output). We will often make
use of the short-hand notation $\left(a|\rho\right)$ to denote the
scalar\[
\left(a|\rho\right)~:=\!\!\!\!\begin{aligned}\Qcircuit @C=1em @R=.7em @!R { & \prepareC{\rho}    & \qw \poloFantasmaCn{\rA}  &\measureD{a}}\end{aligned}~,
\]and of the notation $\left(a\right|\mathcal{C}\left|\rho\right)$
to mean\[
\left(a\right|\cC\left|\rho\right)~:=\!\!\!\!\begin{aligned}\Qcircuit @C=1em @R=.7em @!R { & \prepareC{\rho}    & \qw \poloFantasmaCn{\rA}  &\gate{\cC}  &\qw \poloFantasmaCn{\rB} &\measureD{a}}\end{aligned}~.
\]We identify the scalar $\left(a|\rho\right)$ with a real number in
the interval $\left[0,1\right]$, representing the probability of
a joint occurrence of the state $\rho$ and the effect $a$ in a circuit
where suitable non-deterministic elements are put in place. The fact
that scalars are real numbers induces a notion of sum for transformations,
whereby the sets $\mathsf{St}\left(\mathrm{A}\right)$, $\mathsf{Transf}\left(\mathrm{A},\mathrm{B}\right)$,
and $\mathsf{Eff}\left(\mathrm{A}\right)$ become spanning sets of
suitable vector spaces over the real numbers, denoted by $\mathsf{St}_{\mathbb{R}}\left(\mathrm{A}\right)$,
$\mathsf{Transf}_{\mathbb{R}}\left(\mathrm{A},\mathrm{B}\right)$,
and $\mathsf{Eff}_{\mathbb{R}}\left(\mathrm{A}\right)$ respectively.
In this paper we will restrict our attention to finite systems, i.e.\ systems
$\mathrm{A}$ for which the vector spaces $\mathsf{St}_{\mathbb{R}}\left(\mathrm{A}\right)$
and $\mathsf{Eff}_{\mathbb{R}}\left(\mathrm{A}\right)$ are finite-dimensional.
Also, it will be assumed as a default that the sets $\mathsf{St}\left(\mathrm{A}\right)$,
$\mathsf{Transf}\left(\mathrm{A},\mathrm{B}\right)$, and $\mathsf{Eff}\left(\mathrm{A}\right)$
are compact in the topology induced by probabilities, by which one
has $\lim_{n\rightarrow+\infty}\mathcal{C}_{n}=\mathcal{C}$, where
$\mathcal{C}_{n},\mathcal{C}\in\mathsf{Transf}\left(\mathrm{A},\mathrm{B}\right)$,
if and only if 
\[
\lim_{n\rightarrow+\infty}\left(E\right|\mathcal{C}_{n}\otimes\mathcal{I}_{\mathrm{R}}\left|\rho\right)=\left(E\right|\mathcal{C}\otimes\mathcal{I}_{\mathrm{R}}\left|\rho\right)\quad\forall\mathrm{R},\forall\rho\in\mathsf{St}\left(\mathrm{A}\otimes\mathrm{R}\right),\forall E\in\mathsf{Eff}\left(\mathrm{B}\otimes\mathrm{R}\right).
\]

A \emph{test} from $\mathrm{A}$ to $\mathrm{B}$ is a collection
of transformations $\left\{ \mathcal{C}_{i}\right\} _{i\in\mathsf{X}}$
from $\mathrm{A}$ to $\mathrm{B}$, which can occur in an experiment
with outcomes in $\mathsf{X}$. If $\mathrm{A}$ (resp.\ $\mathrm{B}$)
is the trivial system, the test is called a \emph{preparation-test}
(resp.\ \emph{observation-test}). We stress that not all the collections
of transformations are tests: the specification of the collections
that are to be regarded as tests is part of the theory, the only requirement
being that the set of test is closed under parallel and sequential
composition.

If $\mathsf{X}$ contains a single outcome, we say that the test is
\emph{deterministic}. We will refer to deterministic transformations
as \emph{channels}. Following the most recent version of the formalism
\cite{Chiribella14}, we assume as part of the framework that every
test arises from an observation-test performed on one of the outputs
of a channel. The motivation for such an assumption is the idea that
the readout of the outcome could be interpreted physically as a measurement
allowed by the theory. Precisely, the assumption is the following.
\begin{assumption}[Physicalization of readout \cite{Chiribella14}]
\label{assu:read}For every pair of systems $\mathrm{A}$, $\mathrm{B}$,
and every test $\left\{ \mathcal{M}_{i}\right\} _{i\in\mathsf{X}}$
from $\mathrm{A}$ to $\mathrm{B}$, there exist a system $\mathrm{C}$,
a channel $\mathcal{M}\in\mathsf{Transf}\left(\mathrm{A},\mathrm{B}\otimes\mathrm{C}\right)$,
and an observation-test $\left\{ c_{i}\right\} _{i\in\mathsf{X}}\subset\mathsf{Eff}\left(\mathrm{C}\right)$
such that\[
\begin{aligned} \Qcircuit @C=1em @R=.7em @!R { & \qw \poloFantasmaCn{\rA}& \gate{\mathcal{M}_{i} } & \qw \poloFantasmaCn{\rB} &\qw }\end{aligned} ~=~\begin{aligned} \Qcircuit @C=1em @R=.7em @!R { & \qw \poloFantasmaCn{\rA}& \multigate{1}{\mathcal{M}} & \qw \poloFantasmaCn{\rB} & \qw \\ & & \pureghost{\mathcal M} & \qw \poloFantasmaCn{\rC} & \measureD{c_i}} \end{aligned} \qquad \forall i\in\mathsf{X}.
\]
\end{assumption}
A channel $\mathcal{U}$ from $\mathrm{A}$ to $\mathrm{B}$ is called
\emph{reversible} if there exists a channel $\mathcal{U}^{-1}$ from
$\mathrm{B}$ to $\mathrm{A}$ such that $\mathcal{U}^{-1}\mathcal{U}=\mathcal{I}_{\mathrm{A}}$
and $\mathcal{U}\mathcal{U}^{-1}=\mathcal{I}_{\mathrm{B}}$, where
$\mathcal{I}_{\mathrm{S}}$ is the identity channel on a generic system
$\mathrm{S}$. If there exists a reversible channel transforming $\mathrm{A}$
into $\mathrm{B}$, we say that $\mathrm{A}$ and $\mathrm{B}$ are
\emph{operationally equivalent}, denoted by $\mathrm{A}\simeq\mathrm{B}$.
The composition of systems is required to be \emph{symmetric}, meaning
that $\mathrm{A}\otimes\mathrm{B}\simeq\mathrm{B}\otimes\mathrm{A}$.

A state $\chi\in\mathsf{St}\left(\mathrm{A}\right)$ is called \emph{invariant}
if $\mathcal{U}\chi=\chi$, for every reversible channel $\mathcal{U}$.
Note that, in general, invariant states may not exist. In this paper
their existence will be a consequence of the axioms and of a standing
assumption of finite-dimensionality.

The pairing between states and effects leads naturally to a notion
of norm. We define the norm of a state $\rho$ as $\left\Vert \rho\right\Vert :=\sup_{a\in\mathsf{Eff}\left(\mathrm{A}\right)}\left(a|\rho\right)$.
The set of normalized (i.e.\ with unit norm) states of $\mathrm{A}$
will be denoted by $\mathsf{St}_{1}\left(\mathrm{A}\right)$. Similarly,
the norm of an effect $a$ is defined as $\left\Vert a\right\Vert :=\sup_{\rho\in\mathsf{St}\left(\mathrm{A}\right)}\left(a|\rho\right)$.
The set of normalized effects of system $\mathrm{A}$ will be denoted
by $\mathsf{Eff}_{1}\left(\mathrm{A}\right)$.

The probabilistic structure also offers an easy way to define pure
transformations. The definition is based on the notion of \emph{coarse-graining},
i.e.\ the operation of joining two or more outcomes of a test into
a single outcome. More precisely, a test $\left\{ \mathcal{C}_{i}\right\} _{i\in\mathsf{X}}$
is a \emph{coarse-graining} of the test $\left\{ \mathcal{D}_{j}\right\} _{j\in\mathsf{Y}}$
if there is a partition $\left\{ \mathsf{Y}_{i}\right\} _{i\in\mathsf{X}}$
of $\mathsf{Y}$ such that $\mathcal{C}_{i}=\sum_{j\in\mathsf{Y}_{i}}\mathcal{D}_{j}$
for every $i\in\mathsf{X}$. In this case, we say that $\left\{ \mathcal{D}_{j}\right\} _{j\in\mathsf{Y}}$
is a \emph{refinement} of $\left\{ \mathcal{C}_{i}\right\} _{i\in\mathsf{X}}$.
The refinement of a given transformation is defined via the refinement
of a test: if $\left\{ \mathcal{D}_{j}\right\} _{j\in\mathsf{Y}}$
is a refinement of $\left\{ \mathcal{C}_{i}\right\} _{i\in\mathsf{X}}$,
then the transformations $\left\{ \mathcal{D}_{j}\right\} _{j\in\mathsf{Y}_{i}}$
are a refinement of the transformation $\mathcal{C}_{i}$.

A transformation $\mathcal{C}\in\mathsf{Transf}(\mathrm{A},\mathrm{B})$
is called \emph{pure} if it has only trivial refinements, namely for
every refinement $\left\{ \mathcal{D}_{j}\right\} $ one has $\mathcal{D}_{j}=p_{j}\mathcal{C}$,
where $\left\{ p_{j}\right\} $ is a probability distribution. Pure
transformations are those for which the experimenter has maximal information
about the evolution of the system. We denote the set of pure transformations
from $\mathrm{A}$ to $\mathrm{B}$ as $\mathsf{PurTransf}\left(\mathrm{A},\mathrm{B}\right)$.
In the special case of states (resp.\ effects) of system $\mathrm{A}$
we use the notation $\mathsf{PurSt}\left(\mathrm{A}\right)$ (resp.\ $\mathsf{PurEff}\left(\mathrm{A}\right)$).
The set of normalized pure states (resp.\ effects) of $\mathrm{A}$
will be denoted by $\mathsf{PurSt}_{1}\left(\mathrm{A}\right)$ (resp.\ $\mathsf{PurEff}_{1}\left(\mathrm{A}\right)$).
As usual, non-pure states are called \emph{mixed}. 
\begin{defn}
Let $\rho$ be a normalized state. We say that a state $\sigma$ is
\emph{contained} in $\rho$ if we can write $\rho=p\sigma+\left(1-p\right)\tau$,
where $p\in\left(0,1\right]$ and $\tau$ is another state. 
\end{defn}
It is clear that no states are contained in a pure state, except the
pure state itself. At the opposite side there are \emph{completely
mixed} states \cite{Chiribella-informational}, such that every state
is contained in them.
\begin{defn}
\label{def:upon input}We say that two transformations $\mathcal{A},\mathcal{A}'\in\mathsf{Transf}\left(\mathrm{A},\mathrm{B}\right)$
are \emph{equal upon input} of the state $\rho\in\mathsf{St}_{1}\left(\mathrm{A}\right)$
if $\mathcal{A}\sigma=\mathcal{A}'\sigma$ for every state $\sigma$
contained in $\rho$. In this case we will write $\mathcal{A}=_{\rho}\mathcal{A}'$.
\end{defn}

\section{Axioms\label{sec:Axioms}}

Here we present our four axioms for diagonalizing states. As a first
axiom, we assume Causality, which forbids signalling from the future
to the past: 
\begin{ax}[Causality \cite{Chiribella-purification,Chiribella-informational}]
The outcome probabilities of a test do not depend on the choice of
other tests performed later in the circuit. 
\end{ax}
Causality is equivalent to the requirement that, for every system
$\mathrm{A}$, there exists a unique deterministic effect $u_{\mathrm{A}}$
on $\mathrm{A}$ (or simply $u$, when no ambiguity can arise). Thanks
to that, it is possible to define the \emph{marginal state} of a bipartite
state $\rho_{\mathrm{AB}}$ on system $\mathrm{A}$ as\[
\begin{aligned}\Qcircuit @C=1em @R=.7em @!R { & \prepareC{\rho_{\rA}}    & \qw \poloFantasmaCn{\rA} &  \qw   }\end{aligned}~=\!\!\!\!\begin{aligned}\Qcircuit @C=1em @R=.7em @!R { & \multiprepareC{1}{\rho_{\mathrm{AB}}}    & \qw \poloFantasmaCn{\rA} &  \qw   \\  & \pureghost{\rho_{\mathrm{AB}}}    & \qw \poloFantasmaCn{\rB}  &   \measureD{u} }\end{aligned}~.
\]In this case we will also write $\rho_{\mathrm{A}}:=\mathrm{Tr}_{\mathrm{B}}\rho_{\mathrm{AB}}$,
calling $u_{\mathrm{B}}$ as $\mathrm{Tr}_{\mathrm{B}}$, to remind
that the deterministic effect acts as the partial trace in quantum
theory. We will tend to keep the notation $\mathrm{Tr}$ in formulas
where the deterministic effect is directly applied to a state, e.g.\ $\mathrm{Tr}\:\rho:=\left(u|\rho\right)$.

In a causal theory (i.e.\ satisfying Causality), the norm of a state
$\rho$ is simply given by $\left\Vert \rho\right\Vert =\mathrm{Tr}\:\rho$.
Moreover, observation-tests are normalized in the following way (see
corollary 3 of Ref.~\cite{Chiribella-purification}):
\begin{prop}
\label{prop:characterization observation-tests}In a causal theory,
if $\left\{ a\right\} _{i\in\mathsf{X}}$ is an observation-test,
then $\sum_{i\in\mathsf{X}}a_{i}=u$.
\end{prop}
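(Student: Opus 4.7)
The plan is to reduce the statement to the uniqueness part of Causality (the fact, recalled immediately after the axiom, that for every system $\mathrm{A}$ there is a unique deterministic effect $u_{\mathrm{A}}$). The bridge between an observation-test $\{a_i\}_{i\in\mathsf{X}}$ and $u_{\mathrm{A}}$ is the operation of coarse-graining: coarse-graining all outcomes of the test into one collapses $\{a_i\}_{i\in\mathsf{X}}$ to a single-outcome test whose sole element is $\sum_{i\in\mathsf{X}} a_i$.

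In more detail, I would proceed in three short steps. First, I would invoke the framework assumption that tests are closed under coarse-graining: applied to the trivial partition $\{\mathsf{Y}\}$ with $\mathsf{Y} = \mathsf{X}$, the definition of coarse-graining preceding Definition~1 produces the one-outcome observation-test $\{b\}$ with $b = \sum_{i\in\mathsf{X}} a_i$. Second, I would note that by definition a test with a single outcome is deterministic, so $b$ is a deterministic effect on $\mathrm{A}$. Third, I would apply Causality in its equivalent form: the deterministic effect on $\mathrm{A}$ is unique and equal to $u_{\mathrm{A}}$. Conclude $\sum_{i\in\mathsf{X}} a_i = u$.

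There is essentially no obstacle here, as the argument is a direct unpacking of definitions together with the uniqueness statement for $u$. The only point that deserves to be made explicit is why the coarse-grained object is a legitimate test (so that ``deterministic effect'' can be invoked): this follows from the general stipulation in the framework that the collection of tests is closed under coarse-graining of outcomes. Once that is recognised, the rest of the proof is a single line.
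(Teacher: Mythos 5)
Your proof is correct and follows essentially the same route as the paper, which simply defers to corollary~3 of Ref.~\cite{Chiribella-purification}: there the argument is exactly the one you give, namely that the total coarse-graining of an observation-test is a single-outcome (hence deterministic) effect, which by the uniqueness part of Causality must coincide with $u_{\mathrm{A}}$. Your explicit remark that closure of the set of tests under coarse-graining is what licenses calling $\sum_{i\in\mathsf{X}}a_{i}$ a deterministic effect is the right point to flag, and it is indeed part of the standard OPT framework assumed here.
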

Causality guarantees that it is consistent to assume that the choice
of a test can depend on the outcomes of previous tests---namely that
it is possible to perform \emph{conditional tests} \cite{Chiribella-purification}.
Combined with the assumption of compactness, the ability to perform
conditional tests implies that every state is proportional to a normalized
state \cite{QuantumFromPrinciples}. Another consequence is that all
the sets $\mathsf{St}\left(\mathrm{A}\right)$, $\mathsf{Transf}\left(\mathrm{A},\mathrm{B}\right)$,
and $\mathsf{Eff}\left(\mathrm{A}\right)$ are \emph{convex}. In the
following we will take for granted the ability to perform conditional
tests, the fact that every state is proportional to a normalized state,
and the convexity of all the sets of transformations.

The second axiom in our list is Purity Preservation.
\begin{ax}[Purity Preservation\footnote{The name and the formulation of the axiom adopted here are the same
as in Ref.~\cite{Scandolo14}. The original axiom was called \emph{Atomicity
of Composition} \cite{D'Ariano} and involved \emph{only} sequential
composition. Extending the axiom to parallel composition is important
for our purposes, because it guarantees that the product of two pure
states is pure. In the axiomatization of Ref.~\cite{Chiribella-informational}
this property was a consequence of the Local Tomography axiom, which,
instead, is not assumed here.} \cite{D'Ariano,Chiribella-informational,Scandolo14,Chiribella-Scandolo15-1}]
Sequential and parallel compositions of pure transformations are
pure transformations. 
\end{ax}
We consider Purity Preservation as a fundamental requirement. Considering
the theory as an algorithm to make deductions about physical processes,
Purity Preservation ensures that, when presented with maximal information
about two processes, the algorithm outputs maximal information about
their composition \cite{Scandolo14}.

The third axiom is Purification. This axiom characterizes the physical
theories admitting a description where all deterministic processes
are pure and reversible at a fundamental level. Essentially, Purification
expresses a strengthened version of the principle of conservation
of information \cite{Chiribella-educational,Scandolo14}. In its simplest
form, Purification is phrased as a requirement about \emph{causal}
theories, where the marginal of a bipartite state is defined in a
canonical way. Specifically, we say that a state $\rho\in\mathsf{St}_{1}\left(\mathrm{A}\right)$
can be purified if there exists a pure state $\Psi\in\mathsf{PurSt}\left(\mathrm{A}\otimes\mathrm{B}\right)$
that has $\rho$ as its marginal on system $\mathrm{A}$. In this
case, we call $\Psi$ a \emph{purification} of $\rho$, and $\mathrm{B}$
a \emph{purifying system}. The axiom is as follows.
\begin{ax}[Purification \cite{Chiribella-purification,Chiribella-informational}]
Every state can be purified and two purifications with the same purifying
system differ by a reversible channel on the purifying system. 
\end{ax}
Technically, the second part of the axiom states that, if $\Psi,\Psi'\in\mathsf{PurSt}_{1}\left(\mathrm{A}\otimes\mathrm{B}\right)$
are such that $\mathrm{Tr}_{\mathrm{B}}\Psi_{\mathrm{AB}}=\mathrm{Tr}_{\mathrm{B}}\Psi'_{\mathrm{AB}}$,
then $\Psi'_{\mathrm{AB}}=\left(\mathcal{I}_{\mathrm{A}}\otimes\mathcal{U}_{\mathrm{B}}\right)\Psi_{\mathrm{AB}}$,
where $\mathcal{U}_{\mathrm{B}}$ is a reversible channel on $\mathrm{B}$.
In diagrams,\[
\begin{aligned}\Qcircuit @C=1em @R=.7em @!R { & \multiprepareC{1}{\Psi}    & \qw \poloFantasmaCn{\rA} &  \qw   \\  & \pureghost{\Psi}    & \qw \poloFantasmaCn{\rB}  &   \measureD{u} }\end{aligned}~=\!\!\!\! \begin{aligned}\Qcircuit @C=1em @R=.7em @!R { & \multiprepareC{1}{\Psi'}    & \qw \poloFantasmaCn{\rA} &  \qw   \\  & \pureghost{\Psi'}    & \qw \poloFantasmaCn{\rB}  &   \measureD{u} }\end{aligned} ~\Longrightarrow \begin{aligned}\Qcircuit @C=1em @R=.7em @!R { & \multiprepareC{1}{\Psi'}    & \qw \poloFantasmaCn{\rA} &  \qw   \\  & \pureghost{\Psi'}    & \qw \poloFantasmaCn{\rB}  &   \qw }\end{aligned}~=\!\!\!\! \begin{aligned}\Qcircuit @C=1em @R=.7em @!R { & \multiprepareC{1}{\Psi}    & \qw \poloFantasmaCn{\rA} &  \qw &\qw &\qw   \\  & \pureghost{\Psi}    & \qw \poloFantasmaCn{\rB}  &   \gate{\cU} &\qw \poloFantasmaCn{\rB} &\qw }\end{aligned}~.
\]

In quantum theory, the validity of Purification lies at the foundation
of all dilation theorems, such as Stinespring's \cite{Stinespring},
Naimark's \cite{Naimark}, and Ozawa's \cite{Ozawa}. In the finite-dimensional
setting, these theorems (or at least some aspects thereof) were reconstructed
axiomatically in \cite{Chiribella-purification}.

Finally, we introduce a new axiom, which we name \emph{Pure Sharpness}.
This axiom ensures that there exists at least one elementary property
associated with every system: 
\begin{ax}[Pure Sharpness]
 For every system $\mathrm{A}$, there exists at least one pure effect
$a\in\mathsf{PurEff}\left(\mathrm{A}\right)$ occurring with probability
1 on some state. 
\end{ax}
Pure Sharpness is reminiscent of the Sharpness axiom used in Hardy's
2011 axiomatization \cite{Hardy-informational-2}, which requires
a one-to-one correspondence between pure states and effects that distinguish
maximal sets of states.

\section{Consequences of the axioms\label{sec:Consequences-of-the}}

\subsection{\label{sub:Consequences-of-Causality,}Consequences of Causality,
Purity Preservation, and Purification}

Here we list a few consequences of the first three axioms, which will
become useful later.

The easiest consequence of Purification is that reversible channels
act transitively on the set of pure states (see lemma 20 in Ref.~\cite{Chiribella-purification}):
\begin{prop}
For any pair of pure states $\psi,\psi'\in\mathsf{PurSt}_{1}\left(\mathrm{A}\right)$
there exists a reversible channel $\mathcal{U}$ on $\mathrm{A}$
such that $\psi'=\mathcal{U}\psi$.
\end{prop}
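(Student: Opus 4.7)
The plan is to derive transitivity directly from the uniqueness clause of the Purification axiom, using the trivial system $\mathrm{I}$ as the ``system to be purified''. The starting observation is that under the canonical identification $\mathrm{I}\otimes\mathrm{A}\simeq\mathrm{A}$, every normalized pure state $\psi\in\mathsf{PurSt}_{1}(\mathrm{A})$ can be viewed as a normalized pure state on $\mathrm{I}\otimes\mathrm{A}$. Tracing out system $\mathrm{A}$ with the deterministic effect $u_{\mathrm{A}}$ yields the scalar $(u_{\mathrm{A}}|\psi)=\mathrm{Tr}\,\psi=\|\psi\|=1$, which is precisely the unique normalized state of the trivial system $\mathrm{I}$. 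Consequently, $\psi$ is a purification of the (unique) normalized state of $\mathrm{I}$, with purifying system $\mathrm{A}$; the same is true of $\psi'$.

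With this reframing the result follows in one shot: $\psi$ and $\psi'$ are two normalized pure states in $\mathsf{PurSt}_{1}(\mathrm{I}\otimes\mathrm{A})$ whose marginals on $\mathrm{I}$ automatically coincide (there being only one normalized state available there). The uniqueness part of Purification, applied with the roles of the systems in the axiom statement played by $\mathrm{A}\leftarrow\mathrm{I}$ and $\mathrm{B}\leftarrow\mathrm{A}$, then produces a reversible channel $\mathcal{U}$ acting on the purifying system $\mathrm{A}$ such that
\[
\psi' \;=\; (\mathcal{I}_{\mathrm{I}}\otimes\mathcal{U})\,\psi \;=\; \mathcal{U}\psi,
\]
which is the desired conclusion.

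There is no genuine obstacle in this argument: the ``work'' has already been done in stating Purification. The only care needed is notational, namely to invoke the identification $\mathrm{I}\otimes\mathrm{A}\simeq\mathrm{A}$ consistently and to notice that, thanks to Causality, the marginal condition required by the uniqueness clause is satisfied trivially, because $\mathrm{I}$ supports only one normalized state. Purity Preservation is not needed here, and the argument does not rely on Pure Sharpness either; the statement is genuinely a consequence of Causality plus Purification alone.
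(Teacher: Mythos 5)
Your argument is correct and is essentially the standard one: the paper does not spell out a proof but cites lemma 20 of Ref.~\cite{Chiribella-purification}, whose proof is exactly your observation that any two normalized pure states of $\mathrm{A}$ are purifications (with purifying system $\mathrm{A}$) of the unique normalized state of the trivial system, so the uniqueness clause of Purification yields the reversible channel. Your remarks on the role of Causality (well-defined marginal, $\mathrm{Tr}\,\psi=\|\psi\|=1$) and on not needing Purity Preservation or Pure Sharpness are also accurate.
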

As a consequence, every finite-dimensional system possesses one invariant
state (see corollary 34 of Ref.~\cite{Chiribella-purification}):
\begin{prop}
For every system $\mathrm{A}$, there exists a unique invariant state
$\chi_{\mathrm{A}}$, which is also a completely mixed state.
\end{prop}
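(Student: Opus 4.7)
The plan is to construct an invariant state by Haar-averaging a pure state over the compact group of reversible channels, then to leverage transitivity on pure states (the preceding Proposition) to obtain both uniqueness and the completely-mixed property. Let $\mathbf{G}_{\rA}$ denote the group of reversible channels on $\rA$. Finite-dimensionality of $\mathsf{Transf}_{\mathbb{R}}(\rA,\rA)$ together with the framework's standing compactness assumption makes $\mathbf{G}_{\rA}$ a compact topological group in the topology induced by probabilities, so it carries a normalized left-invariant Haar measure $d\mathcal{U}$. Picking any pure state $\psi \in \mathsf{PurSt}_1(\rA)$ (which exists because $\mathsf{St}_1(\rA)$ is a non-empty compact convex subset of a finite-dimensional vector space, so it has extreme points, and these coincide with the pure states), I would define
\[
\chi_{\rA} \;:=\; \int_{\mathbf{G}_{\rA}} \mathcal{U}\psi \, d\mathcal{U}.
\]
Left-invariance immediately yields $\mathcal{V}\chi_{\rA} = \chi_{\rA}$ for every reversible $\mathcal{V}$, settling existence.

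For the completely-mixed property, I would first observe that $\chi_{\rA}$ does not depend on the chosen $\psi$. By the preceding Proposition, every other pure state can be written as $\mathcal{W}\psi$, and right-invariance of Haar then gives $\int \mathcal{U}(\mathcal{W}\psi)\,d\mathcal{U} = \chi_{\rA}$. Hence $\chi_{\rA}$ is the Haar average of every pure state, with the measure fully supported on the orbit, which is all of $\mathsf{PurSt}_1(\rA)$. A standard finite-dimensional convex-geometric fact then places $\chi_{\rA}$ in the relative interior of $\mathsf{St}_1(\rA)$, so for every pure state $\phi$ there is a decomposition $\chi_{\rA} = p\phi + (1-p)\tau$ with $p \in (0,1]$ and $\tau \in \mathsf{St}_1(\rA)$, i.e.\ every pure state is contained in $\chi_{\rA}$ in the sense of the Definition above. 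By Minkowski--Carathéodory every normalized state is a finite convex combination of pure states, so every state is contained in $\chi_{\rA}$.

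Uniqueness then falls out of the same averaging. Given any invariant state $\chi'$, decompose it via Minkowski--Carathéodory as $\chi' = \sum_k q_k \phi_k$ with $\phi_k$ pure, and compute
\[
\chi' \;=\; \int \mathcal{U}\chi' \, d\mathcal{U} \;=\; \sum_k q_k \int \mathcal{U}\phi_k \, d\mathcal{U} \;=\; \sum_k q_k \, \chi_{\rA} \;=\; \chi_{\rA},
\]
using invariance of $\chi'$ in the first equality and the pure-state independence of the Haar average in the third.

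The hardest step, as I see it, is not conceptual but a matter of bookkeeping inside the OPT formalism: one must verify that $\mathbf{G}_{\rA}$ is genuinely a compact topological group in the probability-induced topology, and that Haar integration commutes with the scalar pairing $(a | \cdot)$ so that $\int \mathcal{U}\psi \, d\mathcal{U}$ really defines an element of $\mathsf{St}_1(\rA)$ rather than merely of the ambient vector space $\mathsf{St}_{\mathbb{R}}(\rA)$. Purification itself enters the proof only indirectly, through its consequence that reversible channels act transitively on pure states; this is the single place where the third axiom is deployed.
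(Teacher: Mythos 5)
Your proof is correct, and it is essentially the argument the paper relies on by citing corollary 34 of the Purification reference: average a pure state over the Haar measure of the compact group of reversible channels, use transitivity to make the average independent of the seed (hence unique, since any invariant state is a finite mixture of pure states and equals its own average), and deduce complete mixedness from the barycenter lying in the relative interior of $\mathsf{St}_1(\mathrm{A})$. The only ingredients you defer (closedness of the group of reversible channels inside the compact set of channels, and that the Haar barycenter is again a normalized state) are exactly the standard framework facts established in that reference, so no substantive gap remains.
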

Also, transitivity implies that the set of pure states is compact
for every system (see corollary 32 of Ref.~\cite{Chiribella-purification}).
This property is generally a non-trivial property---cf.\ Ref.~\cite{Entropy-Barnum}
for a counterexample of a state space with a non-closed set of pure
states.

A crucial consequence of Purification is the \emph{steering property}:
\begin{thm}[Steering property]
Let $\rho\in\mathsf{St}_{1}\left(\mathrm{A}\right)$ and let $\Psi\in\mathsf{PurSt}_{1}\left(\mathrm{A}\otimes\mathrm{B}\right)$
be a purification of $\rho$. Then $\sigma$ is contained in $\rho$
if and only if there exist an effect $b_{\sigma}$ on the purifying
system $\mathrm{B}$ and a non-zero probability $p$ such that\[
p\!\!\!\!\begin{aligned}\Qcircuit @C=1em @R=.7em @!R { & \prepareC{\sigma} & \qw \poloFantasmaCn{\rA} & \qw }\end{aligned}~=\!\!\!\!\begin{aligned}\Qcircuit @C=1em @R=.7em @!R { & \multiprepareC{1}{\Psi} & \qw \poloFantasmaCn{\rA} & \qw \\ & \pureghost{\Psi} & \qw \poloFantasmaCn{\rB} & \measureD{b_{\sigma}} }\end{aligned}~.
\]\end{thm}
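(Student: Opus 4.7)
The plan is to prove the two directions separately. The easy direction $(\Leftarrow)$ is essentially a calculation: given an effect $b_\sigma$ on $\mathrm{B}$ with $p\sigma=(\mathcal{I}_\mathrm{A}\otimes b_\sigma)\Psi$, I would extend it to a two-outcome observation-test $\{b_\sigma,u_\mathrm{B}-b_\sigma\}$ (which is available by Proposition~\ref{prop:characterization observation-tests} together with the usual closure of effects under complementation in a causal theory) and write $\rho=(\mathcal{I}_\mathrm{A}\otimes u_\mathrm{B})\Psi=p\sigma+\tau'$ with $\tau':=(\mathcal{I}_\mathrm{A}\otimes(u_\mathrm{B}-b_\sigma))\Psi$. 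Causality gives $\mathrm{Tr}\,\tau'=1-p$, so either $p=1$ (and $\sigma=\rho$) or $\tau'=(1-p)\tau$ for a normalized state $\tau$; in either case $\sigma$ is contained in $\rho$ by definition.

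For the non-trivial direction $(\Rightarrow)$, the strategy is first to construct an auxiliary purification of $\rho$ on which the steering effect is built by hand, and then to transfer the result to the given $\Psi$ using the uniqueness clause of Purification. Starting from $\rho=p\sigma+(1-p)\tau$, I would view $\{p\sigma,(1-p)\tau\}$ as a two-outcome preparation-test on $\mathrm{A}$ and apply Assumption~\ref{assu:read} to realise it as a bipartite state $M\in\mathsf{St}_1(\mathrm{A}\otimes\mathrm{C})$ followed by an observation-test $\{c_0,c_1\}$ on $\mathrm{C}$, so that $(\mathcal{I}_\mathrm{A}\otimes c_0)M=p\sigma$ and $\mathrm{Tr}_\mathrm{C}M=\rho$. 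Purifying $M$ via Purification yields $\widetilde{M}\in\mathsf{PurSt}_1(\mathrm{A}\otimes\mathrm{C}\otimes\mathrm{D})$ whose marginal on $\mathrm{A}$ is still $\rho$, so $\widetilde{M}$ is a purification of $\rho$ on $\mathrm{B}':=\mathrm{C}\otimes\mathrm{D}$; and the effect $b^{\ast}_\sigma:=c_0\otimes u_\mathrm{D}$ then satisfies $(\mathcal{I}_\mathrm{A}\otimes b^{\ast}_\sigma)\widetilde{M}=(\mathcal{I}_\mathrm{A}\otimes c_0)M=p\sigma$, establishing steering for this \emph{specific} purification.

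The main obstacle is the transfer step, since $\widetilde{M}$ and the given $\Psi$ live on purifying systems $\mathrm{B}'$ and $\mathrm{B}$ that need not coincide. The plan is to equalize the purifying systems by tensoring with pure ancillary states; such pure states exist on every system, because by Pure Sharpness there is a pure effect on any auxiliary system, and by Purity Preservation applying that pure effect to one half of a pure bipartite state produces a pure state on the other half. Choosing $\phi\in\mathsf{PurSt}_1(\mathrm{B}')$ and $\phi'\in\mathsf{PurSt}_1(\mathrm{B})$, Purity Preservation makes $\Psi\otimes\phi$ and $\widetilde{M}\otimes\phi'$ into pure purifications of $\rho$ on $\mathrm{A}\otimes\mathrm{B}\otimes\mathrm{B}'$ and $\mathrm{A}\otimes\mathrm{B}'\otimes\mathrm{B}$ respectively; symmetry of composition reorders the factors and the uniqueness clause of Purification then supplies a reversible channel $\mathcal{W}\colon\mathrm{B}'\otimes\mathrm{B}\to\mathrm{B}\otimes\mathrm{B}'$ with $\Psi\otimes\phi=(\mathcal{I}_\mathrm{A}\otimes\mathcal{W})(\widetilde{M}\otimes\phi')$. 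Applying $b^{\ast}_\sigma\otimes u_\mathrm{B}$ to the right-hand side gives $p\sigma$ exactly, so $\tilde{e}:=(b^{\ast}_\sigma\otimes u_\mathrm{B})\circ\mathcal{W}^{-1}$ is an effect on $\mathrm{B}\otimes\mathrm{B}'$ with $(\mathcal{I}_\mathrm{A}\otimes\tilde{e})(\Psi\otimes\phi)=p\sigma$; contracting its $\mathrm{B}'$ leg with $\phi$ finally produces the desired effect $b_\sigma:=\tilde{e}\circ(\mathcal{I}_\mathrm{B}\otimes\phi)$ on $\mathrm{B}$ satisfying $(\mathcal{I}_\mathrm{A}\otimes b_\sigma)\Psi=p\sigma$.
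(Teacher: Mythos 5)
Your proof is correct and follows essentially the same route the paper intends: it reconstructs the steering argument of the purification paper (theorem 6 and corollary 9 there), with the encoding of the outcome into perfectly distinguishable ancilla states replaced by Assumption~\ref{assu:read} to physicalize the preparation-test $\left\{ p\sigma,\left(1-p\right)\tau\right\}$, followed by purifying the resulting bipartite state and transferring to the given purification $\Psi$ via the uniqueness clause after equalizing the purifying systems with pure ancillas---exactly the substitution the paper describes. The only soft spot is your justification of the existence of normalized pure states on an arbitrary system (applying a pure effect to half of a pure bipartite state presupposes such a state and could in principle give the zero vector); this is harmless, since the existence of pure states is a standard background fact of the framework, e.g.\ as extreme points of the compact convex set of normalized states in finite dimension.
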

\begin{proof}
The proof follows the same lines of theorem 6 and corollary 9 in Ref.~\cite{Chiribella-purification},
with the only difference that here we do not assume the existence
of perfectly distinguishable states. In its place, we use the framework
assumption~\ref{assu:read}, which guarantees that the outcome of
every test can be read out from a physical system.
\end{proof}
Now we introduce a definition and a proposition which will be used
later.
\begin{defn}
We say that a state $\rho\in\mathsf{St}_{1}\left(\mathrm{A}\otimes\mathrm{B}\right)$
is \emph{faithful for effects} of system $\mathrm{A}$ if, for any
$a,a'\in\mathsf{Eff}\left(\mathrm{A}\right)$, we have $a=a'$ if\[
\begin{aligned}\Qcircuit @C=1em @R=.7em @!R { & \multiprepareC{1}{\rho} & \qw \poloFantasmaCn{\rA} & \measureD{a} \\ & \pureghost{\rho} & \qw \poloFantasmaCn{\rB} & \qw }\end{aligned} ~= \!\!\!\! \begin{aligned}\Qcircuit @C=1em @R=.7em @!R { & \multiprepareC{1}{\rho} & \qw \poloFantasmaCn{\rA} & \measureD{a'} \\ & \pureghost{\rho} & \qw \poloFantasmaCn{\rB} & \qw }\end{aligned}
\]\end{defn}
\begin{prop}
\label{prop:faithful-effects}A pure state $\Psi_{\mathrm{AB}}$ is
faithful for effects of system $\mathrm{A}$ if and only if its marginal
$\omega_{\mathrm{A}}$ on $\mathrm{A}$ is completely mixed.
\end{prop}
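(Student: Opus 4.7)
My plan is to recast the faithfulness condition as an injectivity statement about a linear map between effect and state spaces, and then analyse it through the Steering Property together with convex-geometric reasoning. The key objects are the two linear maps $\mathcal{C}\colon\mathsf{Eff}_{\mathbb{R}}(\mathrm{A})\to\mathsf{St}_{\mathbb{R}}(\mathrm{B})$ sending $a\mapsto(a\otimes\mathcal{I}_\mathrm{B})\Psi_{\mathrm{AB}}$ and $\mathcal{D}\colon\mathsf{Eff}_{\mathbb{R}}(\mathrm{B})\to\mathsf{St}_{\mathbb{R}}(\mathrm{A})$ sending $b\mapsto(\mathcal{I}_\mathrm{A}\otimes b)\Psi_{\mathrm{AB}}$. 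A one-line check $(b|\mathcal{C}(a))=(a\otimes b|\Psi)=(a|\mathcal{D}(b))$ shows that $\mathcal{C}$ and $\mathcal{D}$ are transposes of each other under the non-degenerate state/effect pairing, so they have the same rank.

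For the ``if'' direction I would argue directly via Steering. Assume $\omega_\mathrm{A}$ is completely mixed and $(a\otimes\mathcal{I}_\mathrm{B})\Psi=(a'\otimes\mathcal{I}_\mathrm{B})\Psi$. For any $\sigma\in\mathsf{St}_1(\mathrm{A})$, complete mixedness guarantees that $\sigma$ is contained in $\omega_\mathrm{A}$, so the Steering Property gives an effect $b_\sigma$ and a probability $p>0$ with $p\,\sigma=(\mathcal{I}_\mathrm{A}\otimes b_\sigma)\Psi$. Composing the assumed equality with $b_\sigma$ then yields $p(a|\sigma)=p(a'|\sigma)$, hence $(a|\sigma)=(a'|\sigma)$ for every normalized state; since every state is proportional to a normalized one and effects are separated by states, $a=a'$.

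For the ``only if'' direction I would argue the contrapositive. If $\omega_\mathrm{A}$ is not completely mixed, then the face $F\subset\mathsf{St}_1(\mathrm{A})$ it generates is a proper face, and a standard convex-geometric argument (exhibiting a non-zero element of the dual cone that vanishes on $F$) shows that its real linear span $V:=\mathrm{span}(F)$ is a proper subspace of $\mathsf{St}_{\mathbb{R}}(\mathrm{A})$. Applying Steering in the opposite direction, every element in the image of $\mathcal{D}|_{\mathsf{Eff}(\mathrm{B})}$ is a non-negative multiple of a state contained in $\omega_\mathrm{A}$, hence lies in $V$; extending by linearity, $\mathrm{Im}\,\mathcal{D}\subseteq V$. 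Therefore $\mathrm{rank}\,\mathcal{C}=\mathrm{rank}\,\mathcal{D}\leq\dim V<\dim\mathsf{Eff}_{\mathbb{R}}(\mathrm{A})$, so $\mathcal{C}$ has a non-zero kernel.

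To finish, I would lift a non-zero kernel element $x\in\mathsf{Eff}_{\mathbb{R}}(\mathrm{A})$ to a pair of distinct physical effects. Writing $x=c\,a-c'\,a'$ with $a,a'\in\mathsf{Eff}(\mathrm{A})$ and $c,c'\geq 0$ (possible because $\mathsf{Eff}_{\mathbb{R}}(\mathrm{A})$ is spanned by the convex set $\mathsf{Eff}(\mathrm{A})$), and rescaling by a sufficiently small $\lambda>0$, both $b:=(\lambda c)\,a$ and $b':=(\lambda c')\,a'$ lie in $\mathsf{Eff}(\mathrm{A})$ (scaling an effect by a factor in $[0,1]$ stays within $\mathsf{Eff}(\mathrm{A})$, via convex combination with a no-outcome branch of a suitable test); they are distinct since $b-b'=\lambda x\neq 0$, and $\mathcal{C}(b)=\mathcal{C}(b')$, contradicting faithfulness. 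The main delicate point is precisely this last lifting step, which requires the scale-closure of physical effects implicit in the OPT framework; everything else is a clean combination of Steering with finite-dimensional rank-nullity.
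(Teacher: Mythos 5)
Your proof is correct, and it is worth noting that the paper itself gives no in-line argument for this proposition: it defers to theorems 8 and 9 of Ref.~\cite{Chiribella-purification}, whose proofs are likewise steering-based. Your ``if'' direction coincides with that standard argument: complete mixedness of $\omega_{\mathrm{A}}$ lets every normalized $\sigma\in\mathsf{St}_1(\mathrm{A})$ be steered out of $\Psi$ by some effect $b_\sigma$ on $\mathrm{B}$, and the assumed equality $(a\otimes\mathcal{I}_{\mathrm{B}})\Psi=(a'\otimes\mathcal{I}_{\mathrm{B}})\Psi$ then forces $(a|\sigma)=(a'|\sigma)$ on all states, hence $a=a'$. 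Where you genuinely deviate is the ``only if'' direction: rather than exhibiting by hand an effect that vanishes on the face generated by $\omega_{\mathrm{A}}$, you use that $\mathcal{C}$ and $\mathcal{D}$ are adjoint with respect to the non-degenerate state--effect pairings (so $\mathrm{rank}\,\mathcal{C}=\mathrm{rank}\,\mathcal{D}$), that the easy direction of steering (really just Causality, via the binary test $\{b,u-b\}$) confines $\mathrm{Im}\,\mathcal{D}$ to the span $V$ of that face, and that a proper face has proper span; rank--nullity then gives a non-zero kernel element, which you correctly split as a difference of two genuine effects by grouping positive and negative coefficients, using convexity of $\mathsf{Eff}(\mathrm{A})$ and sub-unit rescaling (with the zero effect covering the degenerate case). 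This is a compact and slightly more structural route; indeed, by running steering in both directions one sees that $\ker\mathcal{C}$ is exactly the annihilator of $V$, so your two directions are two faces of a single identity. The only ingredients beyond the axioms are framework facts the paper also takes for granted (mutual separation of states and effects, hence $\dim\mathsf{Eff}_{\mathbb{R}}(\mathrm{A})=\dim\mathsf{St}_{\mathbb{R}}(\mathrm{A})$ and well-definedness of $\mathcal{C},\mathcal{D}$ as linear maps on the spans, and closure of effects under multiplication by scalars in $[0,1]$), so I see no gap.
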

See theorems 8 and 9 of Ref.~\cite{Chiribella-purification} for
the proof.

Combining Purification with Purity Preservation one obtains the following
properties:
\begin{prop}
\label{prop:characterization of effects}For every observation-test
$\left\{ a_{i}\right\} _{i\in\mathsf{X}}$ on $\mathrm{A}$, there
is a system $\mathrm{B}$ and a test $\left\{ \mathcal{A}_{i}\right\} _{i\in\mathsf{X}}\subset\mathsf{Transf}\left(\mathrm{A},\mathrm{B}\right)$
such that every $\mathcal{A}_{i}$ is pure and $a_{i}=u_{\mathrm{B}}\mathcal{A}_{i}$.
\end{prop}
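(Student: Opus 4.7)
The proposition is the OPT analog of Naimark's dilation theorem for measurements, and the plan is to combine three ingredients: the framework Assumption~\ref{assu:read} on physicalization of readout, a Stinespring-type dilation of channels (a standard consequence of Purification), and Purity Preservation to glue the pure pieces together.

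First, I would apply Assumption~\ref{assu:read} to the observation-test $\left\{a_i\right\}_{i\in\mathsf{X}}$, viewed as a test from $\mathrm{A}$ to the trivial system. This produces a system $\mathrm{C}$, a channel $\mathcal{M}\in\mathsf{Transf}\left(\mathrm{A},\mathrm{C}\right)$, and an internal observation-test $\left\{c_i\right\}_{i\in\mathsf{X}}\subset\mathsf{Eff}\left(\mathrm{C}\right)$ such that $a_i=c_i\mathcal{M}$ for every $i\in\mathsf{X}$. This reduces the problem from dilating a whole test to dilating a single channel.

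Second, I would invoke the Stinespring-type dilation for channels that follows from Purification combined with Purity Preservation (as derived in Ref.~\cite{Chiribella-purification}): there exist a system $\mathrm{B}$ and a pure channel $\mathcal{V}\in\mathsf{PurTransf}\left(\mathrm{A},\mathrm{C}\otimes\mathrm{B}\right)$ with $\mathcal{M}=\left(\mathcal{I}_{\mathrm{C}}\otimes u_{\mathrm{B}}\right)\mathcal{V}$. The construction of $\mathcal{V}$ proceeds by applying $\mathcal{M}$ to one half of a purification of the invariant state $\chi_{\mathrm{A}}$, purifying the resulting bipartite state on $\mathrm{C}\otimes\mathrm{A}'$ via the Purification axiom, and then using the uniqueness part of Purification to extract a pure channel from the resulting pure state (faithfulness of the purification of $\chi_{\mathrm{A}}$ for effects of $\mathrm{A}$, Proposition~\ref{prop:faithful-effects}, is needed to match the marginals).

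Combining the two steps yields $a_i=\left(c_i\otimes u_{\mathrm{B}}\right)\mathcal{V}$, which prompts the definition $\mathcal{A}_i:=\left(c_i\otimes\mathcal{I}_{\mathrm{B}}\right)\mathcal{V}\in\mathsf{Transf}\left(\mathrm{A},\mathrm{B}\right)$. It is immediate that $u_{\mathrm{B}}\mathcal{A}_i=a_i$ and that $\sum_i\mathcal{A}_i=\left(u_{\mathrm{C}}\otimes\mathcal{I}_{\mathrm{B}}\right)\mathcal{V}=\mathrm{Tr}_{\mathrm{C}}\mathcal{V}$, which is a channel; hence $\left\{\mathcal{A}_i\right\}$ forms a test. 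The hard part will be showing that each $\mathcal{A}_i$ is pure: the inner effects $c_i$ need not be pure, so a direct application of Purity Preservation to $\left(c_i\otimes\mathcal{I}_{\mathrm{B}}\right)\mathcal{V}$ does not suffice. The resolution is to iterate the construction --- reapply Assumption~\ref{assu:read} to $\left\{c_i\right\}$, Stinespring-dilate the resulting channel, and use Purity Preservation to absorb the new pure dilation into $\mathcal{V}$ --- thereby transferring the residual non-purity of the readout into successively larger ancillary systems. In the finite-dimensional, compact OPT setting assumed throughout the paper, this stabilization argument delivers the required pure test $\left\{\mathcal{A}_i\right\}$.
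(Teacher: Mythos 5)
Your first two steps are sound but only reproduce the easy half of the statement: Assumption~\ref{assu:read} gives $a_i=c_i\mathcal{M}$, the purification-based dilation gives a pure channel $\mathcal{V}\in\mathsf{PurTransf}\left(\mathrm{A},\mathrm{C}\otimes\mathrm{B}\right)$ with $\mathcal{M}=\left(\mathcal{I}_{\mathrm{C}}\otimes u_{\mathrm{B}}\right)\mathcal{V}$, and $\mathcal{A}_i:=\left(c_i\otimes\mathcal{I}_{\mathrm{B}}\right)\mathcal{V}$ is a test satisfying $u_{\mathrm{B}}\mathcal{A}_i=a_i$. The step you yourself flag as ``the hard part'' is where the whole content of the proposition lies, and the proposed fix does not close it. Re-applying Assumption~\ref{assu:read} to $\left\{c_i\right\}$ and dilating again returns you to exactly the same situation: an arbitrary (not necessarily pure) observation-test on a fresh ancilla composed with a pure channel. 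Nothing in the iteration forces the readout effects to become pure; there is no quantity measuring ``residual non-purity'' that decreases, no termination criterion, and no limit object to which compactness or finite-dimensionality could be applied (nor would purity survive a limit without further argument). In effect, the assertion that the readout can eventually be taken with pure $c_i$ is an operational Naimark statement essentially equivalent to the proposition itself, so the reduction is circular.

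The obstruction is also conceptual: purity of $\mathcal{A}_i$ for a non-pure effect $a_i$ is not obtained by purifying the pointer measurement; in quantum theory the atomic dilation of a POVM element $A_i$ is the single-Kraus map $\rho\mapsto\sqrt{A_i}\,\rho\,\sqrt{A_i}$, i.e.\ the purity is created by a square-root/Choi-type construction, not inherited from whatever $c_i$ and $\mathcal{V}$ Assumption~\ref{assu:read} happens to provide. The proof the paper points to (lemma 18 of Ref.~\cite{Chiribella-informational}, with the toolbox of Ref.~\cite{Chiribella-purification}) works in this spirit: take a pure state $\Phi$ purifying the completely mixed invariant state $\chi_{\mathrm{A}}$, form the steered states $\left(a_i\otimes\mathcal{I}\right)\Phi$ on the purifying system, purify each of them, and use the uniqueness clause of Purification together with Purity Preservation to convert each such pure bipartite state into a pure transformation $\mathcal{A}_i$ acting on the $\mathrm{A}$ wire of $\Phi$; faithfulness of $\Phi$ for effects (proposition~\ref{prop:faithful-effects}) then forces $u_{\mathrm{B}}\mathcal{A}_i=a_i$, and the analog of proposition~\ref{prop:observation-test purification} for transformations (a collection summing to a channel is a test) shows that $\left\{\mathcal{A}_i\right\}$ is a test. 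To salvage your route you would have to prove the missing lemma---every effect, pure or not, admits a pure dilation $a=u_{\mathrm{B}}\mathcal{A}$---by some such direct argument; the iteration as described does not supply it.
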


\begin{prop}
\label{prop:non-disturbing measurement}Let $a$ be an effect such
that $\left(a|\rho\right)=1$, for some $\rho\in\mathsf{St}_{1}\left(\mathrm{A}\right)$.
Then there exists a transformation $\mathcal{T}$ on $\mathrm{A}$
such that $a=u\mathcal{T}$ and $\mathcal{T}=_{\rho}\mathcal{I}$,
where $\mathcal{I}$ is the identity.
\end{prop}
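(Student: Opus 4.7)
The plan is to construct $\mathcal{T}$ as a composite $\mathcal{T}=\mathcal{V}\mathcal{A}$, where $\mathcal{A}:\mathrm{A}\to\mathrm{B}$ is a pure refinement of the effect $a$ and $\mathcal{V}:\mathrm{B}\to\mathrm{A}$ is a channel that undoes $\mathcal{A}$ on a purification of $\rho$. First, every effect $x$ satisfies $0\le (x|\sigma)\le (u|\sigma)$ on all states $\sigma$, so $u-a$ is itself an effect and $\{a,u-a\}$ is an observation-test. Proposition~\ref{prop:characterization of effects} then provides a pure transformation $\mathcal{A}\in\mathsf{PurTransf}(\mathrm{A},\mathrm{B})$ with $a=u_\mathrm{B}\mathcal{A}$.

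Let $\Psi\in\mathsf{PurSt}_1(\mathrm{A}\otimes\mathrm{R})$ be a purification of $\rho$ and set $\Psi':=(\mathcal{A}\otimes\mathcal{I}_\mathrm{R})\Psi$. By Purity Preservation $\Psi'$ is pure, and it is normalized because $\mathrm{Tr}\,\Psi'=(a|\rho)=1$. I next claim that $\Psi$ and $\Psi'$ share the same marginal on $\mathrm{R}$, call it $\omega_\mathrm{R}$. By convexity, $(u-a|\rho)=0$ together with the non-negativity of $u-a$ forces $(u-a|\sigma)=0$ for every state $\sigma$ contained in $\rho$. By the Steering property, for any effect $c$ on $\mathrm{R}$ the vector $(\mathcal{I}_\mathrm{A}\otimes c)\Psi$ is a non-negative multiple of a state contained in $\rho$ (or zero), so $((u-a)\otimes c)|\Psi=0$; equivalently, $(a\otimes c)|\Psi=(u\otimes c)|\Psi$, i.e.\ $\mathrm{Tr}_\mathrm{B}\Psi'=\mathrm{Tr}_\mathrm{A}\Psi=\omega_\mathrm{R}$.

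The pure purifications $\Psi$ and $\Psi'$ of $\omega_\mathrm{R}$ use different purifying systems, so I pad with ancillas: choose pure normalized states $\psi_0\in\mathsf{PurSt}_1(\mathrm{A})$ and $\tau_0\in\mathsf{PurSt}_1(\mathrm{B})$ (obtainable, e.g., by purifying the trivial system) and form $\Psi\otimes\tau_0$ and $\psi_0\otimes\Psi'$. Both are pure by Purity Preservation, normalized, and, up to the canonical tensor-product symmetry, are purifications of $\omega_\mathrm{R}$ with common purifying system $\mathrm{A}\otimes\mathrm{B}$. Purification then yields a reversible channel $\mathcal{W}$ on $\mathrm{A}\otimes\mathrm{B}$ with
\[
(\mathcal{W}\otimes\mathcal{I}_\mathrm{R})(\Psi\otimes\tau_0)=\psi_0\otimes\Psi'.
\]

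Define the channel $\mathcal{V}:\mathrm{B}\to\mathrm{A}$ by $\mathcal{V}(\xi):=\mathrm{Tr}_\mathrm{B}\bigl[\mathcal{W}^{-1}(\psi_0\otimes\xi)\bigr]$ and set $\mathcal{T}:=\mathcal{V}\mathcal{A}$. Inverting the preceding identity and tracing over $\mathrm{B}$ (using that $\tau_0$ is normalized) gives $(\mathcal{V}\otimes\mathcal{I}_\mathrm{R})\Psi'=\Psi$, hence $(\mathcal{T}\otimes\mathcal{I}_\mathrm{R})\Psi=\Psi$. Since $\mathcal{V}$ is a channel, $u_\mathrm{A}\mathcal{V}=u_\mathrm{B}$, so $u\mathcal{T}=u_\mathrm{B}\mathcal{A}=a$. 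For any state $\sigma$ contained in $\rho$, the Steering property writes $p\sigma=(\mathcal{I}_\mathrm{A}\otimes b_\sigma)\Psi$ with $p>0$, and applying $\mathcal{T}\otimes\mathcal{I}_\mathrm{R}$ together with the invariance $(\mathcal{T}\otimes\mathcal{I}_\mathrm{R})\Psi=\Psi$ yields $p\mathcal{T}\sigma=p\sigma$, i.e.\ $\mathcal{T}=_\rho\mathcal{I}$. I expect the main obstacle to be the marginal-equality step: one has to bootstrap the pointwise fact $(u-a|\sigma)=0$ on states contained in $\rho$ up to the global identity $\mathrm{Tr}_\mathrm{B}\Psi'=\mathrm{Tr}_\mathrm{A}\Psi$ via Steering, since only then is Purification uniqueness available to produce the required channel $\mathcal{V}$.
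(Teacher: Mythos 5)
Your construction is correct and is essentially the purification argument behind the result that the paper itself only cites (lemma 18 and corollary 9 of Ref.~\cite{Chiribella-informational}): realize $a=u_{\mathrm{B}}\mathcal{A}$ with $\mathcal{A}$ pure via proposition~\ref{prop:characterization of effects}, apply $\mathcal{A}$ to a purification of $\rho$, note that the marginal on the purifying system is unchanged because $\left(a|\rho\right)=1$, invoke the uniqueness part of Purification (with your ancilla padding to match purifying systems) to obtain the correcting channel $\mathcal{V}$, and finish with steering to get $\mathcal{T}=_{\rho}\mathcal{I}$. The one soft spot is your justification that $u-a$ is an effect and $\left\{ a,u-a\right\} $ an observation-test: since no no-restriction hypothesis is assumed here, positivity on all states does not by itself make $u-a$ an effect; the correct (and immediate) justification is to take any observation-test containing $a$ and coarse-grain the remaining outcomes, after which proposition~\ref{prop:observation-test purification} and your use of proposition~\ref{prop:characterization of effects} go through exactly as written.
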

The proofs of the above propositions can be found in lemma 18 and
corollary 9 of Ref.~\cite{Chiribella-informational}.

Finally, thanks to Purification, proposition~\ref{prop:characterization observation-tests}
becomes also a sufficient conditions for a set of effects to be an
observation-test (cf.~theorem 18 of Ref.~\cite{Chiribella-purification}).
\begin{prop}
\label{prop:observation-test purification}A set of effects $\left\{ a_{i}\right\} _{i=1}^{n}$
is an observation-test if and only if $\sum_{i=1}^{n}a_{i}=u$.
\end{prop}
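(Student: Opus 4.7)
The ``only if'' direction is Proposition~\ref{prop:characterization observation-tests}, so my plan concerns only the converse: given a collection of effects $\{a_i\}_{i=1}^n$ on $\mathrm{A}$ with $\sum_i a_i = u_{\mathrm{A}}$, exhibit an observation-test realizing them.

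My strategy is to trade the effect-sum condition on $\mathrm{A}$ for a preparation-test condition on a purifying system, which is structurally easier to handle. I would begin by purifying the invariant state $\chi_{\mathrm{A}}$ to obtain $\Psi \in \mathsf{PurSt}_1(\mathrm{A}\otimes\mathrm{B})$; because $\chi_{\mathrm{A}}$ is completely mixed, Proposition~\ref{prop:faithful-effects} tells me that $\Psi$ is faithful for effects of $\mathrm{A}$. Setting $\sigma_i := (a_i \otimes \mathcal{I}_{\mathrm{B}})\Psi \in \mathsf{St}(\mathrm{B})$, the hypothesis $\sum_i a_i = u_{\mathrm{A}}$ translates to $\sum_i \sigma_i = (u_{\mathrm{A}} \otimes \mathcal{I}_{\mathrm{B}})\Psi = \omega_{\mathrm{B}}$, the marginal of $\Psi$ on $\mathrm{B}$.

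Next, viewing $\Psi$ as a purification of $\omega_{\mathrm{B}}$ with $\mathrm{A}$ playing the role of purifier, I would invoke the Steering Theorem in the reverse direction. Each $\sigma_i$ is contained in $\omega_{\mathrm{B}}$ by construction, so Steering supplies an effect $\tilde a_i$ on $\mathrm{A}$ and a weight $p_i>0$ with $p_i \sigma_i = (\tilde a_i \otimes \mathcal{I}_{\mathrm{B}})\Psi$. After rescaling and appealing to faithfulness of $\Psi$ for effects on $\mathrm{A}$, these $\tilde a_i$ must coincide with the original $a_i$.

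The main obstacle — and where I expect the real work — is that Steering produces effects one at a time and does not by itself bundle them into a genuine observation-test. To close this gap I would combine Assumption~\ref{assu:read} (Physicalization of Readout) with the ``ability to perform conditional tests'' inherited from Causality. Concretely, I would first realize $\{\sigma_i\}$ as a bona fide preparation-test on $\mathrm{B}$ via conditional preparation, then exploit the purification structure of $\Psi$, and in particular the uniqueness-up-to-reversible-channel clause of Purification, to pull that preparation-test back through the joint state $\Psi$ into a single observation-test on $\mathrm{A}$. Faithfulness of $\Psi$ for effects on $\mathrm{A}$ then forces the effects of the resulting test to be precisely the given $a_i$. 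This bootstrap, from ``each $a_i$ is individually realisable'' to ``the $a_i$ are jointly realisable'', mirrors the argument of Theorem~18 in \cite{Chiribella-purification}, adapted to the weaker axiomatic context of the present paper where the existence of perfectly distinguishable states cannot be invoked and must be substituted by Assumption~\ref{assu:read}.
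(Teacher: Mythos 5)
Your proposal is correct and follows essentially the same route as the paper, which establishes sufficiency simply by appealing to theorem 18 of Ref.~\cite{Chiribella-purification}: purify the completely mixed state $\chi_{\mathrm{A}}$, realize the induced ensemble $\left\{ \sigma_{i}\right\}$ on the purifying side as a genuine test via conditional preparation, Assumption~\ref{assu:read}, and the uniqueness clause of Purification, and conclude by faithfulness of the purification for effects on $\mathrm{A}$. The only cosmetic remark is that your intermediate single-effect steering step is redundant (the $\sigma_{i}$ are defined from the $a_{i}$, so nothing new is recovered there); the real content is the bundling argument, which you identify and outline correctly.
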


\subsection{Consequences of all the axioms}

In quantum theory, diagonalizing a state means decomposing it as a
convex combination of orthogonal pure states, i.e.\ pure states that
can be perfectly distinguished by a measurement.

In a general theory, perfectly distinguishable states are defined
as follows: 
\begin{defn}
The normalized states $\left\{ \rho_{i}\right\} $ are \emph{perfectly
distinguishable} if there exists an observation-test $\left\{ a_{j}\right\} $
such that $\left(a_{j}|\rho_{i}\right)=\delta_{ij}$. $\left\{ a_{j}\right\} $
is called \emph{perfectly distinguishing test}.
\end{defn}
Suppose we know that $\left(a|\rho\right)=1$, where $a$ is a pure
effect. Then, we can conclude that the state $\rho$ must be pure: 
\begin{prop}
\label{prop:uniqueness of state}Let $a\in\mathsf{PurEff}_{1}\left(\mathrm{A}\right)$.
Then, there exists a \emph{pure} state $\alpha\in\mathsf{PurSt}\left(\mathrm{A}\right)$
such that $\left(a|\alpha\right)=1$. Furthermore, for every $\rho\in\mathsf{St}\left(\mathrm{A}\right)$,
if $\left(a|\rho\right)=1$, then $\rho=\alpha$.
\end{prop}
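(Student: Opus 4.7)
\emph{Existence of $\alpha$.} Introduce $S:=\{\rho\in\mathsf{St}_{1}(\mathrm{A}):(a|\rho)=1\}$. Since $\|a\|=1$ and $\mathsf{St}_{1}(\mathrm{A})$ is compact with $(a|\cdot)$ continuous, the sup is attained, so $S$ is nonempty; clearly $S$ is convex and compact. A decomposition $\rho=p\sigma+(1-p)\tau\in S$ with $\sigma,\tau\in\mathsf{St}_{1}(\mathrm{A})$ and $p\in(0,1)$ yields $1=p(a|\sigma)+(1-p)(a|\tau)$, which forces $(a|\sigma)=(a|\tau)=1$ since both probabilities are at most $1$. Hence every extreme point of $S$ is extreme in $\mathsf{St}_{1}(\mathrm{A})$ and therefore pure, and Krein--Milman supplies a pure $\alpha\in S$.

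\emph{Uniqueness setup.} Fix any $\rho\in S$ and pick a purification $\Psi\in\mathsf{PurSt}_{1}(\mathrm{A}\otimes\mathrm{B})$ of $\rho$. The (subnormalized) state $((u-a)\otimes\mathcal{I}_{\mathrm{B}})\Psi$ on $\mathrm{B}$ has norm $(u-a|\rho)=0$, so it vanishes:
\[
((u-a)\otimes\mathcal{I}_{\mathrm{B}})\Psi=0,\qquad\text{equivalently}\qquad(a\otimes\mathcal{I}_{\mathrm{B}})\Psi=(u\otimes\mathcal{I}_{\mathrm{B}})\Psi.
\]
Applying Proposition~\ref{prop:characterization of effects} to the binary test $\{a,u-a\}$ yields pure transformations $\mathcal{A}:\mathrm{A}\to\mathrm{C}$ and $\mathcal{A}':\mathrm{A}\to\mathrm{C}'$ with $a=u_{\mathrm{C}}\mathcal{A}$ and $u-a=u_{\mathrm{C}'}\mathcal{A}'$. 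By Purity Preservation, $\Phi:=(\mathcal{A}\otimes\mathcal{I}_{\mathrm{B}})\Psi$ is a pure, normalized state on $\mathrm{C}\otimes\mathrm{B}$, whereas $(\mathcal{A}'\otimes\mathcal{I}_{\mathrm{B}})\Psi$ is pure with zero norm and hence vanishes. Additionally, Proposition~\ref{prop:non-disturbing measurement} supplies $\mathcal{T}:\mathrm{A}\to\mathrm{A}$ with $u\mathcal{T}=a$ and $\mathcal{T}=_{\rho}\mathcal{I}$; combining this with the Steering property lifts the local equality $\mathcal{T}\sigma=\sigma$ (for every $\sigma$ contained in $\rho$) to the global identity $(\mathcal{T}\otimes\mathcal{I}_{\mathrm{B}})\Psi=\Psi$.

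\emph{Closing step and main obstacle.} The final move is to identify $\Phi$ with the analogous bipartite pure state built from $\alpha$: since $\alpha$ is pure, any of its purifications on $\mathrm{A}\otimes\mathrm{B}$ has the trivial product form $\alpha\otimes\gamma$ for some pure $\gamma\in\mathsf{PurSt}(\mathrm{B})$, whence $(\mathcal{A}\otimes\mathcal{I}_{\mathrm{B}})(\alpha\otimes\gamma)=(\mathcal{A}\alpha)\otimes\gamma$ is a product pure state on $\mathrm{C}\otimes\mathrm{B}$. The plan is to match $\Phi$ with such a product via the uniqueness clause of the Purification axiom applied to the common $\mathrm{B}$-marginal, and then to recover $\rho=\alpha$ by retracting $\mathcal{A}$ on the support of $\rho$ with the aid of $(\mathcal{T}\otimes\mathcal{I}_{\mathrm{B}})\Psi=\Psi$. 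The principal obstacle is precisely this comparison: a priori the two pure bipartite states have different $\mathrm{B}$-marginals ($\rho_{\mathrm{B}}$ versus $\gamma$), and without Local Tomography matching them requires carefully combining Purification uniqueness, Purity Preservation, and the vanishing identity above, most likely via techniques analogous to those of Ref.~\cite{Chiribella-purification}.
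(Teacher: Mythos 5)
Your existence argument (the set $S=\{\rho:\ (a|\rho)=1\}$ is a compact face whose extreme points are pure, so Krein--Milman gives a pure $\alpha\in S$) is fine. But the uniqueness part is not a proof: you set up the machinery and then explicitly leave the closing identification as a ``plan'' with a ``principal obstacle'', so there is a genuine gap exactly where the content of the proposition lies. (The paper itself does not spell the argument out either---it points to lemma 26 and theorem 7 of the CDP reference---but the intended argument is short and you already have all its ingredients.)

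The obstacle you name is a red herring, and the detour through $\mathcal{A}$, $\mathcal{A}'$ and $\mathcal{T}$ is unnecessary. From your identity $\left(\left(u-a\right)\otimes\mathcal{I}_{\mathrm{B}}\right)\Psi=0$ you get $\left(a\otimes\mathcal{I}_{\mathrm{B}}\right)\Psi=\left(u\otimes\mathcal{I}_{\mathrm{B}}\right)\Psi=\rho_{\mathrm{B}}$, the marginal of $\Psi$ on the purifying system. Now apply Purity Preservation directly to the pure effect $a$ (an effect is a pure transformation with trivial output, so no dilation is needed): the left-hand side is pure, hence $\rho_{\mathrm{B}}$ is pure. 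This removes the marginal mismatch you worried about: take any pure $\alpha_{0}\in\mathsf{PurSt}_{1}(\mathrm{A})$; then $\alpha_{0}\otimes\rho_{\mathrm{B}}$ is pure and has the \emph{same} $\mathrm{B}$-marginal as $\Psi$, so both are purifications of $\rho_{\mathrm{B}}$ with purifying system $\mathrm{A}$, and the uniqueness clause of Purification gives $\Psi=\left(\mathcal{U}\otimes\mathcal{I}_{\mathrm{B}}\right)\left(\alpha_{0}\otimes\rho_{\mathrm{B}}\right)=\left(\mathcal{U}\alpha_{0}\right)\otimes\rho_{\mathrm{B}}$ for some reversible $\mathcal{U}$. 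Taking the marginal on $\mathrm{A}$ yields $\rho=\mathcal{U}\alpha_{0}$, i.e.\ \emph{every} state in $S$ is pure. Uniqueness then follows from convexity alone: if $\rho,\rho'\in S$, then $\tfrac{1}{2}\rho+\tfrac{1}{2}\rho'\in S$ is pure, which forces $\rho=\rho'$; in particular $\rho=\alpha$. Note also that this makes your Krein--Milman step dispensable (any norm-attaining state is automatically pure), and that your unproven side claim $\left(\mathcal{T}\otimes\mathcal{I}_{\mathrm{B}}\right)\Psi=\Psi$---which does not follow immediately from $\mathcal{T}=_{\rho}\mathcal{I}$ without further argument in the absence of Local Tomography---is never actually needed.
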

See lemma 26 and theorem 7 of Ref.~\cite{Chiribella-informational}
for the proof idea.

Combining the above result with our Pure Sharpness axiom, we derive
the following 
\begin{prop}
 For every pure state $\alpha\in\mathsf{PurSt}\left(\mathrm{A}\right)$,
there exists at least one pure effect $a\in\mathsf{PurEff}\left(\mathrm{A}\right)$
such that $\left(a|\alpha\right)=1$.\end{prop}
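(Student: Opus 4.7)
The plan is to combine Pure Sharpness with Proposition~\ref{prop:uniqueness of state} to extract one pair $(a_{0},\alpha_{0})$ of a pure effect and a pure state with $(a_{0}|\alpha_{0})=1$, and then to transport this pair to any other pure state by exploiting the transitive action of reversible channels on pure states.

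In detail, I would proceed as follows. First, Pure Sharpness provides a pure effect $a_{0}\in\mathsf{PurEff}(\mathrm{A})$ and a normalized state $\rho_{0}\in\mathsf{St}_{1}(\mathrm{A})$ with $(a_{0}|\rho_{0})=1$. Applying Proposition~\ref{prop:uniqueness of state} to $a_{0}$ (after rescaling to a normalized pure effect, if necessary), one concludes that $\rho_{0}$ is itself pure; call it $\alpha_{0}\in\mathsf{PurSt}_{1}(\mathrm{A})$. So at least \emph{one} pure state, namely $\alpha_{0}$, admits a pure effect yielding probability $1$.

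Next, to move from $\alpha_{0}$ to an arbitrary pure state $\alpha\in\mathsf{PurSt}_{1}(\mathrm{A})$, I would invoke the transitivity proposition: there exists a reversible channel $\mathcal{U}$ on $\mathrm{A}$ such that $\alpha=\mathcal{U}\alpha_{0}$. Define the candidate effect
\[
a := a_{0}\,\mathcal{U}^{-1}.
\]
Then a direct computation gives
\[
(a|\alpha)=(a_{0}|\mathcal{U}^{-1}\mathcal{U}\alpha_{0})=(a_{0}|\alpha_{0})=1,
\]
so the probability condition is satisfied. It remains to check that $a$ is pure. Since $a_{0}$ is pure by construction, this reduces to verifying that $\mathcal{U}^{-1}$ is a pure transformation and then applying Purity Preservation to the sequential composition $a_{0}\mathcal{U}^{-1}$.

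The only step that requires a brief argument is that reversible channels are pure. This is the expected mild obstacle, but it follows quickly from Purification together with Purity Preservation: any refinement $\{\mathcal{D}_{j}\}$ of a reversible $\mathcal{U}$ gives a refinement $\{\mathcal{U}^{-1}\mathcal{D}_{j}\}$ of the identity $\mathcal{I}_{\mathrm{A}}$, and purity of the identity (itself a standard consequence of Purification and transitivity on pure states) forces $\mathcal{D}_{j}=p_{j}\mathcal{U}$ for some probability distribution $\{p_{j}\}$. Granted this, Purity Preservation yields that $a=a_{0}\mathcal{U}^{-1}$ is pure, completing the proof.
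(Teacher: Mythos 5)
Your proposal is correct and follows essentially the same route as the paper's own proof: Pure Sharpness plus Proposition~\ref{prop:uniqueness of state} give one pure state--pure effect pair $(\alpha_{0},a_{0})$ with $(a_{0}|\alpha_{0})=1$, and transitivity of reversible channels on pure states transports it to an arbitrary pure state via $a:=a_{0}\mathcal{U}^{-1}$. The only difference is that you spell out why $a$ is pure (purity of reversible channels plus Purity Preservation), a step the paper leaves implicit but which is a standard consequence of Purification, and your sketch of it is sound.
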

\begin{proof}
By Pure Sharpness, there exists at least one pure effect $a_{0}$
such that $\left(a_{0}|\alpha_{0}\right)=1$ for some state $\alpha_{0}$.
By proposition~\ref{prop:uniqueness of state}, $\alpha_{0}$ is
pure. Now, for a generic pure state $\alpha$, by transitivity, there
is a reversible channel $\mathcal{U}$ such that $\alpha=\mathcal{U}\alpha_{0}$.
Hence, the effect $a:=a_{0}\mathcal{U}^{-1}$ is pure and $\left(a|\alpha\right)=1$.
\end{proof}
The above result will turn out to be useful for the construction of
our diagonalization procedure. A crucial ingredient in the derivation
of the diagonalization theorem is the following
\begin{thm}
\label{thm:diagonalization} Let $\rho$ be a normalized state of
system $\mathrm{A}$ and let $p_{*}$ be the probability defined as\footnote{Note that the maximum is well defined because the set of pure states
is compact, thanks to transitivity.} 
\[
p_{*}=\max_{\alpha\in\mathsf{PurSt}_{1}\left(\mathrm{A}\right)}\left\{ p\in\left[0,1\right]:\rho=p\alpha+\left(1-p\right)\sigma,\sigma\in\mathsf{St}_{1}\left(\mathrm{A}\right)\right\} .
\]
Let $\Psi\in\mathsf{PurSt}_{1}\left(\mathrm{A}\otimes\mathrm{B}\right)$
be a purification of $\rho$ and let $\widetilde{\rho}\in\mathsf{St}_{1}\left(\mathrm{B}\right)$
be the \emph{complementary state} of $\rho$, namely $\widetilde{\rho}:=\mathrm{Tr}_{\mathrm{A}}\Psi$.
Then, there exists a pure state $\beta\in\mathsf{PurSt}_{1}\left(\mathrm{B}\right)$
such that $\widetilde{\rho}=p_{*}\beta+\left(1-p_{*}\right)\tau$
for some state $\tau\in\mathsf{St}_{1}\left(\mathrm{B}\right)$.\end{thm}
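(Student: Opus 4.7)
The plan is to construct the desired pure state $\beta$ by applying a judiciously chosen pure effect to the $\mathrm{A}$ side of the purification $\Psi$, and then to pin down the weight of $\beta$ in $\widetilde{\rho}$ by exploiting the two-sided symmetry of the purification together with the maximality defining $p_*$. By the proposition immediately preceding the theorem, I first pick a pure effect $a\in\mathsf{PurEff}(\mathrm{A})$ satisfying $(a|\alpha)=1$. Since $(a|\rho')\leq1$ for every normalized state $\rho'$, the functional $u_{\mathrm{A}}-a$ is an effect, and by proposition~\ref{prop:observation-test purification} the pair $\{a,\,u_{\mathrm{A}}-a\}$ is an observation-test on $\mathrm{A}$.

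Applying this observation-test to the $\mathrm{A}$ side of $\Psi$ and summing the two outcomes gives, by Causality,
\[
\widetilde{\rho}=(a\otimes\mathcal{I}_{\mathrm{B}})\Psi+\bigl((u_{\mathrm{A}}-a)\otimes\mathcal{I}_{\mathrm{B}}\bigr)\Psi.
\]
By Purity Preservation, the first summand is a pure sub-normalized state on $\mathrm{B}$; writing it as $p\beta$ with $\beta\in\mathsf{PurSt}_{1}(\mathrm{B})$ and $p=(a|\rho)$ puts the second summand in the form $(1-p)\tau$ for some $\tau\in\mathsf{St}_{1}(\mathrm{B})$, yielding $\widetilde{\rho}=p\beta+(1-p)\tau$. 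The lower bound $p\geq p_*$ is then immediate: $p=(a|\rho)=p_*(a|\alpha)+(1-p_*)(a|\sigma)\geq p_*$.

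The crux of the argument is the matching upper bound $p\leq p_*$, which I plan to obtain by running the same construction in reverse on the $\mathrm{B}$ side. Picking a pure effect $b\in\mathsf{PurEff}(\mathrm{B})$ with $(b|\beta)=1$ and setting $r:=(b|\widetilde{\rho})$, the decomposition of $\widetilde{\rho}$ just established gives $r\geq p\,(b|\beta)=p$. By Purity Preservation, $(\mathcal{I}_{\mathrm{A}}\otimes b)\Psi$ is a pure sub-normalized state on $\mathrm{A}$, of the form $r\alpha'$ with $\alpha'\in\mathsf{PurSt}_{1}(\mathrm{A})$, and applying the observation-test $\{b,\,u_{\mathrm{B}}-b\}$ to the $\mathrm{B}$ side of $\Psi$ exhibits $\alpha'$ as a pure state contained in $\rho$ with weight exactly $r$. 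The defining maximality of $p_*$ then forces $r\leq p_*$, so $p\leq r\leq p_*$, and combined with the lower bound this closes the chain to $p=p_*$. The main obstacle is precisely this upper bound: without transporting the pure component $\beta$ on $\mathrm{B}$ back to a pure component $\alpha'$ on $\mathrm{A}$ of the same weight via steering, there is no direct way to invoke the maximality hypothesis, which is defined only with respect to decompositions of $\rho$.
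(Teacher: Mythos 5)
Your proposal is correct and follows essentially the same route as the paper's proof: apply the pure effect $a$ dual to $\alpha$ on the $\mathrm{A}$ side of $\Psi$ to produce (via Purity Preservation) a pure component $\beta$ of $\widetilde{\rho}$, then sandwich its weight between $p_{*}$ and $p_{*}$ by applying the pure effect dual to $\beta$ on the $\mathrm{B}$ side and invoking the maximality defining $p_{*}$. The only differences are cosmetic: you obtain the lower bound directly from $\left(a|\rho\right)\geq p_{*}$ instead of the paper's use of the steering property, and your passing claim that $u_{\mathrm{A}}-a$ is an effect deserves the standard causal-theory justification (coarse-graining an observation-test containing $a$) rather than mere positivity, a point the paper itself takes for granted.
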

\begin{proof}
By hypothesis, one can write $\rho=p_{*}\alpha+\left(1-p_{*}\right)\sigma$,
where $\alpha$ is a pure state and $\sigma$ is possibly mixed. Let
us purify $\rho$, and let $\Psi$ be one of its purifications, with
purifying system $\mathrm{B}$. According to the steering property,
there exists an effect $b$ that prepares $\alpha$ with probability
$p_{*}$, namely\begin{equation}\label{eq:steering diagonalization}
\begin{aligned}\Qcircuit @C=1em @R=.7em @!R { & \multiprepareC{1}{\Psi}    & \qw \poloFantasmaCn{\rA} &  \qw   \\  & \pureghost{\Psi}    & \qw \poloFantasmaCn{\rB}  &   \measureD{b} }\end{aligned}~=p_{*}\!\!\!\!\begin{aligned}\Qcircuit @C=1em @R=.7em @!R { & \prepareC{\alpha}    & \qw \poloFantasmaCn{\rA} &  \qw   }\end{aligned} ~.
\end{equation}Let $a$ be a pure effect such that $\left(a|\alpha\right)=1$. Applying
$a$ on both sides of Eq.~\eqref{eq:steering diagonalization}, we
get\[
\begin{aligned}\Qcircuit @C=1em @R=.7em @!R { & \multiprepareC{1}{\Psi}    & \qw \poloFantasmaCn{\rA} &  \measureD{a}   \\  & \pureghost{\Psi}    & \qw \poloFantasmaCn{\rB}  &   \measureD{b} }\end{aligned}~=p_{*}~.
\]On the other hand, applying $a$ to the state $\Psi$ we obtain\begin{equation}\label{eq:steering diagonalization B}
\begin{aligned}\Qcircuit @C=1em @R=.7em @!R { & \multiprepareC{1}{\Psi}    & \qw \poloFantasmaCn{\rA} &  \measureD{a}   \\  & \pureghost{\Psi}    & \qw \poloFantasmaCn{\rB}  &   \qw }\end{aligned}~=q\!\!\!\!\begin{aligned}\Qcircuit @C=1em @R=.7em @!R { & \prepareC{\beta}    & \qw \poloFantasmaCn{\rB} &  \qw   }\end{aligned}~,
\end{equation}where $q\in\left[0,1\right]$ and $\beta$ is a pure state (due to
Purity Preservation). Now if we apply $b$, we have\[
\begin{aligned}\Qcircuit @C=1em @R=.7em @!R { & \multiprepareC{1}{\Psi}    & \qw \poloFantasmaCn{\rA} &  \measureD{a}   \\  & \pureghost{\Psi}    & \qw \poloFantasmaCn{\rB}  &   \measureD{b} }\end{aligned}~=p_{*}~=q\!\!\!\!\begin{aligned}\Qcircuit @C=1em @R=.7em @!R { & \prepareC{\beta}    & \qw \poloFantasmaCn{\rB} &  \measureD{b}   }\end{aligned}~.
\]Since $\left(b|\beta\right)\in\left[0,1\right]$, we must have $q\geq p_{*}$.
We now prove that, in fact, equality holds. Let $\widetilde{b}$ be
a pure effect such that $\left(\widetilde{b}|\beta\right)=1$. Applying
$\widetilde{b}$ on both sides of Eq.~\eqref{eq:steering diagonalization B},
we obtain\[
\begin{aligned}\Qcircuit @C=1em @R=.7em @!R { & \multiprepareC{1}{\Psi}    & \qw \poloFantasmaCn{\rA} &  \measureD{a}   \\  & \pureghost{\Psi}    & \qw \poloFantasmaCn{\rB}  &   \measureD{\widetilde{b}} }\end{aligned}~=q.
\]By Purity Preservation, $\widetilde{b}$ will induce a pure state
on system $\mathrm{A}$, namely\[
q=\!\!\!\! \begin{aligned}\Qcircuit @C=1em @R=.7em @!R { & \multiprepareC{1}{\Psi}    & \qw \poloFantasmaCn{\rA} &  \measureD{a}   \\  & \pureghost{\Psi}    & \qw \poloFantasmaCn{\rB}  &   \measureD{\widetilde{b}} }\end{aligned}~=\widetilde{p}\!\!\!\!\begin{aligned}\Qcircuit @C=1em @R=.7em @!R { & \prepareC{\widetilde{\alpha}}    & \qw \poloFantasmaCn{\rA} &  \measureD{a}   }\end{aligned}~,
\]where $\widetilde{p}\in\left[0,1\right]$. From the above equation,
we have the inequality $q\leq\widetilde{p}$. Since by definition
we have $\widetilde{p}\leq p_{*}$, we finally get the chain of inequalities
$p_{*}\leq q\leq\widetilde{p}\leq p_{*}$, whence $p_{*}=q=\widetilde{p}$.
Hence, Eq.~\eqref{eq:steering diagonalization B} implies that the
pure state $\beta$ arises with probability $p_{*}$ in a convex decomposition
of the state $\widetilde{\rho}$. 
\end{proof}
A similar proof was used in lemma 30 of Ref.~\cite{Chiribella-informational}
in the special case where $\rho$ is the invariant state, and with
stronger assumptions, i.e.\ Ideal Compression, which is not assumed
here.

The effect $b$ that prepares $\alpha$ with probability $p_{*}$
can always be taken to be pure. Indeed, $\widetilde{b}$ is a \emph{pure}
effect that prepares the pure state $\widetilde{\alpha}$ on $\mathrm{A}$
with probability $\widetilde{p}$. But since $\widetilde{p}=p_{*}$,
then $\left(a|\widetilde{\alpha}\right)=1$. Therefore, by proposition~\ref{prop:uniqueness of state},
$\widetilde{\alpha}=\alpha$. This shows that $\alpha$ can always
be prepared with probability $p_{*}$ by using a pure effect on $\mathrm{B}$.

As a corollary we have the following:
\begin{cor}
\label{cor:equality p_*}Let $\rho\in\mathsf{St}_{1}\left(\mathrm{A}\right)$
be a state and let $\widetilde{\rho}\in\mathsf{St}_{1}\left(\mathrm{B}\right)$
be a complementary state of $\rho$. Let $p_{*}\left(\rho\right)$
and $p_{*}\left(\widetilde{\rho}\right)$ be defined like in theorem~\ref{thm:diagonalization},
for $\rho$ and $\widetilde{\rho}$ respectively. Then $p_{*}\left(\rho\right)=p_{*}\left(\widetilde{\rho}\right)$.\end{cor}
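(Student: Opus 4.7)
The plan is to obtain the equality by proving both inequalities, using the symmetry of purification: the pure state $\Psi\in\mathsf{PurSt}_1(\mathrm{A}\otimes\mathrm{B})$ which purifies $\rho$ is at the same time a purification of $\widetilde\rho$ on the purifying system $\mathrm{A}$, because $\mathrm{Tr}_{\mathrm{A}}\Psi=\widetilde\rho$ and symmetrically $\mathrm{Tr}_{\mathrm{B}}\Psi=\rho$.

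First, I would read off one inequality directly from Theorem~\ref{thm:diagonalization}. Applied to $\rho$ with purification $\Psi$, the theorem produces a pure state $\beta\in\mathsf{PurSt}_1(\mathrm{B})$ and a state $\tau\in\mathsf{St}_1(\mathrm{B})$ such that
\[
\widetilde\rho=p_*(\rho)\,\beta+\bigl(1-p_*(\rho)\bigr)\,\tau.
\]
Since $p_*(\widetilde\rho)$ is defined as the maximal weight of a pure component in any decomposition of $\widetilde\rho$, we immediately obtain $p_*(\widetilde\rho)\ge p_*(\rho)$.

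For the reverse inequality, I would apply Theorem~\ref{thm:diagonalization} again, this time with the roles of $\mathrm{A}$ and $\mathrm{B}$ exchanged: view $\widetilde\rho\in\mathsf{St}_1(\mathrm{B})$ as a state to be purified, with $\Psi$ as a purification on purifying system $\mathrm{A}$ (using the symmetry $\mathrm{A}\otimes\mathrm{B}\simeq\mathrm{B}\otimes\mathrm{A}$ of the composition). The theorem then yields a pure state $\alpha'\in\mathsf{PurSt}_1(\mathrm{A})$ and a state $\sigma'\in\mathsf{St}_1(\mathrm{A})$ such that the complementary state $\mathrm{Tr}_{\mathrm{B}}\Psi=\rho$ satisfies
\[
\rho=p_*(\widetilde\rho)\,\alpha'+\bigl(1-p_*(\widetilde\rho)\bigr)\,\sigma',
\]
which gives $p_*(\rho)\ge p_*(\widetilde\rho)$, and hence the desired equality.

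The only subtle point is to notice that $p_*(\widetilde\rho)$ is unambiguously defined even though $\widetilde\rho$ depends on the choice of purification: the axiom of Purification forces any two purifications of $\rho$ on the same system $\mathrm{B}$ to differ by a reversible channel $\mathcal{U}_{\mathrm{B}}$, so the associated complementary states differ by $\mathcal{U}_{\mathrm{B}}$, and the maximal pure weight is preserved under reversible channels. No significant obstacle is expected; the entire argument is a bookkeeping application of the previous theorem in both directions.
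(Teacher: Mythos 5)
Your argument is correct and coincides with the paper's own proof: both apply Theorem~\ref{thm:diagonalization} twice, once to $\rho$ with purification $\Psi$ to get $p_*(\widetilde\rho)\geq p_*(\rho)$, and once to $\widetilde\rho$ (with $\Psi$ as its purification and $\rho$ as the complementary state) to get the reverse inequality. Your closing remark on the invariance of $p_*(\widetilde\rho)$ under the reversible-channel freedom in choosing the purification is a correct but inessential addition, since the corollary fixes one complementary state.
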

\begin{proof}
By theorem~\ref{thm:diagonalization}, we know that there exists
a pure state $\beta\in\mathsf{PurSt}_{1}\left(\mathrm{B}\right)$
arising in a convex decomposition of $\widetilde{\rho}$ with probability
$p_{*}\left(\rho\right)$:
\[
\widetilde{\rho}=p_{*}\left(\rho\right)\beta+\left(1-p_{*}\left(\rho\right)\right)\tau,
\]
where $\tau$ is another state of system $\mathrm{B}$. Therefore
$p_{*}\left(\widetilde{\rho}\right)\geq p_{*}\left(\rho\right)$.
By theorem~\ref{thm:diagonalization} applied to $\widetilde{\rho}$,
we know that there is a pure state $\alpha'\in\mathsf{PurSt}_{1}\left(\mathrm{A}\right)$
arising in a convex decomposition of $\rho$ with probability $p_{*}\left(\widetilde{\rho}\right)$:
\[
\rho=p_{*}\left(\widetilde{\rho}\right)\alpha'+\left(1-p_{*}\left(\widetilde{\rho}\right)\right)\sigma',
\]
where $\sigma'\in\mathsf{St}\left(\mathrm{A}\right)$. By definition
of $p_{*}\left(\rho\right)$, we have $p_{*}\left(\rho\right)\geq p_{*}\left(\widetilde{\rho}\right)$,
whence we conclude that $p_{*}\left(\rho\right)=p_{*}\left(\widetilde{\rho}\right)$.
\end{proof}
Now we are ready to prove the uniqueness of the pure effect associated
with a pure state. The proof uses the following lemma (see lemma 29
of Ref.~\cite{Chiribella-informational}).
\begin{lem}
\label{lem:invariant}Let $\chi$ be the invariant state of system
$\mathrm{A}$ and let $\alpha$ be a normalized pure state. Then 
\[
p_{\mathrm{max}}:=p_{\alpha}=\max\left\{ p:\exists\:\sigma,\chi=p\alpha+\left(1-p\right)\sigma\right\} 
\]
does not depend on $\alpha$.\end{lem}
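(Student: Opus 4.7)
The plan is to exploit transitivity of reversible channels on pure states together with the invariance of $\chi$. Given two normalized pure states $\alpha,\alpha'\in\mathsf{PurSt}_1(\mathrm{A})$, I would invoke the transitivity proposition from subsection~\ref{sub:Consequences-of-Causality,} to produce a reversible channel $\mathcal{U}$ on $\mathrm{A}$ with $\alpha' = \mathcal{U}\alpha$. This channel is precisely the tool needed to transport a convex decomposition at $\alpha$ to a convex decomposition at $\alpha'$.

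The key step is then the following. Take any decomposition $\chi = p\alpha + (1-p)\sigma$ achieving $p = p_\alpha$, which exists because the maximum is attained (the set of pure states is compact by transitivity, the set of normalized states is compact, and the defining condition is closed). Apply $\mathcal{U}$ to both sides. Linearity of transformations on $\mathsf{St}_{\mathbb{R}}(\mathrm{A})$ gives $\mathcal{U}\chi = p\,\mathcal{U}\alpha + (1-p)\,\mathcal{U}\sigma$, and since $\chi$ is invariant, the left-hand side equals $\chi$, while $\mathcal{U}\alpha = \alpha'$ and $\mathcal{U}\sigma$ is again a normalized state (reversible channels preserve normalization). Hence $\chi = p\alpha' + (1-p)\,\mathcal{U}\sigma$ is an admissible decomposition for $\alpha'$, showing $p_{\alpha'} \geq p_\alpha$.

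Interchanging the roles of $\alpha$ and $\alpha'$ (using $\mathcal{U}^{-1}$, which is also a reversible channel on $\mathrm{A}$) yields $p_\alpha \geq p_{\alpha'}$, and the equality $p_\alpha = p_{\alpha'}$ follows. Since $\alpha,\alpha'$ were arbitrary pure states, $p_\alpha$ is independent of $\alpha$, which we may therefore call $p_{\mathrm{max}}$.

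No serious obstacle is expected: the whole argument is a short symmetry argument, reducing to the two inputs ``reversible channels act transitively on $\mathsf{PurSt}_1(\mathrm{A})$'' and ``$\chi$ is invariant''. The only subtlety worth flagging is the attainment of the maximum, which requires the compactness of $\mathsf{PurSt}_1(\mathrm{A})$ noted earlier in the paper; without it, one would have to argue with suprema and take a limit, but compactness makes this automatic.
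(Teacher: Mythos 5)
Your argument is correct and matches the standard proof: the paper itself defers this statement to lemma 29 of Ref.~\cite{Chiribella-informational}, whose proof is exactly this symmetry argument combining transitivity of reversible channels on pure states with the invariance of $\chi$. Your remark on attainment of the maximum via compactness of the pure states is a sensible addition but does not change the route.
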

\begin{prop}
For every normalized pure state $\alpha$ there exists a unique pure
effect $a$ such that $\left(a|\alpha\right)=1$.
\end{prop}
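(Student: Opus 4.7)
The plan is to use a purification of the invariant state $\chi_{\mathrm{A}}$ to turn the question of uniqueness of a pure effect on $\mathrm{A}$ into the already-established uniqueness of a pure state picked out by a pure effect (Proposition~\ref{prop:uniqueness of state}) on the purifying system $\mathrm{B}$.

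Existence being the content of the previous proposition, let $a,a'\in\mathsf{PurEff}(\mathrm{A})$ both satisfy $\left(a|\alpha\right)=\left(a'|\alpha\right)=1$; I want to show $a=a'$. Choose $\Xi\in\mathsf{PurSt}_{1}(\mathrm{A}\otimes\mathrm{B})$ with $\mathrm{Tr}_{\mathrm{B}}\Xi=\chi_{\mathrm{A}}$. Since $\chi_{\mathrm{A}}$ is completely mixed, Proposition~\ref{prop:faithful-effects} tells us that $\Xi$ is faithful for effects of $\mathrm{A}$, so it is enough to prove that the sub-normalized $\mathrm{B}$-states obtained by contracting $a$ or $a'$ against $\Xi$ coincide.

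By Lemma~\ref{lem:invariant} we may write $\chi_{\mathrm{A}}=p_{\max}\alpha+\left(1-p_{\max}\right)\sigma$, and then Theorem~\ref{thm:diagonalization} together with the remark just after its proof produces a pure effect $b_{*}\in\mathsf{PurEff}(\mathrm{B})$ such that $\left(\mathcal{I}_{\mathrm{A}}\otimes b_{*}\right)\Xi=p_{\max}\alpha$. Applying the pure effect $a$ on the $\mathrm{A}$-side of $\Xi$, Purity Preservation forces the result to be of the form $q_{a}\beta_{a}$ with $\beta_{a}\in\mathsf{PurSt}_{1}(\mathrm{B})$; running the chain of inequalities in the proof of Theorem~\ref{thm:diagonalization} pins down $q_{a}=p_{\max}$, and identically $\left(a'\otimes\mathcal{I}_{\mathrm{B}}\right)\Xi=p_{\max}\beta_{a'}$ with $\beta_{a'}\in\mathsf{PurSt}_{1}(\mathrm{B})$. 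Now I compute the scalar $\left(a\otimes b_{*}|\Xi\right)$ in two ways: contracting $b_{*}$ first gives $\left(a|p_{\max}\alpha\right)=p_{\max}$, while contracting $a$ first gives $p_{\max}\left(b_{*}|\beta_{a}\right)$; hence $\left(b_{*}|\beta_{a}\right)=1$, and symmetrically $\left(b_{*}|\beta_{a'}\right)=1$. Since $b_{*}$ is pure, Proposition~\ref{prop:uniqueness of state} says that the state achieving probability $1$ on $b_{*}$ is unique, so $\beta_{a}=\beta_{a'}$. Faithfulness of $\Xi$ then yields $a=a'$.

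The only delicate step is the equality $q_{a}=p_{\max}$. The lower bound $q_{a}\ge p_{\max}$ is immediate from $p_{\max}=\left(a\otimes b_{*}|\Xi\right)=q_{a}\left(b_{*}|\beta_{a}\right)\le q_{a}$. For the upper bound $q_{a}\le p_{\max}$ I pick a pure effect $\tilde{b}$ with $(\tilde{b}|\beta_{a})=1$ (existing by the previous proposition), write $\left(\mathcal{I}_{\mathrm{A}}\otimes\tilde{b}\right)\Xi=\tilde{p}\,\tilde{\alpha}$ with $\tilde{\alpha}$ pure by Purity Preservation, and use that $\tilde{\alpha}$ appears in a convex decomposition of $\chi_{\mathrm{A}}$ with weight $\tilde{p}$; Lemma~\ref{lem:invariant} then caps $\tilde{p}\le p_{\max}$, closing the chain $p_{\max}\le q_{a}\le\tilde{p}\le p_{\max}$ exactly as in the proof of Theorem~\ref{thm:diagonalization}.
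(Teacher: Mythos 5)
Your proof is correct, and it is essentially the argument the paper has in mind: the paper defers to theorem 8 of Ref.~\cite{Chiribella-informational}, whose uniqueness proof is exactly this scheme---apply both candidate effects to a purification of the invariant state $\chi$, use the pure steering effect from theorem~\ref{thm:diagonalization} (and its remark) together with lemma~\ref{lem:invariant} to force both induced states on the purifying system to give probability $1$ on that pure effect, identify them via proposition~\ref{prop:uniqueness of state}, and conclude by faithfulness of the purification (proposition~\ref{prop:faithful-effects}). Your reconstruction from the tools available in this paper, including the careful chain $p_{\max}\leq q_{a}\leq\widetilde{p}\leq p_{\max}$, is accurate and complete.
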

The proof is identical to the that of theorem 8 of Ref.~\cite{Chiribella-informational},
even though we are assuming fewer axioms.

We will denote by $\alpha^{\dagger}$ the unique pure effect associated
with the pure state $\alpha$, namely such that $\left(\alpha^{\dagger}|\alpha\right)=1$. 

We are able to establish a bijective correspondence between normalized
pure states and normalized pure effects. As a result, we obtain the
following corollary (cf.\ corollary 13 of Ref.~\cite{Chiribella-informational}).
\begin{cor}
For every pair of $a,a'\in\mathsf{PurEff}_{1}\left(\mathrm{A}\right)$,
there exists a reversible channel $\mathcal{U}$ on $\mathrm{A}$
such that $a'=a\mathcal{U}$.
\end{cor}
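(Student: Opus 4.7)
The plan is to combine the preceding two results: the bijection $\alpha \leftrightarrow \alpha^{\dagger}$ between $\mathsf{PurSt}_{1}(\mathrm{A})$ and $\mathsf{PurEff}_{1}(\mathrm{A})$, and the transitivity of reversible channels on pure states (the first proposition of Subsection~\ref{sub:Consequences-of-Causality,}). Given $a, a' \in \mathsf{PurEff}_1(\mathrm{A})$, I first pull them back along the dagger bijection to obtain the unique normalized pure states $\alpha, \alpha' \in \mathsf{PurSt}_{1}(\mathrm{A})$ with $a = \alpha^{\dagger}$ and $a' = (\alpha')^{\dagger}$.

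Next I use transitivity on pure states to produce a reversible channel $\mathcal{V}$ with $\alpha' = \mathcal{V}\alpha$, equivalently $\alpha = \mathcal{V}^{-1}\alpha'$. I then propose $\mathcal{U} := \mathcal{V}^{-1}$ and claim $a' = a\mathcal{U}$. To verify this, I need two things: (i) $a\mathcal{U}$ is itself a pure effect, and (ii) $(a\mathcal{U}|\alpha') = 1$. Point (ii) is immediate:
\[
(a\mathcal{U}|\alpha') = (a|\mathcal{V}^{-1}\alpha') = (a|\alpha) = (\alpha^{\dagger}|\alpha) = 1.
\]
Point (i) follows from Purity Preservation applied to the sequential composition of the pure effect $a$ with the channel $\mathcal{U}$, provided we know that reversible channels are pure. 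Once (i) and (ii) are in hand, the uniqueness clause of the previous proposition (applied to $\alpha'$) forces $a\mathcal{U} = (\alpha')^{\dagger} = a'$.

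The only point that requires a small justification is that the reversible channel $\mathcal{U}$ is pure, so that Purity Preservation indeed delivers purity of $a\mathcal{U}$. This is a standard consequence of the framework: if $\mathcal{U}$ admitted a non-trivial refinement $\mathcal{U} = \sum_j \mathcal{D}_j$, then $\mathcal{I} = \mathcal{U}^{-1}\mathcal{U} = \sum_j \mathcal{U}^{-1}\mathcal{D}_j$ would be a non-trivial refinement of the identity, contradicting the (known) purity of the identity channel under the axioms assumed here. This is the only subtle step; the rest of the argument is a direct chaining of transitivity, the dagger bijection, and Purity Preservation, which is why the paper remarks that the statement follows as an immediate corollary.
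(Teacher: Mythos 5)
Your argument is correct and follows essentially the route the paper intends (via its reference to corollary 13 of the CDP paper): pull $a,a'$ back to their unique associated pure states through the dagger bijection, use transitivity of reversible channels on pure states, note that $a\,\mathcal{U}$ is pure by Purity Preservation together with the purity of reversible channels (which, as you say, reduces to the purity of the identity, a known consequence of Purification), and conclude by the uniqueness of the pure effect giving probability $1$ on $\alpha'$. Nothing further is needed.
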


\section{Diagonalization of states\label{sec:Diagonalization-of-states}}

A \emph{diagonalization} of $\rho$ is a convex decomposition of $\rho$
into perfectly distinguishable pure states. The probabilities in such
a convex decomposition will be called the \emph{eigenvalues} of $\rho$.
Note that, since we are assuming the vector space $\mathsf{St}_{\mathbb{R}}\left(\mathrm{A}\right)$
to be finite-dimensional, diagonalizations of states will have a \emph{finite}
number of terms. Here we are not postulating the existence of perfectly
distinguishable pure states, but this will be a result of the present
set of axioms (see corollary~\ref{cor:existence pure perfectly}).

The starting point for diagonalization is the following 
\begin{prop}
\label{prop:effect diagonalization}Consider $\rho=p_{*}\alpha+\left(1-p_{*}\right)\sigma$,
where $p_{*}$ is defined in theorem~\ref{thm:diagonalization}.
We have $\left(\alpha^{\dagger}|\rho\right)=p_{*}$.\end{prop}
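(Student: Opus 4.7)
The plan is to reduce the claim to the computation already carried out inside the proof of Theorem~\ref{thm:diagonalization}. Purify $\rho$ to some $\Psi\in\mathsf{PurSt}_{1}(\mathrm{A}\otimes\mathrm{B})$. Since, by the uniqueness proposition just established, the pure effect on $\mathrm{A}$ taking value $1$ on $\alpha$ is unique, the pure effect ``$a$'' that appears in the proof of Theorem~\ref{thm:diagonalization} can be identified with $\alpha^{\dagger}$.

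Next I apply $\alpha^{\dagger}$ to the $\mathrm{A}$-wire of $\Psi$. By Purity Preservation the resulting subnormalized state on $\mathrm{B}$ is proportional to a normalized pure state: this is precisely Eq.~\eqref{eq:steering diagonalization B}, which reads $(\alpha^{\dagger}\otimes\mathcal{I}_{\mathrm{B}})\Psi = q\,\beta$ for some $\beta\in\mathsf{PurSt}_{1}(\mathrm{B})$ and some $q\in[0,1]$. The chain of inequalities $p_{*}\le q\le\widetilde{p}\le p_{*}$ established there pins down $q=p_{*}$. Closing the $\mathrm{B}$-wire with the deterministic effect $u_{\mathrm{B}}$ then yields, on the left, $(\alpha^{\dagger}\otimes u_{\mathrm{B}})\Psi = (\alpha^{\dagger}|\mathrm{Tr}_{\mathrm{B}}\Psi) = (\alpha^{\dagger}|\rho)$ by Causality, and on the right $q\,\mathrm{Tr}\,\beta = p_{*}\cdot 1 = p_{*}$ since $\beta$ is normalized. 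This gives $(\alpha^{\dagger}|\rho)=p_{*}$, as required.

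A more direct, ``self-contained'' route would instead expand $(\alpha^{\dagger}|\rho)=p_{*}+(1-p_{*})(\alpha^{\dagger}|\sigma)$ by linearity of effects and then try to establish $(\alpha^{\dagger}|\sigma)=0$. The inequality $(\alpha^{\dagger}|\rho)\ge p_{*}$ is then free, but the reverse inequality --- the only genuine content of the statement --- still forces one to translate the definition of $p_{*}$ as a maximum over pure convex components into a statement about effects. I expect this to be the main obstacle in any direct approach, and it is precisely what the purification argument above bypasses by reading $p_{*}$ off the complementary state $\widetilde{\rho}$ via Theorem~\ref{thm:diagonalization}.
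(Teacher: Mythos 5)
Your proof is correct and follows essentially the same route as the paper's own proof: it reuses Eq.~\eqref{eq:steering diagonalization B} with $q=p_{*}$ from the proof of theorem~\ref{thm:diagonalization} (with $a$ identified with $\alpha^{\dagger}$ by uniqueness) and then closes the $\mathrm{B}$-wire with the deterministic effect to read off $\left(\alpha^{\dagger}|\rho\right)=p_{*}$. No gaps to report.
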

\begin{proof}
Let $\Psi_{\mathrm{AB}}$ be a purification of $\rho$. Then, the
proof of theorem~\ref{thm:diagonalization} yields the following
equality\[
\begin{aligned}\Qcircuit @C=1em @R=.7em @!R { & \multiprepareC{1}{\Psi}    & \qw \poloFantasmaCn{\rA} &  \measureD{\alpha^{\dagger}}   \\  & \pureghost{\Psi}    & \qw \poloFantasmaCn{\rB}  &   \qw }\end{aligned}~=p_{*}\!\!\!\!\begin{aligned}\Qcircuit @C=1em @R=.7em @!R { & \prepareC{\beta}    & \qw \poloFantasmaCn{\rB} &  \qw   }\end{aligned}
\]By applying the deterministic effect on both sides of the above equation,
we obtain\[
p_{*}=~p_{*}\!\!\!\!\begin{aligned}\Qcircuit @C=1em @R=.7em @!R { & \prepareC{\beta}    & \qw \poloFantasmaCn{\rB} &  \measureD{u}   }\end{aligned}~=\!\!\!\! \begin{aligned}\Qcircuit @C=1em @R=.7em @!R { & \multiprepareC{1}{\Psi}    & \qw \poloFantasmaCn{\rA} &  \measureD{\alpha^{\dagger}}   \\  & \pureghost{\Psi}    & \qw \poloFantasmaCn{\rB}  &  \measureD{u} }\end{aligned}~=\!\!\!\! \begin{aligned}\Qcircuit @C=1em @R=.7em @!R { & \prepareC{\rho}    & \qw \poloFantasmaCn{\rA} &  \measureD{\alpha^{\dagger}}   }\end{aligned}~.
\]This shows that $\left(\alpha^{\dagger}|\rho\right)=p_{*}$.
\end{proof}
The following proposition enables us to define $p_{*}$ in an alternative,
and perhaps simpler, way starting from measurements. 
\begin{prop}
Let $\rho\in\mathsf{St}_{1}\left(\mathrm{A}\right)$. Define $p^{*}:=\max_{a\in\mathsf{PurEff}_{1}\left(\mathrm{A}\right)}\left(a|\rho\right)$.
Then $p^{*}=p_{*}$.\end{prop}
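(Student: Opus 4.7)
The plan is to prove the two inequalities $p^{*}\ge p_{*}$ and $p^{*}\le p_{*}$ separately; each is a short consequence of the machinery already developed. For $p^{*}\ge p_{*}$, I would simply pick a pure state $\alpha\in\mathsf{PurSt}_{1}(\mathrm{A})$ realising the maximum in the definition of $p_{*}$ (which exists because $\mathsf{PurSt}_{1}(\mathrm{A})$ is compact), consider the associated pure normalized effect $\alpha^{\dagger}\in\mathsf{PurEff}_{1}(\mathrm{A})$ given by the bijection established just before this proposition, and invoke Proposition~\ref{prop:effect diagonalization} to obtain $(\alpha^{\dagger}|\rho)=p_{*}$. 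Since $\alpha^{\dagger}$ is a legitimate candidate for the maximum in $p^{*}$, this immediately yields $p^{*}\ge p_{*}$.

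For the reverse inequality I would take an arbitrary $a\in\mathsf{PurEff}_{1}(\mathrm{A})$, put $q:=(a|\rho)$, and try to exhibit a convex decomposition of $\rho$ in which some pure state appears with weight at least $q$; this would force $q\le p_{*}$ by definition. The method is to recycle the two-sided steering argument used in the proof of Theorem~\ref{thm:diagonalization}. Fix a purification $\Psi\in\mathsf{PurSt}_{1}(\mathrm{A}\otimes\mathrm{B})$ of $\rho$. Applying $a$ on $\mathrm{A}$ leaves, by Purity Preservation, a pure (sub-normalized) state on $\mathrm{B}$ of the form $q'\beta$ with $\beta\in\mathsf{PurSt}_{1}(\mathrm{B})$; tracing out $\mathrm{B}$ via Causality gives $q'=(a|\rho)=q$.

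Next I would take the unique pure effect $\widetilde{b}:=\beta^{\dagger}$ with $(\widetilde{b}|\beta)=1$ and apply it on $\mathrm{B}$ to $\Psi$. Purity Preservation produces a pure state on $\mathrm{A}$ of the form $\widetilde{p}\,\widetilde{\alpha}$ with $\widetilde{\alpha}\in\mathsf{PurSt}_{1}(\mathrm{A})$ and $\widetilde{p}\in[0,1]$, and the steering property then implies that $\widetilde{\alpha}$ is contained in $\rho$ with weight $\widetilde{p}$, so $\widetilde{p}\le p_{*}$ by the very definition of $p_{*}$. The closing step is to evaluate the scalar $(a\otimes\widetilde{b}|\Psi)$ in two orders: applying $a$ first gives $q(\widetilde{b}|\beta)=q$, while applying $\widetilde{b}$ first gives $\widetilde{p}(a|\widetilde{\alpha})\le\widetilde{p}$. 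Hence $q\le\widetilde{p}\le p_{*}$, and since $a$ was arbitrary, $p^{*}\le p_{*}$.

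There is no real obstacle beyond noticing that the right quantity to compute in two different ways is the joint scalar $(a\otimes\widetilde{b}|\Psi)$; once that is chosen, everything reduces to a direct invocation of Purification, Purity Preservation, the steering property, and the pure-state/pure-effect bijection already in hand. The only minor point to be careful about is that Causality is used, silently, to identify the marginal of $\Psi$ on $\mathrm{A}$ with $\rho$ when extracting $q'=q$, and to identify the marginal on $\mathrm{B}$ with the complementary state $\widetilde{\rho}$ when extracting $\widetilde{p}$.
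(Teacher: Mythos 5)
Your proposal is correct and takes essentially the same route as the paper: $p^{*}\geq p_{*}$ follows from proposition~\ref{prop:effect diagonalization} via the effect $\alpha^{\dagger}$, and $p^{*}\leq p_{*}$ follows by applying the pure effect to a purification $\Psi$ and bounding the weight of the steered pure state. The only (harmless) difference is that where the paper simply invokes corollary~\ref{cor:equality p_*} to conclude $q\leq p_{*}$, you re-derive that bound inline through the two-sided evaluation of $\left(a\otimes\widetilde{b}\right|\left.\Psi\right)$, i.e.\ by repeating the argument already used in the proof of theorem~\ref{thm:diagonalization}.
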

\begin{proof}
By proposition~\ref{prop:effect diagonalization}, clearly one has
$p^{*}\geq p_{*}$. Since $p^{*}$ is the maximum, it is achieved
by some $a^{*}\in\mathsf{PurEff}_{1}\left(\mathrm{A}\right)$. Therefore,\[
p^{*} ~= \!\!\!\!\begin{aligned}\Qcircuit @C=1em @R=.7em @!R { & \prepareC{\rho}    & \qw \poloFantasmaCn{\rA} &  \measureD{a^{*}}   }\end{aligned}~=\!\!\!\!\begin{aligned}\Qcircuit @C=1em @R=.7em @!R { & \multiprepareC{1}{\Psi}    & \qw \poloFantasmaCn{\rA} &  \measureD{a^{*}}   \\  & \pureghost{\Psi}    & \qw \poloFantasmaCn{\rB}  &   \measureD{u} }\end{aligned}~,
\]where $\Psi_{\mathrm{AB}}$ is a purification of $\rho$. Now, $a^{*}$
prepares a pure state $\beta^{*}$ on $\mathrm{B}$ with probability
$q\leq p_{*}$ (cf.\ corollary~\ref{cor:equality p_*}).\[
p^{*} =\!\!\!\!\begin{aligned}\Qcircuit @C=1em @R=.7em @!R { & \multiprepareC{1}{\Psi}    & \qw \poloFantasmaCn{\rA} &  \measureD{a^{*}}   \\  & \pureghost{\Psi}    & \qw \poloFantasmaCn{\rB}  &   \measureD{u} }\end{aligned}~=q \!\!\!\!\begin{aligned}\Qcircuit @C=1em @R=.7em @!R { & \prepareC{\beta^{*}}    & \qw \poloFantasmaCn{\rB} &  \measureD{u}   }\end{aligned}~= q
\]We then obtain $p^{*}=q\leq p_{*}$, whence, in fact, $p^{*}=p_{*}$.
\end{proof}
The result expressed in proposition~\ref{prop:effect diagonalization}
has important consequences about diagonalization. Since $\left(\alpha^{\dagger}|\rho\right)=p_{*}$,
if $\rho=p_{*}\alpha+\left(1-p_{*}\right)\sigma$, then $\left(\alpha^{\dagger}|\sigma\right)=0$,
provided\footnote{If $p_{*}=1$, then $\rho$ is pure, and we are done. Therefore, without
loss of generality we can assume $p_{*}\neq1$.} $p_{*}\neq1$. Besides, if $\left(\alpha^{\dagger}|\sigma\right)=0$,
then $\left(\alpha^{\dagger}|\tau\right)=0$ for any state $\tau$
contained in $\sigma$. As a consequence, we have the following important
corollary, which guarantees the existence of perfectly distinguishable
pure states.
\begin{cor}
\label{cor:existence pure perfectly}Every pure state is perfectly
distinguishable from some other pure state.\end{cor}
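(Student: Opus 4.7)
My plan is to construct, for an arbitrary normalized pure state $\alpha \in \mathsf{PurSt}_1(\mathrm{A})$, a pure partner $\alpha' \in \mathsf{PurSt}_1(\mathrm{A})$ together with a two-outcome observation-test $\{\alpha^\dagger, u - \alpha^\dagger\}$ that distinguishes them. Transitivity of reversible channels will reduce the task to finding a single concrete pair of perfectly distinguishable pure states, after which a generic $\alpha$ is handled by conjugation with a reversible channel.

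First I would exhibit such a reference pair inside the invariant state $\chi$ (which exists by the earlier proposition and is completely mixed). I may assume $\chi$ is not pure, since otherwise transitivity would force every pure state to equal $\chi$ and the system would collapse to one in which the statement is vacuous. Applying theorem \ref{thm:diagonalization} to $\chi$ writes $\chi = p_{*}\alpha_{0} + (1 - p_{*})\sigma$ with $\alpha_{0}$ pure and $0 < p_{*} < 1$. Proposition \ref{prop:effect diagonalization} gives $(\alpha_{0}^\dagger|\chi) = p_{*}$, and expanding this along the decomposition forces $(\alpha_{0}^\dagger|\sigma) = 0$. A second application of theorem \ref{thm:diagonalization}, this time to $\sigma$, peels off a pure state $\beta_{0}$ contained in $\sigma$; since $\alpha_{0}^\dagger$ annihilates $\sigma$ it must also annihilate every state contained in $\sigma$, so $(\alpha_{0}^\dagger|\beta_{0}) = 0$.

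Next I would materialize the distinguishing test. Because $(\alpha_{0}^\dagger|\alpha_{0}) = 1$, proposition \ref{prop:non-disturbing measurement} supplies a transformation $\mathcal{T}$ on $\mathrm{A}$ with $u\mathcal{T} = \alpha_{0}^\dagger$. Every transformation belongs to some test; coarse-graining all other outcomes into a single transformation $\mathcal{T}'$ yields a binary test whose deterministic-effect contractions $\{\alpha_{0}^\dagger, u\mathcal{T}'\}$ sum to $u$, so by proposition \ref{prop:observation-test purification} they form an observation-test and $u - \alpha_{0}^\dagger = u\mathcal{T}'$ is a physical effect; this test perfectly distinguishes $\alpha_{0}$ from $\beta_{0}$. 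To transport the construction to a generic $\alpha$, transitivity gives a reversible $\mathcal{U}$ with $\alpha = \mathcal{U}\alpha_{0}$. Set $\alpha' := \mathcal{U}\beta_{0}$, which is pure by Purity Preservation. The composite $\alpha_{0}^\dagger\mathcal{U}^{-1}$ is a pure effect (again Purity Preservation) taking value $1$ on $\alpha$, so by uniqueness of the pure dagger effect $\alpha^\dagger = \alpha_{0}^\dagger\mathcal{U}^{-1}$; then $(\alpha^\dagger|\alpha') = (\alpha_{0}^\dagger|\beta_{0}) = 0$, and $\{\alpha^\dagger, u - \alpha^\dagger\}$ perfectly distinguishes $\alpha$ from $\alpha'$.

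The main obstacle will be manufacturing the partner pure state $\beta_{0}$: nothing in the toolkit hands us a pure state annihilated by $\alpha_{0}^\dagger$ directly, which is why I iterate theorem \ref{thm:diagonalization} --- once against $\chi$ to peel off $\alpha_{0}$ and expose a residual $\sigma$ on which $\alpha_{0}^\dagger$ vanishes, and once more against $\sigma$ to extract a pure summand that inherits the vanishing. The subsidiary step, promoting $u - \alpha^\dagger$ from a formal linear functional to a physical effect, is a routine but essential use of proposition \ref{prop:non-disturbing measurement} together with coarse-graining, and is what turns dagger-effect orthogonality into an actual distinguishing test.
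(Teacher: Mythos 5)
Your argument is correct, but it reaches a generic pure state by a different route than the paper. The paper's proof applies lemma~\ref{lem:invariant} directly: since the maximal weight $p_{\max}$ with which \emph{any} pure state $\alpha$ appears in $\chi$ is independent of $\alpha$, the decomposition $\chi=p_{\max}\alpha+(1-p_{\max})\sigma$ already attains the maximum $p_*$ of theorem~\ref{thm:diagonalization} for the arbitrary $\alpha$ one started with, so proposition~\ref{prop:effect diagonalization} gives $\left(\alpha^{\dagger}|\sigma\right)=0$ at once and any pure state contained in $\sigma$ serves as the partner---no transport step is needed. You instead extract one reference orthogonal pair $(\alpha_0,\beta_0)$ from $\chi$ and carry it to a generic $\alpha$ via transitivity, uniqueness of the pure effect $\alpha^{\dagger}$, and Purity Preservation (including purity of reversible channels, a step the paper itself uses right after Pure Sharpness); this trades lemma~\ref{lem:invariant} for the symmetry transport, and both trades are legitimate with the results available at that point. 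Two smaller remarks: your detour through proposition~\ref{prop:non-disturbing measurement} to show that $u-\alpha^{\dagger}$ is an effect is more than is needed---since $\alpha^{\dagger}$ belongs to some observation-test, coarse-graining the remaining outcomes and invoking proposition~\ref{prop:characterization observation-tests} already gives $u-\alpha^{\dagger}$ as an effect (and the paper silently takes this for granted); and like the paper, you implicitly assume that a mixed state contains at least one pure state with nonzero weight (your second ``peel-off'' step), as well as the non-degenerate case $p_*<1$, which you at least flag explicitly.
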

\begin{proof}
Let us consider the invariant state $\chi$. For every normalized
pure state $\alpha$, we have $\chi=p_{\mathrm{max}}\alpha+\left(1-p_{\mathrm{max}}\right)\sigma$
(see lemma~\ref{lem:invariant}), where $\sigma$ is another normalized
state. By proposition~\ref{prop:effect diagonalization}, $\left(\alpha^{\dagger}|\sigma\right)=0$.
If $\sigma$ is pure, then $\alpha$ is perfectly distinguishable
from $\sigma$ by means of the observation-test $\left\{ \alpha^{\dagger},u-\alpha^{\dagger}\right\} $.
If $\sigma$ is mixed, than $\left(\alpha^{\dagger}|\psi\right)=0$
for every pure state $\psi$ contained in $\sigma$. Therefore $\alpha$
is perfectly distinguishable from $\psi$ again via the observation-test
$\left\{ \alpha^{\dagger},u-\alpha^{\dagger}\right\} $. 
\end{proof}
It is quite remarkable that the existence of perfectly distinguishable
(pure) states pops out from the axioms, without being assumed from
the start. In principle, the general theories considered in our framework
might not have had any perfectly distinguishable states at all!

\subsection{The diagonalization theorem}
\begin{thm}
\label{thm:diago}In a theory satisfying Causality, Purity Preservation,
Purification, and Pure Sharpness every state of every system can be
diagonalized.
\end{thm}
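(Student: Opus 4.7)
The plan is to diagonalize $\rho$ by iteratively peeling off its maximum-weight pure component and then bundling the result into a perfectly distinguishing observation-test. Set $\rho_{0}:=\rho$; given a normalized $\rho_{k}$, invoke theorem~\ref{thm:diagonalization} to write $\rho_{k}=p_{*,k}\alpha_{k+1}+(1-p_{*,k})\rho_{k+1}$, with $\alpha_{k+1}$ a normalized pure state, and halt as soon as $p_{*,k}=1$ (so that $\rho_{k}$ is itself pure). Expanding the recursion yields a convex decomposition $\rho=\sum_{i=1}^{n}\tilde p_{i}\alpha_{i}$ with $\tilde p_{i}=p_{*,i-1}\prod_{j<i}(1-p_{*,j-1})$ and $\sum_{i}\tilde p_{i}=1$.

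Proposition~\ref{prop:effect diagonalization} feeds orthogonality into this scheme: at every step $(\alpha_{k+1}^{\dagger}|\rho_{k+1})=0$, and because any state contained in $\rho_{k+1}$ must annihilate $\alpha_{k+1}^{\dagger}$, one obtains the lower-triangular pairing $(\alpha_{i}^{\dagger}|\alpha_{j})=\delta_{ij}$ for $i\leq j$ (each $\alpha_{j}$ with $j>i$ sits in the chain of residuals below $\rho_{i}$). Testing a putative linear relation $\sum_{i}c_{i}\alpha_{i}=0$ against $\alpha_{1}^{\dagger},\alpha_{2}^{\dagger},\ldots$ in order successively forces each $c_{i}=0$, so the extracted pure states are linearly independent, and the standing finite-dimensionality of $\mathsf{St}_{\mathbb{R}}(\mathrm{A})$ guarantees termination after $n\leq\dim\mathsf{St}_{\mathbb{R}}(\mathrm{A})$ steps.

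The main obstacle is upgrading this convex decomposition into an honest diagonalization by exhibiting an observation-test $\{a_{j}\}_{j=1}^{n}$ with $(a_{j}|\alpha_{i})=\delta_{ij}$, since the pure effects $\alpha_{i}^{\dagger}$ alone supply only the lower-triangular half of the orthogonality. I propose to handle this by induction on $n$, with a trivial base $n=1$. For the inductive step, apply proposition~\ref{prop:characterization of effects} to the observation-test $\{\alpha_{1}^{\dagger},u-\alpha_{1}^{\dagger}\}$ to obtain pure transformations $\mathcal{T}_{1},\mathcal{T}_{1}'\in\mathsf{Transf}(\mathrm{A},\mathrm{B})$ with $\alpha_{1}^{\dagger}=u_{\mathrm{B}}\mathcal{T}_{1}$ and $u-\alpha_{1}^{\dagger}=u_{\mathrm{B}}\mathcal{T}_{1}'$. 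The identity $(u-\alpha_{1}^{\dagger}|\alpha_{1})=0$ forces $\mathcal{T}_{1}'\alpha_{1}=0$, whereas for $i\geq 2$ the state $\mathcal{T}_{1}'\alpha_{i}$ is normalized (because $(\alpha_{1}^{\dagger}|\alpha_{i})=0$) and pure (by purity preservation). Hence $\mathcal{T}_{1}'\rho_{1}$ on $\mathrm{B}$ decomposes into $n-1$ pure components, admits a distinguishing test $\{a_{i}'\}_{i\geq 2}$ by induction, and pulls back to $a_{1}:=\alpha_{1}^{\dagger}$, $a_{i}:=a_{i}'\mathcal{T}_{1}'$ for $i\geq 2$. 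Proposition~\ref{prop:observation-test purification} then verifies that $\sum_{i}a_{i}=u$, and the relations $(a_{j}|\alpha_{i})=\delta_{ij}$ follow from $\mathcal{T}_{1}'\alpha_{1}=0$ together with the inductive distinguishability on $\mathrm{B}$.

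The subtle point I expect to be hardest is justifying the induction itself: one must transport the lower-triangular property through $\mathcal{T}_{1}'$ and show that the iterative decomposition of $\mathcal{T}_{1}'\rho_{1}$ is genuinely shorter than that of $\rho$. This will rely on re-invoking linear independence inside $\mathsf{St}_{\mathbb{R}}(\mathrm{B})$, together with the observation that $\mathcal{T}_{1}'\rho_{1}$ lies in the linear span of the restricted family $\{\mathcal{T}_{1}'\alpha_{i}\}_{i\geq 2}$.
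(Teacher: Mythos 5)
Your greedy decomposition via theorem~\ref{thm:diagonalization}, the triangular orthogonality $(\alpha_i^\dagger|\alpha_j)=0$ for $j>i$ extracted from proposition~\ref{prop:effect diagonalization}, and the termination argument via linear independence in $\mathsf{St}_{\mathbb{R}}(\mathrm{A})$ all match the paper's proof (your independence argument is in fact a more explicit version of the paper's finite-dimensionality remark). The gap is in the inductive step that is meant to upgrade the triangular orthogonality to a perfectly distinguishing observation-test. Proposition~\ref{prop:characterization of effects} only tells you that $\mathcal{T}_1'$ is \emph{some} pure transformation with $u_{\mathrm{B}}\mathcal{T}_1'=u-\alpha_1^\dagger$; beyond the facts that $\mathcal{T}_1'\alpha_1=0$ and that each $\beta_i:=\mathcal{T}_1'\alpha_i$ ($i\ge 2$) is normalized and pure, you have no control over the image family. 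Your induction, however, needs $\{\beta_i\}_{i\ge 2}$ to carry again the structure the hypothesis requires, namely $(\beta_i^\dagger|\beta_j)=0$ for $j>i$ (or to arise from the greedy procedure run on $\mathrm{B}$), and nothing in the four axioms delivers this: the statement that a pure transformation which occurs with unit probability on a set of states preserves the dagger-pairing is an isometry-like property that holds in quantum theory but is not among the consequences established here. The repair you sketch---linear independence of the $\beta_i$ plus span considerations---cannot close the hole: linear independence (itself not immediate, since $\mathcal{T}_1'$ annihilates $\alpha_1$ and can in principle collapse other directions) says nothing about perfect distinguishability, which is exactly what is at stake; and if instead you re-run the greedy algorithm on $\mathcal{T}_1'\rho_1$, the pure states it produces need not be the $\beta_i$, so the pulled-back effects $a_i'\mathcal{T}_1'$ need not satisfy $(a_i'\mathcal{T}_1'|\alpha_j)=\delta_{ij}$.

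The fix is to keep the ``remainder'' on system $\mathrm{A}$ and make it non-disturbing rather than merely pure: since $(u-\alpha_1^\dagger|\rho_1)=1$, proposition~\ref{prop:non-disturbing measurement} provides $\mathcal{T}$ with $u\mathcal{T}=u-\alpha_1^\dagger$ and $\mathcal{T}=_{\rho_1}\mathcal{I}$, hence $\mathcal{T}\alpha_i=\alpha_i$ for all $i\ge 2$ and $\mathcal{T}\alpha_1=0$. With this choice your pullback construction goes through verbatim: $a_1:=\alpha_1^\dagger$, $a_i:=a_i'\mathcal{T}$, the triangular structure is transported trivially because the states $\alpha_i$, $i\ge2$, are literally unchanged, the sum is $u$, and proposition~\ref{prop:observation-test purification} certifies an observation-test. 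This corrected induction is essentially the content of the paper's lemma~\ref{lem:distinguishable}, which packages the same idea as an adaptive protocol of binary tests $\{\mathcal{A}_i,\mathcal{A}_i^{\perp}\}$ in which $\mathcal{A}_i^{\perp}$ never occurs on $\alpha_i$ and leaves every $\alpha_j$ with $j>i$ undisturbed.
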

The proof uses the following lemma, which provides a condition for
the perfect distinguishability of a set of pure states:
\begin{lem}
\label{lem:distinguishable}If the pure states $\left\{ \alpha_{i}\right\} _{i=1}^{n}$
satisfy the condition $\left(\alpha_{i}^{\dagger}|\alpha_{j}\right)=0$
for every $j>i$, they are perfectly distinguishable.\end{lem}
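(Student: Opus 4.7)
The plan is to proceed by induction on $n$. For $n=1$ there is nothing to show, and for $n=2$ the observation-test $\{\alpha_1^{\dagger},u-\alpha_1^{\dagger}\}$ already distinguishes $\alpha_1$ from $\alpha_2$, since $(\alpha_1^{\dagger}|\alpha_1)=1$ and $(\alpha_1^{\dagger}|\alpha_2)=0$ by hypothesis.

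For the inductive step, the strategy is to peel off $\alpha_1$ and push the remaining states forward along the ``complementary'' branch of a pure test. First, set $a_1:=\alpha_1^{\dagger}$. Since $\alpha_1^{\dagger}+(u-\alpha_1^{\dagger})=u$, Proposition~\ref{prop:observation-test purification} guarantees that $\{\alpha_1^{\dagger},u-\alpha_1^{\dagger}\}$ is an observation-test. By Proposition~\ref{prop:characterization of effects}, this observation-test is realized by a pure test $\{\mathcal{A}_1,\mathcal{A}_2\}\subset\mathsf{PurTransf}(\mathrm{A},\mathrm{B})$ for some system $\mathrm{B}$, with $\alpha_1^{\dagger}=u_{\mathrm{B}}\mathcal{A}_1$ and $u-\alpha_1^{\dagger}=u_{\mathrm{B}}\mathcal{A}_2$. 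For each $j\ge 2$ define $\alpha_j':=\mathcal{A}_2\alpha_j$; since $u_{\mathrm{B}}\mathcal{A}_2\alpha_j=(u-\alpha_1^{\dagger}|\alpha_j)=1-0=1$, and $\mathcal{A}_2$ is pure, Purity Preservation yields that each $\alpha_j'$ is a normalized pure state on $\mathrm{B}$.

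The key identity is $\alpha_j^{\dagger}=\alpha_j'^{\dagger}\mathcal{A}_2$ for all $j\ge 2$. Indeed, $\alpha_j'^{\dagger}\mathcal{A}_2$ is pure (Purity Preservation, composition of two pure transformations), and $(\alpha_j'^{\dagger}\mathcal{A}_2|\alpha_j)=(\alpha_j'^{\dagger}|\alpha_j')=1$, so the uniqueness clause of the pure effect associated with a pure state forces $\alpha_j'^{\dagger}\mathcal{A}_2=\alpha_j^{\dagger}$. Consequently, for $2\le i<j\le n$,
\[
(\alpha_i'^{\dagger}|\alpha_j')=(\alpha_i'^{\dagger}\mathcal{A}_2|\alpha_j)=(\alpha_i^{\dagger}|\alpha_j)=0,
\]
so the pure states $\{\alpha_j'\}_{j=2}^{n}$ satisfy the same hypothesis as the original family, on a new system. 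The inductive hypothesis then produces an observation-test $\{b_j\}_{j=2}^{n}\subset\mathsf{Eff}(\mathrm{B})$ with $(b_j|\alpha_k')=\delta_{jk}$ for $j,k\in\{2,\dots,n\}$.

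Finally, pull this test back by $\mathcal{A}_2$: set $a_j:=b_j\mathcal{A}_2$ for $j\ge 2$. For $k=1$, $u_{\mathrm{B}}\mathcal{A}_2\alpha_1=(u-\alpha_1^{\dagger}|\alpha_1)=0$, which forces $\mathcal{A}_2\alpha_1=0$ and hence $(a_j|\alpha_1)=0$; for $k\ge 2$, $(a_j|\alpha_k)=(b_j|\alpha_k')=\delta_{jk}$. Together with $(a_1|\alpha_k)=\delta_{1k}$, this gives $(a_i|\alpha_j)=\delta_{ij}$ for all $i,j$. Moreover,
\[
\sum_{i=1}^{n}a_i=\alpha_1^{\dagger}+\Bigl(\sum_{j=2}^{n}b_j\Bigr)\mathcal{A}_2=u_{\mathrm{B}}\mathcal{A}_1+u_{\mathrm{B}}\mathcal{A}_2=u,
\]
so Proposition~\ref{prop:observation-test purification} certifies that $\{a_i\}_{i=1}^{n}$ is a bona fide observation-test, completing the induction. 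The main obstacle is the key identity $\alpha_j^{\dagger}=\alpha_j'^{\dagger}\mathcal{A}_2$: without Purity Preservation the composed effect need not be pure, and without uniqueness of the pure effect associated with a pure state one could not transport the one-sided orthogonality relations from $\mathrm{A}$ to $\mathrm{B}$.
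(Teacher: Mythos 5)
Your proof is correct, but it takes a genuinely different route from the paper's. The paper does not build the distinguishing observation-test by induction: it invokes proposition~\ref{prop:non-disturbing measurement} to turn each effect $u-\alpha_i^{\dagger}$, which occurs with probability one on the uniform mixture $\rho_i$ of the later states, into a transformation $\mathcal{A}_i^{\perp}$ satisfying $\mathcal{A}_i^{\perp}=_{\rho_i}\mathcal{I}$ (so $\mathcal{A}_i^{\perp}\alpha_j=\alpha_j$ for $j>i$, while $\mathcal{A}_i^{\perp}$ never occurs on $\alpha_i$), and then runs the binary tests $\left\{\mathcal{A}_i,\mathcal{A}_i^{\perp}\right\}$ adaptively in sequence, identifying the state by the first index at which $\mathcal{A}_i$ fires. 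Your argument instead dilates the single observation-test $\left\{\alpha_1^{\dagger},u-\alpha_1^{\dagger}\right\}$ into a pure test via proposition~\ref{prop:characterization of effects}, pushes the remaining states through the pure branch $\mathcal{A}_2$, transfers the one-sided orthogonality relations to the output system via the identity $\alpha_j^{\dagger}=\alpha_j'^{\dagger}\mathcal{A}_2$ (secured by Purity Preservation plus the uniqueness of the pure effect associated with a pure state), and pulls back the inductively obtained test, certifying it with propositions~\ref{prop:characterization observation-tests} and \ref{prop:observation-test purification}. Both arguments use only results the paper establishes before the lemma, so there is no circularity. What each buys: the paper's proof is a non-demolition filtering protocol that avoids any appeal to the uniqueness of the dagger effect, whereas yours produces directly a single non-adaptive observation-test $\left\{a_i\right\}_{i=1}^{n}$ with $\left(a_i|\alpha_j\right)=\delta_{ij}$, exactly matching the definition of perfect distinguishability, at the price of leaning on the pure-effect uniqueness and the pure realization of observation-tests. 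The only points worth making explicit in a polished write-up are the ones you already gesture at: that $\mathcal{A}_2\alpha_1$ is annihilated by every effect because $\left(u_{\mathrm{B}}|\mathcal{A}_2\alpha_1\right)=0$ in a causal theory, and that $\sum_{j\geq2}b_j=u_{\mathrm{B}}$ follows from proposition~\ref{prop:characterization observation-tests}; neither is a gap.
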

\begin{proof}
By hypothesis, the observation-test $\left\{ \alpha_{i}^{\dagger},u-\alpha_{i}^{\dagger}\right\} $
distinguishes perfectly between $\alpha_{i}$ and all the other pure
states $\alpha_{j}$ with $j>i$. Equivalently, the test distinguishes
perfectly between $\alpha_{i}$ and the mixed state $\rho_{i}:=\frac{1}{n-i}\sum_{j>i}\alpha_{j}$.
As a result, we have the condition $\left(u-\alpha_{i}^{\dagger}|\rho_{i}\right)=1$.
Applying proposition~\ref{prop:non-disturbing measurement}, we can
construct a transformation $\mathcal{A}_{i}^{\perp}$, which occurs
with the same probability as $u-\alpha_{i}^{\dagger}$, such that
$\mathcal{A}_{i}^{\perp}=_{\rho_{i}}\mathcal{I}$, and, specifically,
\[
\mathcal{A}_{i}^{\perp}\alpha_{j}=\alpha_{j}\qquad\forall j>i.
\]
Moreover, the transformation $\mathcal{A}_{i}^{\perp}$ never occurs
on the state $\alpha_{i}$. Let $\left\{ \mathcal{A}_{i},\mathcal{A}_{i}^{\perp}\right\} $
be a binary test containing the transformation $\mathcal{A}_{i}^{\perp}$.
By construction, this test distinguishes without error between the
state $\alpha_{i}$ and all the states $\alpha_{j}$ with $j>i$,
in such a way that the latter are not disturbed. Using the tests $\left\{ \mathcal{A}_{i},\mathcal{A}_{i}^{\perp}\right\} $
it is easy to construct a protocol that distinguishes perfectly between
the states $\left\{ \alpha_{i}\right\} _{i=1}^{n}$. The protocol
works as follows: for $i$ going from $1$ to $n-1$, perform the
test $\left\{ \mathcal{A}_{i},\mathcal{A}_{i}^{\perp}\right\} $.
If the transformation $\mathcal{A}_{i}$ takes place, then the state
is $\alpha_{i}$. If the transformation $\mathcal{A}_{i}^{\perp}$
takes place, then perform the test $\left\{ \mathcal{A}_{i+1},\mathcal{A}_{i+1}^{\perp}\right\} $,
and so on.
\end{proof}

\begin{proof}[Proof of theorem~\ref{thm:diago}]
The proof consists of a constructive procedure for diagonalizing
arbitrary states. In order to diagonalize the state $\rho$, it is
enough to proceed along the following steps:
\begin{enumerate}
\item Set $\rho_{1}=\rho$ and $p_{*,0}=0$
\item Starting from $i=1$, decompose $\rho_{i}$ as $\rho_{i}=p_{*,i}\alpha_{i}+\left(1-p_{*,i}\right)\sigma_{i}$
as in theorem~\ref{thm:diagonalization}, and set $\rho_{i+1}=\sigma_{i}$,
$p_{i}=p_{*,i}\prod_{j=0}^{i-1}\left(1-p_{*,j}\right)$. If $p_{*,i}=1$,
then stop, otherwise continue to the step $i+1$.
\end{enumerate}
Recall that, at every step of the procedure, proposition~\ref{prop:effect diagonalization}
guarantees the condition $\left(\alpha_{i}^{\dagger}|\sigma_{i}\right)=0$.
Since by construction every state $\alpha_{j}$ with $j>i$ is contained
in the convex decomposition of $\sigma_{i}$, we also have $\left(\alpha_{k}^{\dagger}|\alpha_{l}\right)=0$
for $l>k$. Hence, lemma~\ref{lem:distinguishable} implies that
the states $\left\{ \alpha_{k}\right\} _{k=1}^{i}$, generated by
the first $i$ iterations of the protocol, are perfectly distinguishable,
for any $i$. For a finite dimensional system, this means that the
procedure has to terminate in a finite number of iterations. Once
the procedure has been completed, the state $\rho$ is diagonalized
as $\rho=\sum_{i}p_{i}\alpha_{i}$.
\end{proof}
Note that the diagonalization procedure in the above proof returns
a diagonalization of $\rho$ where the eigenvalues are naturally listed
in decreasing order, namely $p_{i}\geq p_{i+1}$ for every $i$. Such
an ordering will become useful when dealing with majorization.

\subsection{Unique vs non-unique diagonalization and the majorization criterion}

In quantum theory the diagonalization of every state is unique, up
to different choices of bases for degenerate eigenspaces. Is this
property satisfied by the operational diagonalization? In general,
it is conceivable that different diagonalization procedures may yield
different sets of eigenvalues for the same state. On top of that,
even our algorithm for diagonalizing states may not yield a single,
canonical diagonalization. It \emph{does} when the eigenvalues are
all distinct, but the situation may be different when two eigenvalues
coincide.

The uniqueness of the eigenvalues of a state is particularly important.
In a theory where the diagonalization is not unique any attempt to
define entropies from the eigenvalues is in serious danger of failure:
indeed, the resulting entropies would not be functions of the state,
but rather of its diagonalization. At this stage it is not clear whether
the present set of axioms (Causality, Purity Preservation, Purification,
and Pure Sharpness) implies that all the diagonalizations of a given
state have the same eigenvalues. We conjecture that the answer is
affirmative and plan to provide a rigorous proof in a forthcoming
paper \cite{Majorization}. For the moment, in this paper we will
prove an intermediate result, showing that the eigenvalues are unique
if one assumes the Strong Symmetry axiom by Barnum, Müller, and Ududec
\cite{Barnum-interference} in addition to our axioms.

\section{Combining diagonalization with Strong Symmetry\label{sec:Combining}}

Strong Symmetry is a requirement on the ability to transform maximal
sets of perfectly distinguishable pure states using reversible channels.
In general, a maximal set is defined as follows:
\begin{defn}
Let $\left\{ \rho_{i}\right\} _{i=1}^{n}$ be a set of perfectly distinguishable
states. We say that $\left\{ \rho_{i}\right\} _{i=1}^{n}$ is \emph{maximal}
if there is \emph{no} state $\rho_{n+1}$ such that the states $\left\{ \rho_{i}\right\} _{i=1}^{n+1}$
are perfectly distinguishable.
\end{defn}
When the maximal set is made of pure states, this definition gives
an operational characterization of the orthonormal bases of a finite-dimensional
Hilbert space. Another operational of characterization of them was
given in Ref.~\cite{Classical-structures} in terms of commutative
$\dagger$-Frobenius monoids.

With this definition, Strong Symmetry reads
\begin{ax}[Strong Symmetry \cite{Barnum-interference}]
The group of reversible channels acts transitively on maximal sets
of perfectly distinguishable \emph{pure} states.
\end{ax}
Strong Symmetry implies that all maximal sets of perfectly distinguishable
pure states have the same cardinality, sometimes referred to as the
\emph{dimension} of the system. We call a system of dimension $d$
a $d$\emph{-level system}. Note that in a $d$-level system, the
diagonalizations of a state have at most $d$ terms.

In the following we present a number of results arising from the combination
of diagonalization with Strong Symmetry. These results were preliminarily
discussed in the master's thesis of one of the authors \cite{Tesi}
and, more recently, they have appeared independently in Refs.~\cite{Krumm-thesis,Krumm-Muller}.
The first result is that the eigenvalues of the invariant state are
uniquely defined:
\begin{prop}
\label{prop:diagonalization chi d-level}Every diagonalization of
the invariant state $\chi=\sum_{i=1}^{d}p_{i}\alpha_{i}$ has $p_{i}=\frac{1}{d}$,
for every $i$.\end{prop}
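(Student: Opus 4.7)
The plan is to exploit Strong Symmetry to permute the pure states appearing in a diagonalization of $\chi$, and then use the invariance of $\chi$ under every reversible channel to deduce that all the eigenvalues must coincide.

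First, I would invoke theorem~\ref{thm:diago} to obtain an arbitrary diagonalization $\chi = \sum_{i=1}^{n} p_i \alpha_i$, where the $\alpha_i$ are pure and perfectly distinguishable via some observation-test $\{a_i\}_{i=1}^{n}$ with $(a_j|\alpha_i) = \delta_{ij}$. If $n < d$, I would enlarge $\{\alpha_i\}_{i=1}^{n}$ to a maximal set of perfectly distinguishable pure states $\{\alpha_i\}_{i=1}^{d}$ (possible by the finite dimensionality of $\mathsf{St}_{\mathbb{R}}(\mathrm{A})$ and the very definition of maximality), extend the distinguishing test to $\{a_i\}_{i=1}^{d}$, and set $p_i = 0$ for $n < i \leq d$. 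This way, without loss of generality, $\chi = \sum_{i=1}^{d} p_i \alpha_i$ is expressed over a maximal set.

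Next, for every permutation $\pi \in S_d$, the reordered tuple $(\alpha_{\pi(1)},\ldots,\alpha_{\pi(d)})$ is also a maximal set of perfectly distinguishable pure states. By Strong Symmetry there is a reversible channel $\mathcal{U}_\pi$ on $\mathrm{A}$ such that $\mathcal{U}_\pi \alpha_i = \alpha_{\pi(i)}$ for every $i$. Since $\chi$ is the invariant state, $\mathcal{U}_\pi \chi = \chi$, and applying the effect $a_j$ to both sides yields
\[
p_j = (a_j|\chi) = (a_j|\mathcal{U}_\pi \chi) = \sum_{i=1}^{d} p_i\, (a_j|\alpha_{\pi(i)}) = \sum_{i=1}^{d} p_i\, \delta_{j,\pi(i)} = p_{\pi^{-1}(j)}.
\]
Since this holds for every $\pi$ and every $j$, all the $p_i$ must coincide; as they sum to $1$, each equals $1/d$. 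In particular no $p_i$ vanishes, so necessarily $n = d$, and the claim follows for every diagonalization.

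I expect the routine part to be the computation above; the only real subtlety is the interpretation of Strong Symmetry as acting transitively on \emph{ordered} maximal sets (so that arbitrary permutations of a single maximal set are realized by reversible channels) and the step of completing a non-maximal distinguishable set to a maximal one, both of which are standard consequences of the Strong Symmetry axiom combined with the finite dimensionality assumed throughout.
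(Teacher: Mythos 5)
Your proposal is correct and takes essentially the same route as the paper's proof: realize every permutation of the maximal distinguishable set by a reversible channel via Strong Symmetry, use $\mathcal{U}_\pi\chi=\chi$ together with the distinguishing effects to obtain $p_j=p_{\pi^{-1}(j)}$ for all $\pi\in S_d$, and conclude $p_i=\frac{1}{d}$. The only difference is your preliminary padding of a diagonalization with fewer than $d$ terms up to a maximal set, a case the paper sidesteps by writing the diagonalization with $d$ terms from the start.
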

\begin{proof}
Let $\left\{ a_{i}\right\} _{i=1}^{d}$ be the test that perfectly
distinguishes between the states $\left\{ \alpha_{i}\right\} _{i=1}^{d}$.
Then, one has $p_{i}=\left(a_{i}|\chi\right)$, for every $i$. Let
us consider all the possible permutations of the pure states $\left\{ \alpha_{i}\right\} _{i=1}^{d}$.
For instance, if $\pi\in S_{d}$, where $S_{d}$ is the symmetric
group over $d$ elements, we can consider the permuted states $\left\{ \alpha_{\pi\left(i\right)}\right\} _{i=1}^{d}$,
which are obviously still perfectly distinguishable. By Strong Symmetry,
there is a reversible channel $\mathcal{U}_{\pi}$ that implements
this permutation, namely $\alpha_{\pi\left(i\right)}=\mathcal{U}_{\pi}\alpha_{i}$.
Let us apply $\mathcal{U}_{\pi}$ to $\chi$.
\[
\chi=\sum_{j=1}^{d}p_{j}\mathcal{U}_{\pi}\alpha_{j}=\sum_{j=1}^{d}p_{j}\alpha_{\pi\left(j\right)}
\]
Now let us apply $a_{i}$ to $\chi$. We have
\[
p_{i}=\left(a_{i}|\chi\right)=\sum_{j=1}^{d}p_{j}\delta_{i,\pi\left(j\right)}=p_{\pi^{-1}\left(i\right)}.
\]
Since this holds for every $\pi\in S_{d}$, one has $p_{i}=p_{j}$
for every $j$. This implies that the eigenvalues are equal, therefore
$p_{i}=\frac{1}{d}$.
\end{proof}
Proposition~\ref{prop:diagonalization chi d-level} implies that
the pure states arising in every diagonalization of the invariant
state $\chi$ form a maximal set of perfectly distinguishable pure
states. One can wonder about the converse: is it true that every maximal
set of perfectly distinguishable pure states, combined with equal
weights, yields the invariant state? In this case, the answer is immediate
from Strong Symmetry:
\begin{prop}
\label{prop:diagonalization chi d-level 2}Let $\left\{ \psi_{i}\right\} _{i=1}^{d}$
be a maximal set of perfectly distinguishable pure states. Then one
has $\chi=\frac{1}{d}\sum_{i=1}^{d}\psi_{i}$.\end{prop}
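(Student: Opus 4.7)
The plan is to combine the uniqueness of the eigenvalues of $\chi$ (Proposition~\ref{prop:diagonalization chi d-level}) with the transitivity statement provided by Strong Symmetry, and then exploit the invariance of $\chi$ under reversible channels.

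First, I would invoke the Diagonalization Theorem (Theorem~\ref{thm:diago}) to obtain \emph{some} diagonalization of the invariant state, say $\chi = \sum_{i=1}^{d'} q_i \alpha_i$, with $\{\alpha_i\}_{i=1}^{d'}$ a set of perfectly distinguishable pure states. A preliminary observation is that this set must in fact be maximal: if it were not, one could extend it by a further perfectly distinguishable pure state $\alpha_{d'+1}$, and then applying the corresponding effect $a_{d'+1}$ from the perfectly distinguishing test to the expression $\chi = \sum_i q_i \alpha_i$ would give $(a_{d'+1}|\chi) = 0$, contradicting the fact that $\chi$ is completely mixed. Hence $d' = d$, and by Proposition~\ref{prop:diagonalization chi d-level} all the weights equal $1/d$, so
\[
\chi = \frac{1}{d}\sum_{i=1}^{d} \alpha_i.
\]

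Next, given the arbitrary maximal set $\{\psi_i\}_{i=1}^{d}$ from the statement, I would apply Strong Symmetry: since $\{\alpha_i\}_{i=1}^{d}$ and $\{\psi_i\}_{i=1}^{d}$ are two maximal sets of perfectly distinguishable pure states (necessarily of the same cardinality), there exists a reversible channel $\mathcal{U}$ with $\psi_i = \mathcal{U}\alpha_i$ for every $i$. Applying $\mathcal{U}$ to both sides of the diagonalization of $\chi$ yields
\[
\mathcal{U}\chi = \frac{1}{d}\sum_{i=1}^{d} \mathcal{U}\alpha_i = \frac{1}{d}\sum_{i=1}^{d}\psi_i.
\]
Finally, because $\chi$ is the invariant state, $\mathcal{U}\chi = \chi$, which gives the desired identity.

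There is really no hard step here: all the work is done by the two earlier ingredients. The only subtlety worth flagging in writing is the verification that the diagonalization produced by Theorem~\ref{thm:diago} yields a \emph{maximal} set of perfectly distinguishable pure states, so that Strong Symmetry applies to match it with $\{\psi_i\}_{i=1}^{d}$; this is what the opening observation above is for.
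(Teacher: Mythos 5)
Your proof is correct and takes essentially the same route as the paper's: take a diagonalization of $\chi$, note its weights are all $\frac{1}{d}$, use Strong Symmetry to map that maximal set onto $\left\{ \psi_{i}\right\} _{i=1}^{d}$ with a reversible channel, and conclude by the invariance of $\chi$. The maximality check you flag as the only subtlety is exactly the point the paper settles in the remark immediately preceding the proposition, so nothing essential differs.
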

\begin{proof}
Let us consider a diagonalization of $\chi$, say $\chi=\frac{1}{d}\sum_{i=1}^{d}\varphi_{i}$.
By Strong Symmetry, there is a reversible channel $\mathcal{U}$ such
that $\mathcal{U}\varphi_{i}=\psi_{i}$ for every $i$. Then we have
\[
\chi=\mathcal{U}\chi=\frac{1}{d}\sum_{i=1}^{d}\mathcal{U}\varphi_{i}=\frac{1}{d}\sum_{i=1}^{d}\psi_{i}.
\]

\end{proof}
So far we have used the diagonalization theorem as a ``black box'',
without referring to the axioms used to prove it. Using the full power
of the axioms allows us to prove stronger results. For example, we
are able to prove that every pure maximal set admits a pure, perfectly
distinguishing test:
\begin{lem}
\label{lem:pure test chi}For every pure maximal set $\left\{ \alpha_{i}\right\} _{i=1}^{d}$,
the pure effects $\left\{ \alpha_{i}^{\dagger}\right\} _{i=1}^{d}$
form an observation-test, which distinguishes perfectly between the
states $\left\{ \alpha_{i}\right\} _{i=1}^{d}$.\end{lem}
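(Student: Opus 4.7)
The plan is to split the proof into two parts: first, derive the orthogonality relations $(\alpha_i^\dagger|\alpha_j) = \delta_{ij}$; second, prove $\sum_{i=1}^d \alpha_i^\dagger = u$, which by Proposition~\ref{prop:observation-test purification} promotes $\{\alpha_i^\dagger\}_{i=1}^d$ to an observation-test. Together these give both halves of the lemma.

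For the orthogonality I would invoke Proposition~\ref{prop:diagonalization chi d-level 2} to write $\chi = \frac{1}{d}\sum_{j=1}^d \alpha_j$. Lemma~\ref{lem:invariant} tells us that $1/d$ is the maximal weight at which any pure state can appear in a decomposition of $\chi$, so Proposition~\ref{prop:effect diagonalization} gives $(\alpha_i^\dagger|\chi) = 1/d$. Expanding the pairing by linearity, $\sum_{j=1}^d (\alpha_i^\dagger|\alpha_j) = 1$; since $(\alpha_i^\dagger|\alpha_i) = 1$ and every other term is a nonnegative probability, each off-diagonal term must vanish, yielding $(\alpha_i^\dagger|\alpha_j) = \delta_{ij}$.

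For the observation-test property I would pick any test $\{a_i\}_{i=1}^d$ perfectly distinguishing $\{\alpha_i\}$ (guaranteed by the definition of perfect distinguishability) and identify $a_i = \alpha_i^\dagger$. Refine $\{a_i\}$ into pure effects $a_i = \sum_k c_{ik}$ (using Proposition~\ref{prop:characterization of effects} to write $a_i = u_{\mathrm{B}}\mathcal{A}_i$ with pure $\mathcal{A}_i$, then composing with a pure observation-test on $\mathrm{B}$), and write each $c_{ik} = \lambda_{ik} c_{ik}'$ with $c_{ik}' \in \mathsf{PurEff}_1(\mathrm{A})$. Two constraints pin the refinement down: $(a_i|\alpha_i) = 1$ gives $\sum_k \lambda_{ik}(c_{ik}'|\alpha_i) = 1$, while $(a_i|\chi) = 1/d$, together with the fact that $(c|\chi) = 1/d$ for every normalized pure effect $c$ (by Proposition~\ref{prop:effect diagonalization} and Lemma~\ref{lem:invariant}), forces $\sum_k \lambda_{ik} = 1$. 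Combined with $(c_{ik}'|\alpha_i) \leq 1$, these equalities force $(c_{ik}'|\alpha_i) = 1$ whenever $\lambda_{ik} > 0$; by uniqueness of the pure effect on $\alpha_i$ this yields $c_{ik}' = \alpha_i^\dagger$, so $a_i = \alpha_i^\dagger$ and $\sum_i \alpha_i^\dagger = \sum_i a_i = u$.

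The main obstacle I anticipate is justifying the pure refinement of $\{a_i\}$, because a pure observation-test on the dilation system $\mathrm{B}$ is essentially the kind of object the lemma itself constructs. A clean way to bypass this circularity is via purification: taking $\Psi_{\mathrm{AB}}$ to purify $\chi$, the argument inside the proof of Theorem~\ref{thm:diagonalization} shows that each $(\alpha_i^\dagger \otimes \mathcal{I}_{\mathrm{B}})\Psi$ is a scaled pure state $(1/d)\widetilde{\alpha}_i$ on $\mathrm{B}$; summing and applying Proposition~\ref{prop:diagonalization chi d-level 2} on $\mathrm{B}$ recognises $(1/d)\sum_i \widetilde{\alpha}_i$ as the marginal $(u \otimes \mathcal{I}_{\mathrm{B}})\Psi$; and Proposition~\ref{prop:faithful-effects}, applicable because $\chi$ is completely mixed, lifts the equality $(\sum_i \alpha_i^\dagger \otimes \mathcal{I}_{\mathrm{B}})\Psi = (u \otimes \mathcal{I}_{\mathrm{B}})\Psi$ to $\sum_i \alpha_i^\dagger = u$.
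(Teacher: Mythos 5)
Your orthogonality argument and your final faithfulness step match the paper's proof, and you were right to discard the refinement-based route as circular. The genuine gap sits in the middle of your ``bypass'': to identify $\frac{1}{d}\sum_{i}\widetilde{\alpha}_{i}$ with the marginal $\left(u_{\mathrm{A}}\otimes\mathcal{I}_{\mathrm{B}}\right)\Psi$ you invoke Proposition~\ref{prop:diagonalization chi d-level 2} \emph{on system} $\mathrm{B}$. That proposition only states that a \emph{maximal} set of perfectly distinguishable pure states of $\mathrm{B}$, mixed with equal weights, yields the invariant state $\chi_{\mathrm{B}}$; to apply it you would need (i) the $d$ states $\widetilde{\alpha}_{i}$ to be perfectly distinguishable and maximal in $\mathrm{B}$, and (ii) the marginal of $\Psi$ on $\mathrm{B}$ to be $\chi_{\mathrm{B}}$. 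Neither has been established, and both fail in general: Purification only guarantees \emph{some} purifying system, which may be larger than $\mathrm{A}$, in which case the complementary state of $\chi_{\mathrm{A}}$ is not the invariant state of $\mathrm{B}$ (in quantum theory, purifying the maximally mixed qubit state with a qutrit gives a rank-two marginal), and $d$ states cannot form a maximal set of $\mathrm{B}$ when $\mathrm{B}$ has more than $d$ perfectly distinguishable states. So the key equality $\left(\sum_{i}\alpha_{i}^{\dagger}\otimes\mathcal{I}_{\mathrm{B}}\right)\Psi=\left(u\otimes\mathcal{I}_{\mathrm{B}}\right)\Psi$ is not yet justified by what you cite.

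The paper closes exactly this step with a different object: it argues, from Theorem~\ref{thm:diagonalization}, that the pure states steered by the effects $\alpha_{i}^{\dagger}$ (your $\widetilde{\alpha}_{i}$, its $\beta_{i}$), each carrying weight exactly $1/d$, constitute a diagonalization of the \emph{complementary state} $\widetilde{\rho}=\mathrm{Tr}_{\mathrm{A}}\Psi$ itself---with no claim that $\widetilde{\rho}$ is invariant or that the $\beta_{i}$ are maximal in $\mathrm{B}$---and only then uses faithfulness, exactly as you do. To repair your proof, replace the appeal to Proposition~\ref{prop:diagonalization chi d-level 2} with an argument that the states $\left(\alpha_{i}^{\dagger}\otimes\mathcal{I}_{\mathrm{B}}\right)\Psi$ exhaust $\widetilde{\rho}$, e.g.\ via the iterated application of Theorem~\ref{thm:diagonalization} along the decomposition $\chi=\frac{1}{d}\sum_{i}\alpha_{i}$ (Corollary~\ref{cor:equality p_*} is what pins the weights on the $\mathrm{B}$ side to $1/d$); the rest of your argument then goes through as written. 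A minor further point: your claim that $1/d$ is the maximal weight of a pure state in $\chi$ does not follow from Lemma~\ref{lem:invariant} alone, which only gives independence of the pure state; fixing the value to $1/d$ uses Proposition~\ref{prop:diagonalization chi d-level}, i.e.\ Strong Symmetry, just as in the paper.
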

\begin{proof}
Let us consider the pure maximal set $\left\{ \alpha_{i}\right\} _{i=1}^{d}$.
By proposition~\ref{prop:diagonalization chi d-level 2}, we know
that $\chi=\frac{1}{d}\sum_{i=1}^{d}\alpha_{i}$. Let us prove that
the perfectly distinguishing test for $\left\{ \alpha_{i}\right\} _{i=1}^{d}$
is pure, namely made of the pure effects $\left\{ \alpha_{i}^{\dagger}\right\} _{i=1}^{d}$.
Recalling lemma~\ref{lem:invariant}, each $\alpha_{i}$ arises in
the the diagonalization of $\chi$ with weight $p_{*}=\frac{1}{d}$,
and one has that $\left(\alpha_{i}^{\dagger}|\alpha_{j}\right)=0$
for $j\neq i$. Therefore $\left(\alpha_{i}^{\dagger}|\alpha_{j}\right)=\delta_{ij}$.
We want now to prove that $\left\{ \alpha_{i}^{\dagger}\right\} _{i=1}^{d}$
is an observation-test. Thanks to Purification, it is sufficient to
show that $\sum_{i=1}^{d}\alpha_{i}^{\dagger}=u$ (see proposition~\ref{prop:observation-test purification}).
Let us consider a purification $\Phi_{\mathrm{AB}}$ of the invariant
state $\chi_{\mathrm{A}}$. By theorem~\ref{thm:diagonalization},
one has\[
\begin{aligned}\Qcircuit @C=1em @R=.7em @!R { & \multiprepareC{1}{\Phi}    & \qw \poloFantasmaCn{\rA} &  \measureD{\alpha^{\dagger}_{i}}   \\  & \pureghost{\Phi}    & \qw \poloFantasmaCn{\rB}  &  \qw }\end{aligned}~= \frac{1}{d}\!\!\!\! \begin{aligned}\Qcircuit @C=1em @R=.7em @!R { & \prepareC{\beta_{i}}    & \qw \poloFantasmaCn{\rB} &  \qw   }\end{aligned}~,
\]for some set of pure states $\left\{ \beta_{i}\right\} _{i=1}^{d}$.
By theorem~\ref{thm:diagonalization}, we know that a diagonalization
of the complementary state is $\widetilde{\rho}=\frac{1}{d}\sum_{i=1}^{d}\beta_{i}$.
Hence, we have the equality\[
\begin{aligned}\Qcircuit @C=1em @R=.7em @!R { & \multiprepareC{1}{\Phi}    & \qw \poloFantasmaCn{\rA} &  \measureD{\sum_{i=1}^{d}\alpha^{\dagger}_{i}}   \\  & \pureghost{\Phi}    & \qw \poloFantasmaCn{\rB}  &  \qw }\end{aligned}~= \!\!\!\! \begin{aligned}\Qcircuit @C=1em @R=.7em @!R { & \prepareC{\widetilde{\rho}}    & \qw \poloFantasmaCn{\rB} &  \qw   }\end{aligned}~= \!\!\!\! \begin{aligned}\Qcircuit @C=1em @R=.7em @!R { & \multiprepareC{1}{\Phi}    & \qw \poloFantasmaCn{\rA} &  \measureD{u}   \\  & \pureghost{\Phi}    & \qw \poloFantasmaCn{\rB}  &  \qw }\end{aligned}~,
\]where the last equality follows from the definition of the complementary
state $\widetilde{\rho}$. Since $\chi_{\mathrm{A}}$ is completely
mixed, $\Phi_{\mathrm{AB}}$ is faithful for effects of system $\mathrm{A}$
(by proposition~\ref{prop:faithful-effects}). Therefore we conclude
that $\sum_{i=1}^{d}\alpha_{i}^{\dagger}=u$, thus proving that $\left\{ \alpha_{i}^{\dagger}\right\} _{i=1}^{d}$
is an observation-test.
\end{proof}
The above lemma allows us to prove an important result, which will
be essential for the theory of majorization discussed in the next
section. The result is the following:
\begin{lem}
\label{lem:doubly stochastic}Let $\left\{ \psi_{i}\right\} _{i=1}^{d}$
and $\left\{ \varphi_{i}\right\} _{i=1}^{d}$ be two maximal sets
of perfectly distinguishable pure states. The matrix with entries
$\left(\psi_{i}^{\dagger}|\varphi_{j}\right)$ is doubly stochastic\footnote{See chapter 2, A.1 of Ref.~\cite{Olkin} for the definition of doubly
stochastic matrix.}.\end{lem}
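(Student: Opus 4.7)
The plan is to verify directly the three defining properties of a doubly stochastic matrix: all entries lie in $[0,1]$, every row sums to $1$, and every column sums to $1$. The non-negativity is immediate, since each entry $\left(\psi_{i}^{\dagger}|\varphi_{j}\right)$ is the probability of occurrence of the effect $\psi_{i}^{\dagger}$ on the normalized state $\varphi_{j}$, hence a number in $[0,1]$. So the real content is in the two marginal-sum conditions, and the whole point is that these follow cleanly from the machinery already developed.

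For the column sums, I would invoke lemma~\ref{lem:pure test chi} applied to the pure maximal set $\left\{\psi_{i}\right\}_{i=1}^{d}$: it gives that $\left\{\psi_{i}^{\dagger}\right\}_{i=1}^{d}$ is an observation-test, and therefore $\sum_{i=1}^{d}\psi_{i}^{\dagger}=u$ by proposition~\ref{prop:characterization observation-tests}. Pairing with $\varphi_{j}$ and using that $\varphi_{j}$ is normalized gives
\[
\sum_{i=1}^{d}\left(\psi_{i}^{\dagger}|\varphi_{j}\right)=\left(u|\varphi_{j}\right)=1.
\]

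For the row sums, I would use proposition~\ref{prop:diagonalization chi d-level 2} applied to $\left\{\varphi_{j}\right\}_{j=1}^{d}$, yielding $\chi=\frac{1}{d}\sum_{j=1}^{d}\varphi_{j}$, equivalently $\sum_{j=1}^{d}\varphi_{j}=d\,\chi$. Then
\[
\sum_{j=1}^{d}\left(\psi_{i}^{\dagger}|\varphi_{j}\right)=d\left(\psi_{i}^{\dagger}|\chi\right),
\]
and the value $\left(\psi_{i}^{\dagger}|\chi\right)=\frac{1}{d}$ comes from proposition~\ref{prop:effect diagonalization} combined with lemma~\ref{lem:invariant}: the pure state $\psi_{i}$ arises in a convex decomposition of $\chi$ with weight $p_{*}=p_{\max}=\frac{1}{d}$ (the latter equality being proposition~\ref{prop:diagonalization chi d-level}), so $\left(\psi_{i}^{\dagger}|\chi\right)=\frac{1}{d}$. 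Hence the row sum equals $1$.

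There is no substantial obstacle here, as the hard work has already been done upstream; the main subtlety is simply keeping track of which previous results supply which identity. In particular, it is essential that $\left\{\psi_{i}^{\dagger}\right\}$ is a \emph{genuine} observation-test (not just a family of effects pointwise normalized on the $\psi_{i}$), which is exactly the content of lemma~\ref{lem:pure test chi} and is the non-trivial ingredient behind the column-sum condition.
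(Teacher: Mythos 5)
Your proof is correct and follows essentially the same route as the paper's: non-negativity from the probabilistic interpretation, column sums from lemma~\ref{lem:pure test chi} (the $\psi_i^\dagger$'s form an observation-test), and row sums from $\chi=\frac{1}{d}\sum_j\varphi_j$ together with $\left(\psi_i^\dagger|\chi\right)=\frac{1}{d}$. The only cosmetic difference is how that last value is justified --- the paper reads it off the decomposition $\chi=\frac{1}{d}\sum_i\psi_i$ with $\left(\psi_i^\dagger|\psi_k\right)=\delta_{ik}$, whereas you route it through proposition~\ref{prop:effect diagonalization} and lemma~\ref{lem:invariant} --- which is an equivalent use of the same upstream machinery.
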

\begin{proof}
Clearly $\left(\psi_{i}^{\dagger}|\varphi_{j}\right)\geq0$ because
$\left(\psi_{i}^{\dagger}|\varphi_{j}\right)$ is a probability. Let
us calculate $\sum_{i=1}^{d}\left(\psi_{i}^{\dagger}|\varphi_{j}\right)$.
By lemma~\ref{lem:pure test chi} we know that $\left\{ \psi_{i}^{\dagger}\right\} _{i=1}^{d}$
is an observation-test and therefore 
\[
\sum_{i=1}^{d}\left(\psi_{i}^{\dagger}|\varphi_{j}\right)=\left(u|\varphi_{j}\right)=1,
\]
because the $\varphi_{j}$'s are normalized. On the other hand, we
know that the invariant state can be decomposed as 
\[
\chi=\frac{1}{d}\sum_{j=1}^{d}\varphi_{j}=\frac{1}{d}\sum_{i=1}^{d}\psi_{i}
\]
(cf.\ proposition~\ref{prop:diagonalization chi d-level 2}). Hence,
we have
\[
\sum_{j=1}^{d}\left(\psi_{i}^{\dagger}|\varphi_{j}\right)=d\left(\psi_{i}^{\dagger}|\chi\right)=d\cdot\frac{1}{d}=1.
\]
This proves that the matrix with entries $\left(\psi_{i}^{\dagger}|\varphi_{j}\right)$
is doubly stochastic.
\end{proof}
Double stochasticity will be the key ingredient for the results of
the following section.

\section{Majorization and the resource theory of purity\label{sec:puriresource}}

Majorization is traditionally used as a criterion to compare the degree
of mixedness of probability distributions. Here we extend this approach
to general probabilistic theories satisfying our axioms and, provisionally,
Strong Symmetry. In order to define the degree of mixedness operationally,
we adopt the resource theory of purity defined in our earlier work
\cite{Chiribella-Scandolo15-1}, which considered the situation where
an experimenter has limited control on the dynamics of a closed system.
In this scenario, the set of free operations are the \emph{Random
Reversible (RaRe) channels}, defined as random mixtures of reversible
transformations:
\begin{defn}
A channel $\mathcal{R}$ is \emph{RaRe} if there exist a probability
distribution $\left\{ p_{i}\right\} _{i\in\mathsf{X}}$ and a set
of reversible channels $\left\{ \mathcal{U}_{i}\right\} _{i\in\mathsf{X}}$
such that $\mathcal{R}=\sum_{i\in\mathsf{X}}p_{i}\mathcal{U}_{i}$.
\end{defn}
By definition, RaRe channels cannot increase the purity of a state.
If $\rho=\mathcal{R}\sigma$, where $\mathcal{R}$ is a RaRe channel,
we say that $\rho$ is \emph{more mixed}\footnote{The same notion appeared in Ref.~\cite{Muller3D}, where it was used
to identify which states are better indicators of spatial directions.} than $\sigma$ \cite{Chiribella-Scandolo15-1}. If $\rho$ is more
mixed than $\sigma$ and $\sigma$ is more mixed than $\rho$ we say
that $\rho$ and $\sigma$ are \emph{equally mixed}.

Like in all resource theories, it is important to devise some methods
capable of detecting the convertibility of states under free operations
\cite{Spekkens}, which gives the (pre)ordering of states. We will
now show that, under the assumptions made so far in our paper, the
ordering of states according to their mixedness is completely determined
by majorization, just as it happens in quantum theory \cite{Nielsen-Chuang}.
Let us start by recalling the definition of majorization:
\begin{defn}
Let $\mathbf{x}$ and $\mathbf{y}$ be vectors in $\mathbb{R}^{d}$,
with the components arranged in \emph{decreasing} order. Then, $\mathbf{x}$
is \emph{majorized} by $\mathbf{y}$ (or $\mathbf{y}$ \emph{majorizes}
$\mathbf{x}$), and we write $\mathbf{x}\preceq\mathbf{y}$, if
\begin{itemize}
\item $\sum_{i=1}^{k}x_{i}\leq\sum_{i=1}^{k}y_{i}$, for every $k=1,\ldots,d-1$
\item $\sum_{i=1}^{d}x_{i}=\sum_{i=1}^{d}y_{i}$.
\end{itemize}
\end{defn}
It is known that $\mathbf{x}\preceq\mathbf{y}$ if and only if $\mathbf{x}=P\mathbf{y}$,
where $P$ is a doubly stochastic matrix \cite{Hardy-Littlewood-Polya1929,Olkin}.

Thanks to the results proved in the previous section, we are now in
the position to show that majorization of the eigenvalues is a necessary
condition for the mixedness ordering of two states:
\begin{thm}
\label{thm:mixedness -> majorization}In a theory satisfying Causality,
Purity Preservation, Purification, Pure Sharpness, and Strong Symmetry,
let $\rho$ and $\sigma$ be two states of a generic system and let
$\mathbf{p}$ and $\mathbf{q}$ be the vectors of the eigenvalues
in the diagonalizations of $\rho$ and $\sigma$. If $\rho$ is more
mixed than $\sigma$, then $\mathbf{p}\preceq\mathbf{q}$.\end{thm}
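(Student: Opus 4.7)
The plan is to use the characterization of majorization via doubly stochastic matrices: $\mathbf{p}\preceq\mathbf{q}$ if and only if $\mathbf{p}=D\mathbf{q}$ for some doubly stochastic matrix $D$ (Hardy--Littlewood--P\'olya, cited just after the definition of majorization). So I want to find such a $D$ relating the two eigenvalue vectors.

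First, I would write $\rho=\mathcal{R}\sigma$ with $\mathcal{R}=\sum_{k}r_{k}\mathcal{U}_{k}$ a RaRe decomposition, and fix diagonalizations $\rho=\sum_{i=1}^{d}p_{i}\alpha_{i}$ and $\sigma=\sum_{j=1}^{d}q_{j}\varphi_{j}$. Because Strong Symmetry forces all maximal sets of perfectly distinguishable pure states to have the same cardinality $d$, I may (possibly after padding with zero weights) assume that $\{\alpha_{i}\}_{i=1}^{d}$ and $\{\varphi_{j}\}_{j=1}^{d}$ are \emph{maximal} sets of perfectly distinguishable pure states; indeed, corollary~\ref{cor:existence pure perfectly} and Strong Symmetry together guarantee that any perfectly distinguishable family of pure states can be completed to a maximal one.

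Next, using lemma~\ref{lem:pure test chi}, the pure effects $\{\alpha_{i}^{\dagger}\}_{i=1}^{d}$ form the observation-test that distinguishes the $\alpha_{i}$'s, so $(\alpha_{i}^{\dagger}|\alpha_{j})=\delta_{ij}$ and hence $p_{i}=(\alpha_{i}^{\dagger}|\rho)$. Substituting the RaRe decomposition and the diagonalization of $\sigma$, I obtain
\[
p_{i}=\sum_{k}r_{k}\,(\alpha_{i}^{\dagger}|\mathcal{U}_{k}\sigma)=\sum_{j}q_{j}\left[\sum_{k}r_{k}\,(\alpha_{i}^{\dagger}|\mathcal{U}_{k}\varphi_{j})\right],
\]
which suggests the candidate matrix $D_{ij}:=\sum_{k}r_{k}\,(\alpha_{i}^{\dagger}|\mathcal{U}_{k}\varphi_{j})$, so that $\mathbf{p}=D\mathbf{q}$.

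It remains to verify that $D$ is doubly stochastic. For each $k$, since $\mathcal{U}_{k}$ is reversible, $\{\mathcal{U}_{k}\varphi_{j}\}_{j=1}^{d}$ is again a maximal set of perfectly distinguishable pure states, so lemma~\ref{lem:doubly stochastic} applies to the pair of maximal sets $\{\alpha_{i}\}_{i=1}^{d}$ and $\{\mathcal{U}_{k}\varphi_{j}\}_{j=1}^{d}$, giving that the matrix $M^{(k)}_{ij}=(\alpha_{i}^{\dagger}|\mathcal{U}_{k}\varphi_{j})$ is doubly stochastic. Then $D=\sum_{k}r_{k}M^{(k)}$ is a convex combination of doubly stochastic matrices and hence doubly stochastic, which yields $\mathbf{p}\preceq\mathbf{q}$. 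The main subtlety I expect is the bookkeeping at the start---ensuring the two diagonalizations can be put on a common footing as maximal $d$-element sets (padding eigenvalues with zeros and extending the pure states to maximal sets)---since the direct diagonalization algorithm only returns the non-zero part of the spectrum; everything after that reduces to the clean application of the two lemmas from section~\ref{sec:Combining}.
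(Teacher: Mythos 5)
Your proposal is essentially the paper's own proof: the same RaRe decomposition $\rho=\sum_k \lambda_k\,\mathcal{U}_k\sigma$, the same candidate matrix $D_{ij}=\sum_k \lambda_k\,(\alpha_i^{\dagger}|\mathcal{U}_k|\varphi_j)$ extracted by applying the pure effects $\alpha_i^{\dagger}$, and the same conclusion via lemma~\ref{lem:doubly stochastic} applied to $\{\alpha_i\}$ and $\{\mathcal{U}_k\varphi_j\}$ together with convexity of the set of doubly stochastic matrices. The only difference is that you explicitly flag the need to pad the eigenvalue vectors with zeros and complete the diagonalizing families to maximal sets of $d$ perfectly distinguishable pure states, a bookkeeping step the paper handles implicitly by writing both diagonalizations with $d$ terms.
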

\begin{proof}
If $\rho$ is more mixed than $\sigma$, by definition, we have $\rho=\sum_{k}\lambda_{k}\mathcal{U}_{k}\sigma$,
where $\left\{ \lambda_{k}\right\} $ is a probability distribution
and $\mathcal{U}_{k}$ is a reversible channel, for every $k$. Suppose
$\rho=\sum_{j=1}^{d}p_{j}\psi_{j}$ and $\sigma=\sum_{j=1}^{d}q_{j}\varphi_{j}$
are diagonalizations of $\rho$ and $\sigma$. Then, $\rho=\sum_{k}\lambda_{k}\mathcal{U}_{k}\sigma$
becomes 
\[
\sum_{j=1}^{d}p_{j}\psi_{j}=\sum_{k}\lambda_{k}\sum_{j=1}^{d}q_{j}\mathcal{U}_{k}\varphi_{j}.
\]
By applying $\psi_{i}^{\dagger}$ we get 
\[
p_{i}=\sum_{j=1}^{d}q_{j}\sum_{k}\lambda_{k}\left(\psi_{i}^{\dagger}\right|\mathcal{U}_{k}\left|\varphi_{j}\right).
\]
This expression can be rewritten as $p_{i}=\sum_{j=1}^{d}P_{ij}q_{j}$,
where 
\[
P_{ij}:=\sum_{k}\lambda_{k}\left(\psi_{i}^{\dagger}\right|\mathcal{U}_{k}\left|\varphi_{j}\right)
\]
Now, $\left(\psi_{i}^{\dagger}\right|\mathcal{U}_{k}\left|\varphi_{j}\right)$
is a doubly stochastic matrix because $\left\{ \mathcal{U}_{k}\varphi_{j}\right\} _{j=1}^{d}$
is a maximal set of perfectly distinguishable pure states. Since the
set of doubly stochastic matrices is convex \cite{Olkin}, $P_{ij}$
is a doubly stochastic matrix, whence the thesis.
\end{proof}
As a corollary, we prove the desired result about the uniqueness of
the eigenvalues.
\begin{cor}
\label{cor:eigenvalues states}In a theory satisfying Causality, Purity
Preservation, Purification, Pure Sharpness, and Strong Symmetry, all
the diagonalizations of a given state have the same eigenvalues.\end{cor}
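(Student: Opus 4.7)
My plan is to deduce the corollary as a direct consequence of Theorem~\ref{thm:mixedness -> majorization} by exploiting the antisymmetry of the majorization preorder on decreasingly sorted vectors. The key observation is that every state $\rho$ is trivially more mixed than itself: the identity channel $\mathcal{I}$ is a reversible channel, and hence a degenerate RaRe channel (take the probability distribution concentrated on $\mathcal{I}$), so $\rho = \mathcal{I}\rho$ witnesses that $\rho$ is more mixed than $\rho$ in the sense of the resource theory of purity.

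Given two diagonalizations $\rho = \sum_{i} p_i \psi_i = \sum_{i} q_i \varphi_i$ (each listed in decreasing order of eigenvalues, as produced by the algorithm in the proof of Theorem~\ref{thm:diago}), I would pad whichever vector is shorter with trailing zeros so that both $\mathbf{p}$ and $\mathbf{q}$ have length $d$, where $d$ is the dimension of the system (which is well-defined under Strong Symmetry). Applying Theorem~\ref{thm:mixedness -> majorization} to the pair $(\rho,\rho)$ once with the chosen diagonalizations and once with their roles swapped yields both $\mathbf{p}\preceq\mathbf{q}$ and $\mathbf{q}\preceq\mathbf{p}$.

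To conclude $\mathbf{p} = \mathbf{q}$, I would invoke the standard fact that majorization is antisymmetric on the cone of decreasingly ordered vectors: from $\sum_{i=1}^{k} p_i \leq \sum_{i=1}^{k} q_i$ and $\sum_{i=1}^{k} q_i \leq \sum_{i=1}^{k} p_i$ for all $k\le d-1$ together with equality of the totals, one obtains $\sum_{i=1}^{k} p_i = \sum_{i=1}^{k} q_i$ for every $k$, and hence $p_k = q_k$ coordinate-wise by subtraction.

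There is essentially no obstacle here beyond invoking Theorem~\ref{thm:mixedness -> majorization} correctly; the only subtlety to keep in mind is the padding convention, since the two diagonalizations might a priori have different numbers of nonzero terms. Once the vectors are compared as elements of $\mathbb{R}^d$ with zeros appended, the antisymmetry of $\preceq$ finishes the argument.
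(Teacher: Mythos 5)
Your proof is correct and follows essentially the same route as the paper: both exploit that $\rho$ is trivially more mixed than itself and apply Theorem~\ref{thm:mixedness -> majorization} in both directions to get $\mathbf{p}\preceq\mathbf{q}$ and $\mathbf{q}\preceq\mathbf{p}$. The only cosmetic difference is the last step, where the paper cites the standard result that mutual majorization forces the vectors to coincide up to a permutation, while you conclude directly from equality of the partial sums of the decreasingly ordered vectors; these are interchangeable.
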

\begin{proof}
Let $\rho=\sum_{i=1}^{d}p_{i}\psi_{i}$ and $\rho=\sum_{j=1}^{d}q_{j}\varphi_{j}$
be two diagonalizations of a generic state $\rho$, and let $\mathbf{p}$
and $\mathbf{q}$ be the corresponding vectors of eigenvalues. Trivially,
$\rho$ is more mixed than $\rho$, which implies $\mathbf{p}\preceq\mathbf{q}$,
but also $\mathbf{q}\preceq\mathbf{p}$, therefore $\mathbf{p}=\Pi\mathbf{q}$,
for some permutation matrix $\Pi$ \cite{Olkin}. This means that
$\mathbf{p}$ and $\mathbf{q}$ differ only by a rearrangement of
their entries, whence the eigenvalues of $\rho$ are uniquely defined.
\end{proof}
In Ref.~\cite{Muller3D} Müller and Masanes proved that two states
that are equally mixed (in our terminology) differ by a reversible
channel. For theories satisfying the axioms adopted in this paper,
majorization provides an alternative proof:
\begin{prop}
In a theory satisfying Causality, Purity Preservation, Purification,
Pure Sharpness, and Strong Symmetry, two states $\rho$ and $\sigma$
are equally mixed under RaRe channels if and only if $\rho=\mathcal{U}\sigma$,
for some reversible channel $\mathcal{U}$. In particular, two equally
mixed states must have the same eigenvalues.\end{prop}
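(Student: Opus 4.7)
The plan is to split the proof into the two implications, since one direction is essentially immediate and the other is where the real content lies. If $\rho = \mathcal{U}\sigma$ for a reversible channel $\mathcal{U}$, then $\mathcal{U}$ is trivially a RaRe channel (it is the convex combination concentrated on a single reversible channel), which shows $\rho$ is more mixed than $\sigma$. By reversibility, $\sigma = \mathcal{U}^{-1}\rho$ yields the opposite direction, so $\rho$ and $\sigma$ are equally mixed.

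For the nontrivial direction, I would invoke Theorem~\ref{thm:mixedness -> majorization} twice: from $\rho$ more mixed than $\sigma$ we get $\mathbf{p} \preceq \mathbf{q}$, and from $\sigma$ more mixed than $\rho$ we get $\mathbf{q} \preceq \mathbf{p}$. A standard majorization fact (used already in the proof of Corollary~\ref{cor:eigenvalues states}) then says that $\mathbf{p}$ and $\mathbf{q}$ differ only by a permutation. This already establishes the \emph{in particular} clause: equally mixed states have the same eigenvalues.

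Next I would reorder the diagonalizations in decreasing order so that they read $\rho = \sum_{i=1}^{d} p_{i}\psi_{i}$ and $\sigma = \sum_{i=1}^{d} p_{i}\varphi_{i}$ with identical eigenvalue vectors. If either diagonalization has fewer terms than $d$, one extends the corresponding perfectly distinguishable family to a maximal set with the remaining weights equal to zero; this is permitted because Strong Symmetry guarantees every pure maximal set has cardinality $d$, and additional perfectly distinguishable pure states exist by Corollary~\ref{cor:existence pure perfectly} together with Strong Symmetry. Now $\{\psi_{i}\}_{i=1}^{d}$ and $\{\varphi_{i}\}_{i=1}^{d}$ are both pure maximal sets, so Strong Symmetry supplies a reversible channel $\mathcal{U}$ with $\mathcal{U}\varphi_{i} = \psi_{i}$ for every $i$. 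Linearity then gives $\mathcal{U}\sigma = \sum_{i} p_{i}\,\mathcal{U}\varphi_{i} = \sum_{i} p_{i}\psi_{i} = \rho$.

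The only point requiring care is handling degenerate eigenvalues and vanishing ones. Degeneracies cause no trouble because Strong Symmetry is agnostic to the ordering within a maximal set; what is essential is that the weights attached to paired pure states coincide, and this is exactly what the uniqueness of eigenvalues (Corollary~\ref{cor:eigenvalues states}) delivers once the diagonalizations are arranged in decreasing order. Vanishing eigenvalues are handled by the padding argument above. These details are the only subtlety; the argument itself is a direct chaining of the majorization theorem, uniqueness of eigenvalues, and Strong Symmetry.
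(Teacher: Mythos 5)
Your proof is correct and takes essentially the same route as the paper's: sufficiency is immediate, and necessity applies Theorem~\ref{thm:mixedness -> majorization} in both directions to conclude the eigenvalue vectors coincide up to permutation, then uses Strong Symmetry to map the diagonalizing set of $\sigma$ onto that of $\rho$ and concludes by linearity. Your explicit padding of shorter diagonalizations to pure maximal sets of cardinality $d$ is a detail the paper leaves implicit, not a different argument.
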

\begin{proof}
Sufficiency is straightforward. The proof of necessity is close to
the proof of corollary~\ref{cor:eigenvalues states}. If $\rho$
is equivalent to $\sigma$, then $\mathbf{p}\preceq\mathbf{q}$ and
$\mathbf{q}\preceq\mathbf{p}$, where $\mathbf{p}$ and $\mathbf{q}$
are the vectors of the eigenvalues of $\rho$ and $\sigma$ respectively.
This means that $\rho$ and $\sigma$ have the same eigenvalues (see
above). Thus, $\rho=\sum_{i=1}^{d}p_{i}\psi_{i}$ and $\sigma=\sum_{i=1}^{d}p_{i}\varphi_{i}$.
By Strong Symmetry, there exists a reversible channel $\mathcal{U}$
such that $\mathcal{U}\varphi_{i}=\psi_{i}$ for every $i$. Therefore,
\[
\mathcal{U}\sigma=\sum_{i=1}^{d}p_{i}\mathcal{U}\varphi_{i}=\sum_{i=1}^{d}p_{i}\psi_{i}=\rho.
\]

\end{proof}
We conclude this section by providing a complete equivalence between
majorization and the mixedness relation. While in theorem~\ref{thm:mixedness -> majorization}
we proved that majorization of the eigenvalues is a necessary condition
for the mixedness ordering, we now show that majorization is also
sufficient:
\begin{thm}
In a theory satisfying Causality, Purity Preservation, Purification,
Pure Sharpness, and Strong Symmetry, let $\rho$ and $\sigma$ be
two states of a generic system and let $\mathbf{p}$ and $\mathbf{q}$
be the vectors of their eigenvalues respectively. If $\mathbf{p}\preceq\mathbf{q}$,
then $\rho$ is more mixed than $\sigma$.\end{thm}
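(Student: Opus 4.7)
The plan is to construct explicitly a RaRe channel $\mathcal{R}$ with $\rho = \mathcal{R}\sigma$, translating the majorization relation into an operational statement via the Birkhoff--von Neumann theorem. Recall the classical fact \cite{Olkin,Hardy-Littlewood-Polya1929} that $\mathbf{p} \preceq \mathbf{q}$ is equivalent to $\mathbf{p} = P\mathbf{q}$ for some doubly stochastic matrix $P$, and that every such $P$ decomposes as a convex combination $P = \sum_{k} \lambda_{k} \Pi_{k}$ of permutation matrices $\Pi_{k}$, with associated permutations $\pi_{k}\in S_{d}$.

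First I would fix diagonalizations $\rho = \sum_{i=1}^{d} p_{i} \psi_{i}$ and $\sigma = \sum_{i=1}^{d} q_{i} \varphi_{i}$, extending both to maximal sets of perfectly distinguishable pure states (padding with zero eigenvalues if needed). By Strong Symmetry, pick a reversible channel $\mathcal{W}$ with $\mathcal{W}\varphi_{i} = \psi_{i}$ for every $i$, so that $\mathcal{W}\sigma = \sum_{i} q_{i}\psi_{i}$ is diagonalized in the same maximal set as $\rho$. Strong Symmetry also supplies, for each $\pi_{k}$ in the Birkhoff decomposition, a reversible channel $\mathcal{V}_{k}$ satisfying $\mathcal{V}_{k}\psi_{j} = \psi_{\pi_{k}(j)}$. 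Then define
\[
\mathcal{R} := \sum_{k} \lambda_{k}\,\mathcal{V}_{k}\,\mathcal{W},
\]
which is RaRe as a convex combination of reversible channels. A direct computation yields
\[
\mathcal{R}\sigma = \sum_{k} \lambda_{k} \sum_{j} q_{j}\,\psi_{\pi_{k}(j)} = \sum_{i} \Bigl(\sum_{k} \lambda_{k}\, q_{\pi_{k}^{-1}(i)}\Bigr)\psi_{i} = \sum_{i} p_{i}\,\psi_{i} = \rho,
\]
where the last equality uses $\mathbf{p} = P\mathbf{q}$. Hence $\rho$ is more mixed than $\sigma$.

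The main subtlety I foresee lies in the preparatory step of extending a given diagonalization to a maximal set of perfectly distinguishable pure states of cardinality exactly $d$. Without this, Strong Symmetry cannot be invoked to realize an arbitrary $\pi_{k}\in S_{d}$ by a reversible channel acting on the chosen pure states, and the Birkhoff decomposition would not be translatable into the theory. This extension is guaranteed by combining corollary~\ref{cor:existence pure perfectly} (every pure state admits a perfectly distinguishable partner) with the Strong Symmetry consequence that all maximal sets have the same cardinality $d$: one iteratively adjoins perfectly distinguishable pure states until maximality is reached. Once this is in place and the two diagonalizations are brought into a common maximal set via $\mathcal{W}$, the Birkhoff-based construction above completes the proof.
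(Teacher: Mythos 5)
Your proposal is correct and follows essentially the same route as the paper's proof: apply the Hardy--Littlewood--P\'olya equivalence and Birkhoff's theorem to write $\mathbf{p}=\sum_k\lambda_k\Pi_k\mathbf{q}$, use Strong Symmetry to realize both the basis change $\varphi_i\mapsto\psi_i$ and each permutation $\pi_k$ by reversible channels, and combine them into the RaRe channel $\sum_k\lambda_k\mathcal{V}_k\mathcal{W}$ mapping $\sigma$ to $\rho$. Your explicit remark about padding the diagonalizations with zero eigenvalues to reach a common maximal set is a point the paper handles only implicitly (by writing all sums with $d$ terms), so it is a welcome clarification rather than a deviation.
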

\begin{proof}
If $\mathbf{p}\preceq\mathbf{q}$, one has $\mathbf{p}=P\mathbf{q}$
for some doubly stochastic matrix $P$ \cite{Hardy-Littlewood-Polya1929,Olkin}.
Now, by Birkhoff's theorem \cite{Birkhoff,Olkin}, $P=\sum_{k}\lambda_{k}\Pi_{k}$,
where the $\Pi_{k}$'s are permutation matrices and $\left\{ \lambda_{k}\right\} $
is a probability distribution. Therefore $\mathbf{p}=\sum_{k}\lambda_{k}\Pi_{k}\mathbf{q}$;
specifically, this means that $p_{i}=\sum_{k}\lambda_{k}\sum_{j=1}^{d}\left[\Pi_{k}\right]_{ij}q_{j}$.
Therefore, we have 
\[
\rho=\sum_{i=1}^{d}p_{i}\psi_{i}=\sum_{i=1}^{d}\sum_{k}\lambda_{k}\sum_{j=1}^{d}\left[\Pi_{k}\right]_{ij}q_{j}\psi_{i}=
\]
\begin{equation}
=\sum_{k}\lambda_{k}\sum_{j=1}^{d}q_{j}\sum_{i=1}^{d}\left[\Pi_{k}\right]_{ij}\psi_{i}.\label{eq:tremilaelementidimatrice}
\end{equation}
Now, $\sum_{i=1}^{d}\left[\Pi_{k}\right]_{ij}\psi_{i}$ is a pure
state, given by $\psi_{\pi_{k}\left(i\right)}$ for a suitable permutation
$\pi_{k}\in S_{d}$. By Strong Symmetry, the permutation $\pi_{k}$
is implemented by a reversible channel $\mathcal{V}_{k}$. Moreover,
Strong Symmetry implies that there exists a reversible channel $\mathcal{U}$
such that $\mathcal{U}\varphi_{i}=\psi_{i}$ for every $i\in\left\{ 1,\ldots,d\right\} $.
Defining $\mathcal{U}_{k}:=\mathcal{V}_{k}\mathcal{U}$, we then have
\[
\mathcal{U}_{k}\varphi_{i}=\mathcal{V}_{k}\psi_{i}=\psi_{\pi_{k}(i)}=\sum_{i=1}^{d}\left[\Pi_{k}\right]_{ij}\psi_{i},
\]
which combined with Eq.~\eqref{eq:tremilaelementidimatrice} yields
\[
\rho=\sum_{k}\lambda_{k}\sum_{j=1}^{d}q_{j}\mathcal{U}_{k}\varphi_{j}=\sum_{k}\lambda_{k}\mathcal{U}_{k}\sigma.
\]
Hence, $\rho$ is more mixed than $\sigma$.
\end{proof}

\section{Conclusions\label{sec:Conclusions}}

In this work we have derived the diagonalization of states from four
basic operational axioms: Causality, Purity Preservation, Purification,
and Pure Sharpness. Our result has several applications: first of
all, it allows one to import all the known consequences of diagonalization
in the axiomatic context, such as those presented in Ref.~\cite{Barnum-interference},
where diagonalization was assumed as Axiom 1. For example, adding
Strong Symmetry, we obtain that the state space is self-dual---a property
that plays an important role in the reconstruction of quantum theory
\cite{Barnum-self}. The combination of our four axioms with Strong
Symmetry leads to important consequences, such as the fact that the
eigenvalues in the diagonalization of a state are uniquely determined.
While our results use Strong Symmetry, it remains as an open question
whether this requirement can be dropped or replaced by other, weaker
requirements. We conjecture that this is indeed the case, and we plan
to investigate the issue further in a forthcoming paper \cite{Majorization}.

Another important application of our results is in the axiomatic reconstruction
of (quantum) thermodynamics. In a previous work \cite{Chiribella-Scandolo15-1},
we defined an operational resource theory of purity---dual to the
resource theory of entanglement---in which free operations are random
reversible channels. A natural application of the diagonalization
theorem is the formulation of a majorization criterion capable of
detecting whether a thermodynamic transition is possible or not, and
to establish quantitative measures of mixedness \cite{Tesi}. Specifically,
when Strong Symmetry is added to our axioms, the ordering of states
in the operational resource theory of purity is completely characterized
by the majorization criterion. Such an application contributes also
to the difficult problem of finding the right requirements that guarantee
a well-behaved notion of entropy in general probabilistic theories
\cite{Entropy-Barnum,Entropy-Short,Entropy-Kimura}. To some extent,
our results suggest that having a sensible notion of entropy (and
therefore having a sensible thermodynamics) is not a generic feature
of general probabilistic theories, but rather a quite stringent constraint.
In addition to the application to the axiomatization of quantum thermodynamics,
it is our hope that this work will contribute to the development of
an axiomatic approach to information theory---in particular including
data compression and transmission over noisy channels.

\paragraph{Acknowledgements}

We acknowledge P Perinotti for a useful discussion on the fermionic
quantum theory of Refs.~\cite{Fermionic1,Fermionic2}. This work
is supported by Foundational Questions Institute through the large
grant ``The fundamental principles of information dynamics'' (FQXi-RFP3-1325),
by the National Natural Science Foundation of China through Grants
11450110096 and 11350110207, and by the 1000 Youth Fellowship Program
of China. The research by CMS has been supported by a scholarship
from ``Fondazione Ing.\ Aldo Gini'' and by the Chinese Government
Scholarship. 

\bibliographystyle{eptcs}
\bibliography{bibliographyQPL}

\end{document}